\DeclareMathSymbol{\niceE}{\mathalpha}{AMSb}{"45}
\DeclareMathOperator*{\plim}{plim}
\newtheorem{theorem}{Theorem}[section]
\newtheorem{lemma}[theorem]{Lemma}
\newtheorem{proposition}[theorem]{Proposition}
\newtheorem{corollary}[theorem]{Corollary}
\newenvironment{proof}[1][Proof]{\begin{trivlist}
		\item[\hskip \labelsep {\bfseries #1}]}{\end{trivlist}}
\newenvironment{definition}[1][Definition]{\begin{trivlist}
		\item[\hskip \labelsep {\bfseries #1}]}{\end{trivlist}}
\newenvironment{remark}[1][Remark]{\begin{trivlist}
		\item[\hskip \labelsep {\bfseries #1}]}{\end{trivlist}}
\newenvironment{assumption}[1][Assumption]{\begin{trivlist}
		\item[\hskip \labelsep {\bfseries #1}]}{\end{trivlist}}
\definecolor{darkblue}{rgb}{0.0, 0.0, 0.55}
\definecolor{denim}{rgb}{0.08, 0.38, 0.74}
\definecolor{darkmidnightblue}{rgb}{0.0, 0.2, 0.4}
\begin{document}

\title{A Synthetic Instrumental Variable Method: Using the Dual Tendency Condition for Coplanar Instruments\thanks{The authors thank Arthur Lewbel, Richard Butler, Jeff Wooldridge,  Guido Imbens, and  Xinrui Catherine Yu for valuable comments on the paper. }
}
\author{Ratbek Dzhumashev\thanks{Department of Economics, Monash University,
       \texttt{Ratbek.Dzhumashev@monash.edu}}  and Ainura Tursunalieva\thanks{
  Data61, CSIRO
                  \texttt{ainura.tursunalieva@data61.csiro.au}}
         }
\maketitle  

\thispagestyle{empty}

 \renewcommand{\MakeUppercase}[1]{\color{black}\textrm{#1}}
\pagenumbering{roman} 

\pagenumbering{arabic}

\singlespacing
\begin{abstract}

Traditional instrumental variable (IV) methods often struggle with weak or invalid instruments and rely heavily on external data. We introduce a Synthetic Instrumental Variable (SIV) approach that constructs valid instruments using only existing data. Our method leverages a data-driven dual tendency (DT) condition to identify valid instruments without requiring external variables. SIV is robust to heteroscedasticity and can determine the true sign of the correlation between endogenous regressors and errors--an assumption typically imposed in empirical work. Through simulations and real-world applications, we show that SIV improves causal inference by mitigating common IV limitations and reducing dependence on scarce instruments. This approach has broad implications for economics, epidemiology, and policy evaluation.
\end{abstract}

\textbf{Key words}: \textit{IV, OLS, endogeneity, generated instruments, synthetic instruments, causal inference}

\textbf{JEL Code: C13, C18} 

\onehalfspacing

\section{Introduction}

%
Endogeneity persistently undermines causal inference across economics and the social sciences. When explanatory variables correlate with unobserved factors in the error term, standard regression estimates become biased and inconsistent. This leads to misguided policy conclusions, distorted theoretical interpretations, and unreliable empirical insights. Addressing endogeneity is therefore fundamental to credible research and sound policy design.

Instrumental variable (IV) methods have long been the cornerstone of strategies to recover causal effects in the presence of endogeneity. The logic is compelling: find a variable---the instrument---that affects the endogenous regressor but has no direct effect on the outcome except through that regressor. Yet despite substantial methodological advances \citep{imbens24}, implementing IV methods in practice faces three persistent obstacles.

\textit{First, finding valid instruments is difficult.} Few variables simultaneously satisfy the relevance condition (strong correlation with the endogenous regressor) and the exogeneity condition (no correlation with the error term). Researchers often struggle to justify instrument validity, and debates about instrument quality pervade empirical work \citep{angrist, haus}.

\textit{Second, weak instruments create severe problems.} Even when plausible instruments exist, they frequently exhibit only modest correlation with endogenous variables. Weak instruments produce IV estimates with large finite-sample biases---sometimes exceeding OLS biases---and unreliable inference \citep{CHERNOZHUKOV200868, stock}.

\textit{Third, multiple instruments introduce complications.} Using several instruments simultaneously often exacerbates finite-sample distortions and reduces inference reliability, particularly when some instruments are weak \citep{Bound1995, stock, wool}.

These challenges motivate a fundamental question: \textit{Can we construct valid instruments directly from available data, without relying on external variables?}

\subsection{Our Approach: The Synthetic Instrumental Variable (SIV) Method}

We propose the Synthetic Instrumental Variable (SIV) method, which constructs valid instruments from observed data alone, eliminating dependence on external variables. Our approach builds on a simple geometric insight about linear regression that leads to a powerful identification strategy.

Consider the structural equation $\mathbf{y} = \beta\mathbf{x} + \mathbf{u}$.
The three vectors---the outcome ($\mathbf{y}$), the endogenous regressor ($\mathbf{x}$),
and the structural error ($\mathbf{u}$)—lie in the same two-dimensional plane
$\mathcal{W}$ spanned by $\mathbf{y}$ and $\mathbf{x}$. We write 
$\mathcal W:=\operatorname{span}\{\mathbf x,\mathbf y\}$
This \emph{coplanarity} property means that any valid instrument, after projecting
onto this plane, takes the form a linear combination of vectors within $\mathcal{W}$.

Specifically, we show that any valid instrument $\mathbf{z}_0$ can be represented in the form
\[
\mathbf{z}_0 = \mathbf{x} + k\delta_0 \mathbf{r},
\]
where $\mathbf{r}$ is a vector in $\mathcal{W}$ orthogonal to $\mathbf{x}$,
$\delta_0$ is a scalar parameter, and $k \in \{-1,+1\}$ captures the direction of endogeneity:
\[
k = \begin{cases}
-1 & \text{if } \operatorname{cov}(\mathbf{x},\mathbf{u})>0, \\
+1 & \text{if } \operatorname{cov}(\mathbf{x},\mathbf{u})<0.
\end{cases}
\]

This representation reveals that the entire family of potential instruments lies
within the observable plane $\mathcal{W}$.
The challenge reduces to identifying which specific value of $\delta_0$ yields a valid instrument. Figure \ref{fig4} on p. \pageref{pp} illustrates this geometric structure.

\paragraph{The dual tendency condition: linking theory to data.}
To identify the valid instrument $\mathbf{z}_0$, we introduce the \emph{dual tendency} (DT)
condition—a testable criterion that connects the unobservable exogeneity requirement
to observable data characteristics.

The core intuition is straightforward. Think of searching along a one-dimensional path in the regression plane by varying $\delta$. Each value of $\delta$ produces a candidate instrument $\mathbf s(\delta) = \mathbf x + k\delta\mathbf r$. We seek the point where two conditions align---like finding where two independent tests both confirm ``this is the right instrument.
A valid instrument must satisfy two properties simultaneously:
\begin{enumerate}
\item \textbf{Exogeneity:} The instrument is uncorrelated with the structural error,
      $\mathbb{E}[\mathbf{u}\mid \mathbf{z}_0]={0}$.
\item \textbf{First-stage homoscedasticity:} When the first-stage error term is homoscedastic,
      the valid instrument induces a specific moment restriction, $\mathbf{M}(\delta_0)={0}$,
      where $\mathbf{M}(\delta)$ is a function of observable residuals.
\end{enumerate}
The key insight: these two conditions hold together \textit{only at the true instrument}. By searching over candidate instruments for different values of $\delta$, we identify $\delta_0$ as the value satisfying the testable moment condition $\mathbf M(\delta_0) = 0$. At this point, the unobservable exogeneity condition also holds, yielding a valid synthetic instrument:
\[
\mathbf s^\star = \mathbf x + k\delta_0 \mathbf r.
\]

Critically, the method also determines the sign of $\operatorname{cov}(\mathbf x,\mathbf  u)$ from the data.\footnote{In some settings, the sign of $\operatorname{corr}(\mathbf x, \mathbf u)$ may be known \textit{a priori} \citep{DiTraglia2021, moon}. However, since $\mathbf u$ is unobservable, this assumption is rarely verifiable. Our method infers the sign directly from observable moment conditions, increasing robustness and eliminating potentially incorrect \textit{a priori} assumptions.} This feature increases robustness and eliminates the need for \textit{a priori} sign assumptions that may be incorrect or difficult to justify in empirical applications.

\paragraph{Extension to heteroscedastic errors.}
Real-world data often exhibit heteroscedasticity in first-stage errors, particularly in cross-sectional and panel settings. We extend the DT condition to accommodate this realistic feature while preserving the core identification logic.

The robust DT condition exploits a key insight:
when the candidate instrument deviates from the true instrument, heteroscedasticity
arises from \emph{two} sources: (1) intrinsic heteroscedasticity associated with the
true instrument, and (2) additional variation induced by instrument misalignment.
When the candidate instrument equals the true instrument, only the intrinsic
heteroscedasticity remains, minimizing the discrepancy between OLS and FGLS variances.

We formalize this by defining a distance criterion $\widehat{D}(\delta)$ that measures
the squared difference between the conditional variances of OLS and FGLS residuals.
The robust SIV is identified by minimizing this distance:
\[
\delta_0 = \arg\min_{\delta \in (0,\bar{\delta})} \widehat{D}(\delta),
\quad\text{yielding}\quad
\mathbf{s}^\star = \mathbf{x} + k\delta_0\mathbf{r}.
\]
This extension maintains computational tractability while preserving instrument validity under heteroscedasticity. We provide both parametric (assuming a specific variance function) and nonparametric (distribution-free) implementations, giving practitioners flexibility based on their data characteristics.

\bigskip
\noindent\textbf{Advantages of the SIV approach.}
Building on this foundation, the SIV framework offers several advantages over conventional IV methods:
(i)  There is no reliance on external instruments as instruments are generated directly from the observed data, eliminating the often-fruitless search for valid external variables..
(ii) The coplanarity relationship ensures that synthetic instruments are strongly correlated with endogenous regressors, avoiding weak instrument problems.
(iii) The method yields a single optimal instrument for each endogenous variable, thus the approach avoids multiple-instrument bias and overidentification issues.
 (iv) The sign of $\operatorname{cov}(\mathbf{x},\mathbf{u})$ is determined from data rather than imposed \textit{a priori}, reducing specification risk.
 (v) 
The robust DT condition accommodates heteroscedastic disturbances without sacrificing identification power.
Collectively, these properties make the SIV method a powerful
data-driven alternative for addressing endogeneity in applied econometrics.

\bigskip
\noindent\textbf{Contributions to the literature.}

Our contributions to the econometric literature are threefold. 

\textit{First}, we provide a novel geometric characterization of valid instruments in the regression plane, demonstrating that instrument construction need not rely on external variables. This geometric perspective reveals fundamental structure that has been implicit but unexploited in existing approaches.

\textit{Second}, we develop the dual tendency condition---a testable criterion linking unobservable exogeneity to observable moment restrictions. This bridges the gap between theoretical requirements for valid instruments and practical data-driven identification.

\textit{Third}, we demonstrate the method's effectiveness through rigorous simulations and diverse empirical applications spanning labor economics, economic history, and policy evaluation. These applications show that SIV produces reliable causal estimates across varied settings where traditional IV methods face challenges.

\bigskip
\noindent\textbf{Relation to existing approaches.}

Several methodological approaches address endogeneity without external instruments, but each imposes specific restrictions that limit applicability. We briefly position SIV relative to these alternatives.

\emph{Moment-based approaches.}
Some methods exploit higher-order moments \citep{lewbel97, erickson2002two} or
heteroscedastic covariance restrictions \citep{Rigobon2003, lewbel, KLEIN2010154}.
These approaches require specific distributional properties (e.g., non-normality)
or covariance structures (e.g., conditional heteroscedasticity) that may not hold
in all applications. The SIV method imposes no such restrictions beyond standard linear model assumptions.

\emph{Latent variable approaches.}
The Latent Instrumental Variable (LIV) framework \citep{ebbes2005solving} decomposes
endogenous regressors into exogenous and endogenous components using latent
discrete variables estimated by maximum likelihood.
This approach requires parametric assumptions about the latent structure and
typically demands large sample sizes for reliable estimation. The SIV method avoids latent structure modeling entirely.

\emph{Design-based synthetic instruments.}
Recent work constructs synthetic instruments from structural constraints:
eigenvector spatial filters \citep{gallo}, sparsity-driven synthetic exposures
\citep{tang2024}, and synthetic controls in IV-DiD settings \citep{vives2023synthetic, abadie2021using}.
While innovative, these methods require additional structure—spatial relationships,
sparsity patterns, or panel data—beyond the basic regression framework. The SIV method operates within the minimal framework of linear regression with endogeneity.

\emph{Copula-based approaches.}
Some studies pursue instrument-free estimation through Gaussian copula dependencies
\citep{haschka2024endogeneity, parkg}.
However, identification requires non-normality and parametric copula specifications
that may be difficult to justify \citep{hashchka}.  The SIV method requires no distributional assumptions beyond those standard in linear IV regression.

\emph{Bartik instruments.}
Conceptually, the SIV method relates to design-based IV strategies (e.g., shift-share
or Bartik instruments; \citealp{bartik1991, blanchard1992, goldsmith}), which construct
instruments from known functional forms combining exogenous shocks with predetermined
variables.
However, unlike these approaches, the SIV framework requires no external shocks
or auxiliary data—the "shock" is implicit in the geometric structure of the regression plane.

In contrast to these approaches, the SIV method imposes no restrictions beyond standard linear model assumptions. 
 The method works within the minimal framework of linear regression with endogeneity, using only the observed relationship between $\mathbf{x}$ and $\mathbf{y}$.

\subsection{Outline of the paper}

The remainder of the paper proceeds as follows.
Section~2 develops the geometric foundation, establishing the properties of coplanar
instruments and deriving the representation $\mathbf{z}_0 = \mathbf{x} + k\delta_0\mathbf{r}$.
Section~3 introduces the dual tendency condition for the homoscedastic case and
presents the identification theorem.
Subsection~\ref{s3.5} extends the method to heteroscedastic errors.
Section~4 demonstrates the method's performance through Monte Carlo simulations
and applies it to three empirical examples: labor supply (Mroz 1987), literacy
and religion (Becker and Woessmann 2009), and retirement savings (Abadie 2003).
Section~5 concludes and discusses extensions to 
nonlinear models, and panel data.

\section{The properties of coplanar IVs}
%
This section develops the geometric foundation of the SIV method. We begin with the intuitive observation that in any regression with endogeneity, the outcome, endogenous regressor, and error term lie in a common two-dimensional plane. This \textit{coplanarity property} implies that valid instruments need not come from outside this plane---they can be constructed from vectors within it. We formalize this insight and derive a parametric representation for all valid instruments, establishing the foundation for our identification strategy in Section \ref{sec:method}.

\subsection{Setup and Notation}

Consider the classical endogeneity problem in a simple structural model:
\begin{eqnarray}\label{e1}
\mathbf {y}= \beta {\mathbf x}+\mathbf u, \\\label{e12}
\mathbf x={\gamma }\mathbf z +\mathbf e
\end{eqnarray}
where $\mathbf y \in \mathbb{R}^n$ is the outcome variable, $\mathbf x \in \mathbb{R}^n$ is the endogenous regressor, $\mathbf u \in \mathbb{R}^n$ is the structural error satisfying $\mathbb{E}(\mathbf u \mid \mathbf x) \neq 0$, and $\mathbf z \in \mathbb{R}^n$ is an instrumental variable.

Throughout, we work in a separable Hilbert space $\mathcal{H}$, where the inner product of two vectors $\mathbf v$ and $\mathbf w$ is denoted $\langle \mathbf v \cdot \mathbf w\rangle$.\footnote{A separable Hilbert space is a complete vector space with an inner product that induces a distance metric and has a countable dense subset. This provides a rigorous mathematical framework for our geometric arguments while generalizing finite-dimensional Euclidean space. For readers unfamiliar with this abstraction, it suffices to think of $\mathbb{R}^n$ with the standard dot product.} All vectors in \eqref{e1}--\eqref{e12} are $n \times 1$ and belong to $\mathcal{H}$.

\paragraph{Handling exogenous controls.} When additional exogenous or predetermined regressors are present, we can reduce the model to the form \eqref{e1}--\eqref{e12} through orthogonal projection. Specifically, let $\mathbf V \in \mathcal{H}^{n \times k}$ denote a matrix of exogenous regressors (including a constant term), and let $\mathbf P_V$ be the orthogonal projection matrix onto $\text{span}(\mathbf V)$. Define the residual vectors:
\[
\mathbf y = (\mathbf I - \mathbf P_V)\tilde{\mathbf y}, \quad \mathbf x = (\mathbf I - \mathbf P_V)\tilde{\mathbf x}, \quad \mathbf z = (\mathbf I - \mathbf P_V)\tilde{\mathbf z},
\]
where $\tilde{\mathbf y}$, $\tilde{\mathbf x}$, and $\tilde{\mathbf z}$ denote the original (unprojected) variables and $\mathbf I$ is the identity matrix. By construction, $\mathbf y$, $\mathbf x$, and $\mathbf z$ are orthogonal to $\text{span}(\mathbf V)$, effectively partialling out the exogenous regressors. We proceed by working with these residual vectors, noting that all results apply to the general case with controls.

\subsection{Standard IV Conditions}

A valid instrumental variable $\mathbf z$ must satisfy two key conditions:

\begin{enumerate}
\item \textbf{Exogeneity}: The instrument is asymptotically uncorrelated with the structural error,
\(
\mathbb{E}(\mathbf u \mid \mathbf z) = 0.
\)
This ensures that $\mathbf z$ is not correlated with unobserved factors influencing the outcome.

\item \textbf{Relevance}: The instrument is sufficiently strongly correlated with the endogenous regressor,
\(
\text{cov}(\mathbf x,\mathbf  z) \neq 0, \quad \text{or equivalently,} \quad \mathbb{E}(\mathbf x \mid \mathbf z) \neq 0.
\)
This ensures that $\mathbf z$ can effectively predict variation in $\mathbf x$. As a rule of thumb, an instrument is considered weak if the first-stage $F$-statistic in \eqref{e12} is less than 10.
\end{enumerate}
Our contribution is to show that these conditions can be satisfied by instruments constructed entirely from the observable vectors $\mathbf x$ and $\mathbf y$, without requiring external data.
 \begin{figure}[ht!]
\centering
\includegraphics[width=0.6\linewidth]{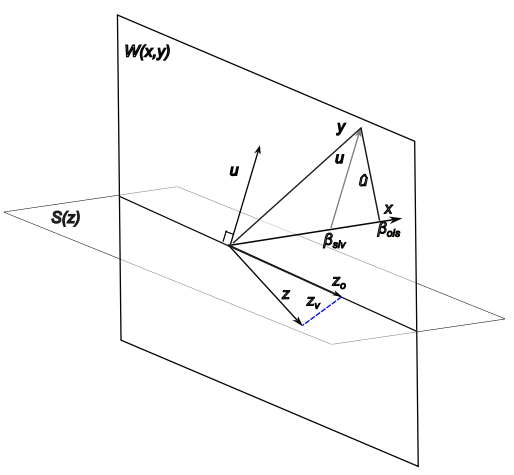}
\caption{\small {\it Geometric representation of instrumental variable (IV) and regression planes.} The figure illustrates the relationship between the outcome variable $\mathbf y$, endogenous regressor $\mathbf x$, error term $\mathbf u$, and instrumental variable $\mathbf z$ in a three-dimensional space. $\mathcal{W}({\mathbf x,\mathbf y})$ represents the plane spanned by $\mathbf x$  and $\mathbf y$, while $\mathcal{S}({\mathbf z})$ is orthogonal to $\mathbf u$.  Vector $\mathbf z_0$ is the component of $\mathbf z$ that is coplanar with $\mathbf x$ and $\mathbf y$ .}
\label{fig2}
\end{figure}

\subsection{The Coplanarity Property}
The key geometric insight underlying our approach is straightforward: the vectors $\mathbf y$, $\mathbf x$, and $\mathbf u$ in equation \eqref{e12} all lie in the same two-dimensional subspace.
\begin{definition}[Coplanarity]
Vectors are \textit{coplanar} if no more than two linearly independent vectors exist in the set. Equivalently, all vectors in the set lie in a common two-dimensional subspace (e.g., in some plane through the origin in $R^n$).
\end{definition}
From equation \eqref{e1}, we have $\mathbf u =\mathbf  y - \beta \mathbf x$, which immediately implies that $\mathbf u$ is a linear combination of $\mathbf y$ and $\mathbf x$. Therefore, $\mathbf y$, $\mathbf x$, and $\mathbf u$ are coplanar by construction---they all lie in the closed linear span of $\mathbf y$ and $\mathbf x$:
 \begin{equation}\label{ne1}\mathcal{W} = \operatorname{span}{( \mathbf x, \mathbf y)} = \left\{\alpha {\mathbf x} + \beta{\mathbf y }\mid \alpha, \beta \in \mathbb{R}.\right\}\end{equation} 

Figure \ref{fig2} illustrates this geometric structure.\footnote{ See  \citet[pp.54-56]{davidson} for a discussion of the geometry of OLS, and a geometric explanation of the IV estimation in \citet{butler}.} The plane $\mathcal{W}$ represents all possible linear combinations of the observable vectors $\mathbf x$ and $\mathbf y$. The structural error $\mathbf u$ must lie within this plane, even though it is unobservable.
\subsection{Projecting Instruments onto the Observable Plane}

While the structural error $\mathbf u$ must lie in $\mathcal{W}$, a general instrumental variable $\mathbf z$ need not. Traditional instruments often come from outside the span of $\mathbf x$ and $\mathbf y$, reflecting external variation or policy shocks. However, we show that for identification purposes, only the component of $\mathbf z$ lying within $\mathcal{W}$ matters.

Let $\mathcal{S}$ denote the space of all vectors satisfying $\mathbb{E}(\mathbf u \mid \mathbf z) = 0$. Any vector in the orthogonal complement of $\mathcal{W}$,
\[
\mathcal{W}^\perp = \{\mathbf z \in \mathcal{H} \mid \langle \mathbf z \mid \mathbf w \rangle = 0 \text{ for all } \mathbf w \in \mathcal{W}\},
\]
automatically satisfies the exogeneity condition, since $\mathbf u \in \mathcal{W}$ implies $\mathbb{E}(\mathbf u \mid \mathbf z) = 0$ for all $\mathbf z \in \mathcal{W}^\perp \equiv \mathcal{S}$.

The following lemma establishes that we can restrict attention to the projection of instruments onto $\mathcal{W}$ without loss of generality.

\begin{lemma}[Exogeneity Preserved Under Projection]\label{r1}
Let $\mathcal W:=\mathcal W(\mathbf x,\mathbf y)=\operatorname{span}\{\mathbf x,\mathbf y\}$ and
let $\mathbf z_0:=P_{\mathcal W}\mathbf z$ be the orthogonal projection of a random vector
$\mathbf z\in \mathcal{H}$ onto $\mathcal W$. Suppose $\mathbf u\in L^1$ and
$\mathbf u\in\mathcal W$.
If $\mathbb E(\mathbf u\mid \mathbf z)=\mathbf 0$ a.s., then
$\mathbb E(\mathbf u\mid \mathbf z_0)=\mathbf 0$ a.s.
\end{lemma}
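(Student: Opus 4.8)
The plan is to reduce the claim to a single application of the law of iterated expectations, exploiting the fact that $\mathbf z_0$ carries no more information than $\mathbf z$. Treating $\mathbf x,\mathbf y$ (hence $\mathcal W$ and the orthogonal projector $P_{\mathcal W}$) as fixed, $P_{\mathcal W}$ is a bounded, and therefore Borel measurable, linear operator on $\mathcal H$. Consequently $\mathbf z_0=P_{\mathcal W}\mathbf z$ is a deterministic measurable function of $\mathbf z$, so the generated $\sigma$-algebras are nested, $\sigma(\mathbf z_0)\subseteq\sigma(\mathbf z)$; in words, conditioning on $\mathbf z_0$ is a coarsening of conditioning on $\mathbf z$. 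This information-shrinkage observation is the entire content of the lemma.

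First I would record that, since $\mathbf u\in L^1$, both $\mathbb E(\mathbf u\mid\mathbf z)$ and $\mathbb E(\mathbf u\mid\mathbf z_0)$ are well defined (coordinatewise in $\mathbb R^n$, or as a Bochner conditional expectation in $\mathcal H$). I would then invoke the tower property for the nested pair $\sigma(\mathbf z_0)\subseteq\sigma(\mathbf z)$, which yields $\mathbb E(\mathbf u\mid\mathbf z_0)=\mathbb E\big(\mathbb E(\mathbf u\mid\mathbf z)\mid\mathbf z_0\big)$. Substituting the hypothesis $\mathbb E(\mathbf u\mid\mathbf z)=\mathbf 0$ a.s. collapses the inner expectation, giving $\mathbb E(\mathbf u\mid\mathbf z_0)=\mathbb E(\mathbf 0\mid\mathbf z_0)=\mathbf 0$ a.s., which is the assertion. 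Note the direction is essential: smoothing a finer conditional expectation down to a coarser one preserves the zero, whereas the reverse implication would be false.

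The step I expect to be the real obstacle is not the algebra but the measurability bookkeeping behind the nesting $\sigma(\mathbf z_0)\subseteq\sigma(\mathbf z)$. This is clean only when $P_{\mathcal W}$ is deterministic; if one instead regards $\mathcal W=\operatorname{span}\{\mathbf x,\mathbf y\}$ as random through $\mathbf x,\mathbf y$, then $\mathbf z_0$ is a function of $(\mathbf x,\mathbf y,\mathbf z)$ rather than of $\mathbf z$ alone, and I would first have to condition on $\mathbf x,\mathbf y$ (or argue conditionally on the sample) before the same tower-property step applies. I would also flag, for honesty, that the geometric hypothesis $\mathbf u\in\mathcal W$ plays no role in this particular argument—the conclusion holds for any integrable $\mathbf u$—and is carried along only for consistency with the coplanarity structure of the preceding subsection; the genuinely load-bearing assumptions are $\mathbf u\in L^1$ and the measurability of $\mathbf z_0$ with respect to the conditioning information of $\mathbf z$.
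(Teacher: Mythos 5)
Your proof is correct and follows essentially the same route as the paper's own argument: measurability of the linear projection $P_{\mathcal W}$ gives the nesting $\sigma(\mathbf z_0)\subseteq\sigma(\mathbf z)$, and the tower property then collapses $\mathbb E(\mathbf u\mid\mathbf z_0)=\mathbb E\bigl(\mathbb E(\mathbf u\mid\mathbf z)\mid\mathbf z_0\bigr)=\mathbf 0$ a.s. Your added observations---that $\mathbf u\in\mathcal W$ is not actually used, and that treating $\mathcal W$ as random would require first conditioning on $(\mathbf x,\mathbf y)$---are accurate refinements beyond what the paper states, but the core argument is identical.
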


\begin{proof} See Appendix \ref{ar1}.
\end{proof}
\textbf{Intuition.} Lemma \ref{r1} says that if $\mathbf z$ is a valid instrument, then its projection $\mathbf z_0$ onto the observable plane $\mathcal{W}$ is also a valid instrument. This is because exogeneity depends only on the component of $\mathbf z$ that has the potential to correlate with vectors in $\mathcal{W}$---namely, the component $\mathbf z_0$ within $\mathcal{W}$ itself. The component of $\mathbf z$ orthogonal to $\mathcal{W}$ is by definition uncorrelated with everything in the plane and contributes nothing to identification.

In light of Lemma \ref{r1}, we re-formulate the model \eqref{e1}-\eqref{e12} as follows:
\begin{equation}\label{e2}
{\mathbf y}= \beta {\mathbf x}+\mathbf u, 
\end{equation}
\begin{equation}\label{e2a}
{\mathbf x}={\gamma_0 {\mathbf z}_0} +\mathbf e_0,
\end{equation}
where $\mathbf z_0 \in \mathcal{W}(\mathbf x,\mathbf y) \cap \mathcal{S}(\mathbf z)$ is the coplanar instrument satisfying $\mathbb{E}[\mathbf z_0 \mid \mathbf u] = 0$.

\textbf{Key implication.} Since all vectors required for IV estimation can be contained within the observable plane $\mathcal{W}(\mathbf x,\mathbf y)$, we limit our focus to this plane when synthesizing instruments. This dramatically simplifies the search problem: instead of seeking an instrument in the high-dimensional space $\mathcal{H}$, we need only search within the two-dimensional plane $\mathcal{W}$.

\subsection{Parametric Representation of Coplanar Instruments}
Having established that valid instruments can be found within $\mathcal{W}$, we now derive their parametric structure. The next lemma shows that any coplanar instrument can be written as a linear combination of two basis vectors: the endogenous regressor $\mathbf x$ and a direction vector $\mathbf r$ orthogonal to $\mathbf x$.

\begin{lemma}[Linear Representation]\label{t2}
Let $\mathcal W(\mathbf x,\mathbf y)=\operatorname{span}\{\mathbf x,\mathbf y\}$ and suppose 
$\mathbf r\in \mathcal W(\mathbf x,\mathbf y)$ is linearly independent of $\mathbf x$. 
Let $\mathbf z_0$ be a valid instrument satisfying 
$\mathbf z_0\in \mathcal W(\mathbf x,\mathbf y)$ and $\,E(\mathbf u\mid \mathbf z_0)=0$.
Then $\mathbf z_0$ can be written as a linear combination of $\mathbf x$ and $\mathbf r$:
\[
\mathbf z_0 \;=\; \zeta\,\mathbf x \;+\; \omega\,\mathbf r
\qquad\text{for some } \zeta,\omega\in\mathbb R.
\]
\end{lemma}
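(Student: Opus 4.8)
The plan is to recognize that Lemma~\ref{t2} is essentially a statement about bases of a two-dimensional vector space, so the proof is a short linear-algebra argument rather than anything requiring the probabilistic structure. The claim $\mathbf z_0 = \zeta\,\mathbf x + \omega\,\mathbf r$ does \emph{not} depend on the exogeneity condition $E(\mathbf u\mid\mathbf z_0)=0$ at all; it follows purely from the membership $\mathbf z_0\in\mathcal W(\mathbf x,\mathbf y)$ together with the fact that $\{\mathbf x,\mathbf r\}$ forms a basis of $\mathcal W$. The exogeneity hypothesis is carried along only because it is part of what ``valid instrument'' means in the paper's notation.

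First I would verify that $\mathcal W(\mathbf x,\mathbf y)$ is genuinely two-dimensional, i.e.\ that $\mathbf x$ and $\mathbf y$ are linearly independent. (Implicitly this is assumed throughout: if $\mathbf y$ were a scalar multiple of $\mathbf x$, then $\mathbf u=\mathbf 0$ and the endogeneity problem would be vacuous, so I would note this as a standing nondegeneracy assumption.) Next I would establish that $\{\mathbf x,\mathbf r\}$ is itself a basis of $\mathcal W$. By hypothesis $\mathbf r\in\mathcal W$ and $\mathbf r$ is linearly independent of $\mathbf x$; two linearly independent vectors in a two-dimensional space span it, hence $\operatorname{span}\{\mathbf x,\mathbf r\}=\mathcal W$. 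This is the one substantive step, and it is the natural place for the main obstacle to hide: one must be careful that ``linearly independent of $\mathbf x$'' plus ``lies in $\mathcal W$'' really does force $\mathbf r$ to be a second independent direction, which uses $\dim\mathcal W=2$ in an essential way.

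With $\{\mathbf x,\mathbf r\}$ identified as a basis, the conclusion is immediate: since $\mathbf z_0\in\mathcal W=\operatorname{span}\{\mathbf x,\mathbf r\}$, there exist scalars $\zeta,\omega\in\mathbb R$ with $\mathbf z_0=\zeta\,\mathbf x+\omega\,\mathbf r$, and these coefficients are unique because $\{\mathbf x,\mathbf r\}$ is a basis. I would state the representation together with the uniqueness of $(\zeta,\omega)$, even though uniqueness is not strictly asserted in the lemma, because it is free and clarifies the geometry.

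I do not expect any genuine difficulty here; the entire content is that a vector in a plane is a linear combination of any two spanning directions of that plane, and the paper has chosen $\mathbf x$ together with an orthogonal companion $\mathbf r$ as the convenient basis. The only thing to watch is the nondegeneracy of $\mathcal W$, so in writing up I would flag the standing assumption that $\mathbf x$ and $\mathbf y$ (equivalently $\mathbf x$ and $\mathbf u$) are linearly independent, which is exactly the regime in which the coplanarity picture of Figure~\ref{fig2} is meaningful.
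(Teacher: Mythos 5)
Your proposal is correct and follows essentially the same argument as the paper's proof: $\{\mathbf x,\mathbf r\}$ is a basis of the two-dimensional subspace $\mathcal W(\mathbf x,\mathbf y)$, so any $\mathbf z_0\in\mathcal W$ has a (unique) representation $\zeta\,\mathbf x+\omega\,\mathbf r$. Your added remarks---that the exogeneity hypothesis plays no role and that nondegeneracy ($\mathbf x$, $\mathbf y$ linearly independent) should be flagged---are sensible refinements of the same argument, not a different route.
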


\begin{proof}
By assumption $\mathbf x,\mathbf r\in\mathcal W(\mathbf x,\mathbf y)$ and 
$\mathbf r$ is linearly independent of $\mathbf x$. Hence $\{\mathbf x,\mathbf r\}$ is a basis of the
two--dimensional subspace $\mathcal W(\mathbf x,\mathbf y)$. Since $\mathbf z_0\in\mathcal W(\mathbf x,\mathbf y)$,
the standard linear algebra result on bases implies that there exist unique scalars
$\zeta,\omega\in\mathbb R$ such that
\(
\mathbf z_0 \;=\; \zeta\,\mathbf x \;+\; \omega\,\mathbf r.
\)$\square$
\end{proof}
Lemma \ref{t2} provides the foundation for our parametric approach, but it does not yet pin down a unique representation. To obtain a useful parametrization, we impose convenient geometric restrictions on the direction vector $\mathbf r$ and the coefficient scaling.

\subsection{Normalized Representation with Sign Information}

We now impose additional structure to obtain a parsimonious and economically interpretable representation. The key is to choose the direction vector $\mathbf r$ to be orthogonal to $\mathbf x$, and to normalize the coefficient on $\mathbf x$ to unity.
\textbf{Constructing the direction vector $\mathbf r$.} Define $\mathbf r$ as the residual from projecting $\mathbf y$ onto $\mathbf x$:
\[
\mathbf r = (\mathbf I - \mathbf P_x)\mathbf y, \quad \text{where} \quad \mathbf P_x := \mathbf x(\mathbf x'\mathbf x)^{-1}\mathbf x'
\]
is the orthogonal projection matrix onto the span of $\mathbf x$. By construction:
\begin{itemize}
\item $\mathbf r \perp\mathbf  x$ (i.e., $\text{corr}(\mathbf x,\mathbf r) = 0$),
\item $\mathbf r \in \mathcal{W}(\mathbf x,\mathbf y)$ (since $\mathbf r$ is a linear combination of $\mathbf x$ and $\mathbf y$),
\item $\mathbf r$ captures the component of $\mathbf y$ orthogonal to $\mathbf x$.
\end{itemize}

\textbf{Correlation restrictions.} We impose the following restrictions on the correlations among $\mathbf x$, $\mathbf r$, and $\mathbf z_0$:
\begin{align}
&\operatorname{corr}(\mathbf x, \mathbf r) = 0, \label{eq:corr1} \\
&\operatorname{corr}(\mathbf x,\mathbf  z_0) > 0, \label{eq:corr2}\\
&\operatorname{corr}(\mathbf r, \mathbf u) > 0, \label{eq:corr3} \\
&\operatorname{corr}(\mathbf y, \mathbf r) > 0. \label{eq:corr4}
\end{align}

These restrictions pin down a convenient geometry in $\mathcal{W}(\mathbf x,\mathbf y)$ for the subsequent construction and identification of valid synthetic instruments (see Figure \ref{fig4}). Condition \eqref{eq:corr1} is satisfied by construction. Condition \eqref{eq:corr2} restricts attention to instruments positively correlated with the endogenous variable (the sign can be reversed by multiplying through by $-1$ if needed). 
We assume $\operatorname{corr}(\mathbf y,\mathbf r)>0$, and since
$\mathbf y=\beta\mathbf x+\mathbf u$ with $\operatorname{cov}(\mathbf x,\mathbf r)=0$, which implies condition \ref{eq:corr3}:   $\operatorname{cov}(\mathbf u,\mathbf r)>0$.
Condition \eqref{eq:corr4} ensures that both $\mathbf r$ and $\mathbf y$ point in the same general direction within $\mathcal{W}$.
With these restrictions in place, we can now state the main geometric result.

\begin{lemma}[Normalized Representation with Endogeneity Sign]\label{lb2}
Let $\mathcal W(\mathbf x,\mathbf y)=\operatorname{span}\{\mathbf x,\mathbf y\}$ be the plane
spanned by the endogenous regressor $\mathbf x$ and the outcome $\mathbf y$. 
Suppose there exist noncollinear vectors $\mathbf x,\mathbf r,\mathbf z_0 \in \mathcal W(\mathbf x,\mathbf y)$ satisfying conditions \eqref{eq:corr1}--\eqref{eq:corr4}

Then the vector $\mathbf z_0\in\mathcal W(\mathbf x,\mathbf y)$ can be written as a linear combination of
$\mathbf x$ and $\mathbf r$ of the form
\begin{equation}\label{siv}
\mathbf z_0 \;=\; \mathbf x \;+\; k\,\delta\,\mathbf r,
\end{equation}
for some $\delta\in\mathbb R$, where $\mathbf r$ satisfies $\mathbb{E}(\mathbf r'\mathbf x)=0$ and
\[
k \;:=\; -\,\operatorname{sign}\!\big(\operatorname{cov}(\mathbf x,\mathbf u)\big)\in \{-1, +1\}
\]
\end{lemma}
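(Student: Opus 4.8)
The plan is to start from the two--term basis representation guaranteed by Lemma \ref{t2} and then pin down the coefficients using the correlation restrictions \eqref{eq:corr1}--\eqref{eq:corr4} together with the exogeneity of $\mathbf z_0$. Since $\mathbf x$ and $\mathbf r$ are linearly independent and both lie in the two--dimensional plane $\mathcal W(\mathbf x,\mathbf y)$, Lemma \ref{t2} yields unique scalars $\zeta,\omega$ with $\mathbf z_0=\zeta\,\mathbf x+\omega\,\mathbf r$. Because every vector here is a residual after partialling out the controls (including the constant), covariances coincide with inner products up to a positive factor and each correlation shares the sign of the corresponding covariance; I will use this throughout to convert the hypotheses \eqref{eq:corr1}--\eqref{eq:corr4} into sign statements about $\langle\,\cdot\,,\cdot\,\rangle$.

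First I would establish $\zeta>0$, so that the coefficient on $\mathbf x$ can be normalized to unity without disturbing the sign conventions. Taking the inner product of $\mathbf z_0=\zeta\,\mathbf x+\omega\,\mathbf r$ with $\mathbf x$ and using $\langle\mathbf r,\mathbf x\rangle=0$ from \eqref{eq:corr1} gives $\langle\mathbf z_0,\mathbf x\rangle=\zeta\,\|\mathbf x\|^2$, hence $\zeta=\langle\mathbf z_0,\mathbf x\rangle/\|\mathbf x\|^2$. Condition \eqref{eq:corr2} forces $\langle\mathbf z_0,\mathbf x\rangle>0$, so $\zeta>0$. Exploiting that instrument validity (both exogeneity and relevance) is invariant to positive rescaling, I replace $\mathbf z_0$ by $\mathbf z_0/\zeta$ to obtain $\mathbf z_0=\mathbf x+\tilde\omega\,\mathbf r$ with $\tilde\omega=\omega/\zeta$; since $\zeta>0$, condition \eqref{eq:corr2} and the orientation of $\mathbf r$ are preserved.

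It remains to read off the sign of $\tilde\omega$ from the exogeneity of $\mathbf z_0$. Imposing $\operatorname{cov}(\mathbf z_0,\mathbf u)=0$ (which follows from $\mathbb E(\mathbf u\mid\mathbf z_0)=0$) on the normalized representation and again using $\operatorname{cov}(\mathbf x,\mathbf r)=0$ yields
\[
0=\operatorname{cov}(\mathbf x,\mathbf u)+\tilde\omega\,\operatorname{cov}(\mathbf r,\mathbf u),
\qquad\text{so}\qquad
\tilde\omega=-\,\frac{\operatorname{cov}(\mathbf x,\mathbf u)}{\operatorname{cov}(\mathbf r,\mathbf u)}.
\]
By condition \eqref{eq:corr3} the denominator is strictly positive (it equals, up to a positive normalization, $\|\mathbf r\|^2$, as one checks from $\mathbf r=(\mathbf I-\mathbf P_x)\mathbf y$ and \eqref{eq:corr1}), so $\operatorname{sign}(\tilde\omega)=-\operatorname{sign}\!\big(\operatorname{cov}(\mathbf x,\mathbf u)\big)$. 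Setting $k:=-\operatorname{sign}\!\big(\operatorname{cov}(\mathbf x,\mathbf u)\big)\in\{-1,+1\}$ and $\delta:=|\tilde\omega|\ge 0$ gives $\tilde\omega=k\,\delta$, hence $\mathbf z_0=\mathbf x+k\,\delta\,\mathbf r$, as claimed.

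I expect the only delicate point to be the sign bookkeeping rather than any analytic difficulty: I must verify that the normalization by $\zeta$ does not silently flip condition \eqref{eq:corr2}, which is precisely why establishing $\zeta>0$ \emph{before} rescaling is essential. A secondary subtlety is keeping track of the implicit use of the exogeneity of $\mathbf z_0$ --- available because $\mathbf z_0$ is, in the running setup, a valid coplanar instrument --- since none of \eqref{eq:corr1}--\eqref{eq:corr4} states it explicitly, yet the sign of the $\mathbf r$--coefficient cannot be determined without it.
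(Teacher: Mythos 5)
Your proposal is correct and follows essentially the same route as the paper's proof: apply Lemma \ref{t2}, use \eqref{eq:corr1}--\eqref{eq:corr2} to get $\zeta>0$ and normalize, impose $\operatorname{cov}(\mathbf z_0,\mathbf u)=0$ to solve for the $\mathbf r$-coefficient, and read off its sign from $\operatorname{cov}(\mathbf r,\mathbf u)>0$. The only cosmetic difference is that you invoke \eqref{eq:corr3} directly for that positivity (with a correct side-check via $\mathbf r=(\mathbf I-\mathbf P_x)\mathbf u$), whereas the paper re-derives it from \eqref{eq:corr4} through the structural equation; your closing remark that exogeneity of $\mathbf z_0$ is an implicit extra hypothesis is also exactly how the paper uses it.
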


\begin{proof}See Appendix \ref{alb2}.\end{proof}
\textbf{Interpretation.} Lemma \ref{lb2} establishes the key parametric representation underlying the SIV method. Several features deserve emphasis:

\begin{enumerate}
\item [i.]\emph{One-dimensional search.} The entire family of potential coplanar instruments is parameterized by a single scalar $\delta \geq 0$. Instead of searching for an instrument in an infinite-dimensional space, we need only search along a one-dimensional path.

\item[ii.] \emph{Sign encodes endogeneity direction.} The parameter $k \in \{-1, +1\}$ captures the direction of endogeneity. When $\text{cov}(\mathbf x,\mathbf u) > 0$, we have $k = -1$; when $\text{cov}(\mathbf x,\mathbf u) < 0$, we have $k = +1$. This sign will be determined empirically from the data (Subsection \ref{sign}).

\item[iii.] \emph{Geometric interpretation.} The representation $\mathbf z_0 = \mathbf x + k\delta_0\mathbf  r$ says that the valid instrument is obtained by starting at $\mathbf x$ and moving a distance $\delta_0$ in the direction of $\pm \mathbf r$ (the sign depending on the direction of endogeneity). Figure \ref{fig4} illustrates this geometry for both cases.

\item[iv.] \emph{Normalization.} By writing $\mathbf z_0 = \mathbf x + k\delta_0 \mathbf r$ rather than $\mathbf z_0 = \zeta\mathbf x + \omega \mathbf r$, we normalize the coefficient on $\mathbf x$ to unity. This eliminates scale indeterminacy and makes $\delta_0$ interpretable as a "distance" from $\mathbf x$ in the $\mathbf r$ direction.
\end{enumerate}

\begin{figure}[!hb]
	\centering\begin{tabular}{ l   l  }
		\begin{minipage}{.45\textwidth}
			\includegraphics[scale=0.8]{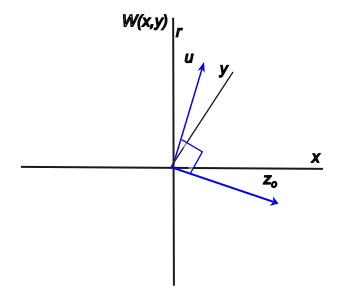}
		\end{minipage}
		&\begin{minipage}{.5\textwidth}
			\includegraphics[scale=0.8]{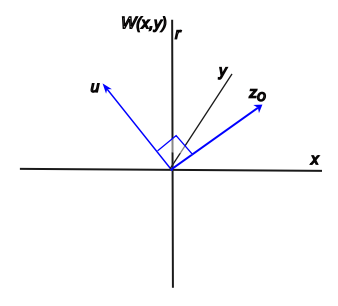}
		\end{minipage}\\
{\text{a. $cor({\mathbf x,u})>0$}}&{\text{b. $cor({\mathbf x,u})<0$}}\\
\end{tabular}\caption{\small {\it Orientation of   a valid SIV ${\mathbf z_0}\in \mathcal{W}({\mathbf x,y})$ relative to $\mathbf y$, $\mathbf x$ given the error term $\mathbf u$.}
Panel (a) shows the case when $cor({\mathbf x,u}) > 0$, and panel (b) shows the case when $cor({\mathbf x,u}) < 0$. %
}\label{fig4}\end{figure}

\label{pp}
\subsection{From Geometry to Identification}

The geometric results in this section reduce the instrument search problem to finding two unknowns:

\begin{enumerate}
\item The \textbf{sign} $k \in \{-1, +1\}$, which encodes the direction of endogeneity.
\item The \textbf{scale} $\delta_0 > 0$, which determines the location of the valid instrument along the search path.
\end{enumerate}

Once these are determined, the synthetic instrument is:
\[
\mathbf s^\star = \mathbf x + k\delta_0\mathbf  r.
\]

The challenge is that both $k$ and $\delta_0$ depend on the unobservable error $\mathbf u$ through the exogeneity condition $\mathbb{E}(\mathbf u \mid \mathbf z_0) = 0$. Section \ref{sec:method} develops the \textit{dual tendency} (DT) condition, which provides testable moment restrictions that identify both $k$ and $\delta_0$ from observable data.

\section{Synthetic Instrumental Variable (SIV) Method}\label{sec:method}

This section develops a \emph{synthetic instrumental variable} (SIV)
$\mathbf{s}^\star\in\mathcal{W}(\mathbf{x},\mathbf{y})\subseteq\mathcal{H}$
satisfying \mbox{$\mathbb{E}(\mathbf{u}\mid\mathbf{s}^\star)=0$.}
The SIV is constructed from the observed pair $(\mathbf{x},\mathbf{y})$ and therefore does
not rely on external instruments.

The construction is parameterized by a scalar $\delta>0$ and a direction
$\mathbf{r}$ that is orthogonal to $\mathbf{x}$ but coplanar with
$(\mathbf{x},\mathbf{y})$.
Identification proceeds by imposing a \emph{dual-tendency} (DT) condition: a set of moment
conditions implied by homoscedasticity, or by a transformed system under
heteroscedasticity.
We first state the assumptions, then derive the DT condition and identification in the
homoscedastic case, and finally introduce a robust version for heteroscedastic errors.

\subsection{Assumptions for the SIV Framework}

\begin{definition}[A1. Coplanarity]
The vectors $\mathbf x$, $\mathbf r$, and $\mathbf s$ belong to
$\mathcal W(\mathbf x,\mathbf y)\subseteq\mathcal H$ and are thus coplanar.
By Lemma~\ref{lb2}, an SIV equals
\[
\mathbf s \;=\; \mathbf x + k\,\delta\,\mathbf r, 
\qquad
\delta > 0,\quad
k := -\,\operatorname{sign}\!\big(\operatorname{cov}(\mathbf x,\mathbf u)\big).
\]
The auxiliary vector $\mathbf r$ satisfies $\mathbb{E}[\mathbf r'\mathbf x]=0$ and is defined as
\[
\mathbf r \;=\; (\mathbf I - \mathbf P_x)\,\mathbf y,
\qquad
\mathbf P_x := \mathbf x(\mathbf x'\mathbf x)^{-1}\mathbf x'
\]
the orthogonal projection matrix onto the span of $\mathbf x$.
The data $\{\mathbf Z_i\}_{i=1}^n$, where $\mathbf Z_i$ collects the $i$th observations on
$(\mathbf y,\mathbf x,\mathbf r)$, are assumed i.i.d.\ (or stationary and ergodic).
\end{definition}

\begin{definition}[A2. Synthetic Instrument]\label{asmp:siv_construction}
The SIV $\mathbf s$ is an $n\times 1$ vector defined by
\[
\mathbf s \;=\; \mathbf x + k\,\delta\,\mathbf r,
\]
with $\delta\in(0,\bar\delta)$, where $\bar\delta$ is a finite upper bound chosen to rule out arbitrarily weak instruments
(e.g. by requiring $\operatorname{corr}(\mathbf s,\mathbf x)\ge c>0$).
The SIV satisfies the exclusion restriction: $\mathbf s$ affects $\mathbf y$ only through
its $\mathbf x$-component, not through $\mathbf r$ \citep{HECKMAN2024105719}.

\end{definition}

\begin{definition}[A3. Full Rank]
The expectations $\mathbb{E}[\mathbf x'\mathbf y]$ and $\mathbb{E}[\mathbf x'\mathbf x]$ exist and are identified
from the data. $\mathbb{E}\big[\mathbf x\,\mathbf x'\big]$ exists and is positive definite, which reduces to $\mathbb{E}\big[x^2\big]>0$ in the univariate case

\end{definition}

\begin{definition}[A4. First-Stage Error Structure]
The first-stage error term $\mathbf e=(e_1,\dots,e_n)'\in \mathbb R^{n}$ is assumed independent across $i$
(and identically distributed when indicated) and may exhibit either homoscedasticity or
heteroscedasticity.

\textbf{Homoscedastic case.} In the single-equation ($p=1$) and single-IV model ($q=1$),  
the conditional second moment for a valid IV $\mathbf z_0$, $\mathbb{E}(\mathbf e\mathbf e' \mid\mathbf z_0) =\mathbf H = \sigma^2 \mathbf I_n$ satisfies
\[
 \mathbb E\!\big[\mathbf e^{\circ 2}\mid \mathbf z_0\big] 
= \sigma^2\,\mathbf 1_n,
\quad\text{i.e.}\quad
 \mathbb E(e_i^2\mid \mathbf z_0)=\sigma^2\quad\forall i,
\]
where $\mathbf e^{\circ 2}$ is the Hadamard (entrywise) square
and $\mathbf 1_n$ is the $n$-vector of ones.

\textbf{Heteroscedastic case.} We allow the conditional variance to depend on observed covariates. For the single-equation model, write
\[
\mathbb E(e_i^2\mid \mathbf Z_i) 
\;=\; H_i(\boldsymbol\theta)
\;=\; h\!\big(\hat\sigma^2 + \mathbf Z_i'\boldsymbol\theta\big),
\]
where $h:\mathbb R\to(0,\infty)$ is twice continuously differentiable and applied elementwise,
$\mathbf Z_i\in\mathbb R^{k}$ contains the $k$ exogenous or predetermined variables
relevant for the variance, and $\boldsymbol\theta\in\mathbb R^{k}$ is a parameter vector.
A feasible estimate replaces ${e}_i^2$ by $\hat e_i^{\,2}$, yielding
\[
\mathbb E(\hat e_i^{\,2}\mid \mathbf Z_i)
\approx H_i(\boldsymbol\theta)
= h\!\big(\hat\sigma^2 + \mathbf Z_i'\boldsymbol\theta\big),
\]
which provides a consistent estimate of the conditional variance function
under standard regularity conditions.
\end{definition}
These assumptions are standard in the IV literature, with the exception that Assumption A2 specifies that instruments are constructed synthetically rather than taken as given external variables.
\subsection{Dual Tendency (DT) Condition for Coplanar IVs}
The dual tendency condition exploits a fundamental link between two properties that hold simultaneously for a valid instrument: exogeneity and first-stage homoscedasticity. We begin by characterizing the first-stage coefficient structure.

\subsubsection{First-Stage Coefficient Decomposition}

When we use a candidate SIV $\mathbf s = \mathbf x + k\delta \mathbf r$ to instrument for $\mathbf x$ in equation \eqref{e1}, the first-stage regression coefficient depends on how closely $\mathbf s$ approximates the true instrument $\mathbf z_0 = \mathbf x + k\delta_0 \mathbf r$.

\begin{lemma}\label{la1}
Under Assumptions A1--A3, let $\mathbf z_0\in\mathcal W(\mathbf x,\mathbf y)$ be a valid IV satisfying
$\mathbb E(\mathbf u\mid \mathbf z_0)=0$, and let the first-stage equation be
\[
\mathbf x \;=\; \gamma\,\mathbf s + \mathbf e,
\]
where $\mathbf s$ is a synthetic instrument (SIV) of the form
$\mathbf s = \mathbf x + k\,\delta\,\mathbf r$. Then the corresponding first-stage
coefficient can be written as
\[
\gamma \;=\; \gamma_0 + g(\mathbf x,\mathbf s,\mathbf z_0),
\]
where
\[
\gamma_0 \;:=\; \frac{\operatorname{cov}(\mathbf x,\mathbf z_0)}{\operatorname{var}(\mathbf z_0)}
\]
is the population OLS coefficient when instrumenting with the true IV $\mathbf z_0$, and
$g(\cdot)$ is a (locally) twice continuously differentiable function capturing the deviation
of $\mathbf s$ from $\mathbf z_0$.
\end{lemma}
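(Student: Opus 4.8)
The plan is to compute the population first-stage coefficient in closed form and simply read off both $\gamma_0$ and the remainder $g$ from the same expression. Because the first stage $\mathbf x = \gamma\,\mathbf s + \mathbf e$ is the population projection of $\mathbf x$ onto the candidate instrument $\mathbf s$, the coefficient is $\gamma = \operatorname{cov}(\mathbf x,\mathbf s)/\operatorname{var}(\mathbf s)$. The entire argument then rests on substituting $\mathbf s = \mathbf x + k\delta\,\mathbf r$ and exploiting the geometry baked into Assumption A1.

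First I would substitute $\mathbf s = \mathbf x + k\delta\,\mathbf r$ and invoke the two structural facts: the orthogonality $\operatorname{cov}(\mathbf x,\mathbf r)=\mathbb E[\mathbf r'\mathbf x]=0$ from A1, and $k^2=1$. The cross term in the numerator vanishes, leaving $\operatorname{cov}(\mathbf x,\mathbf s)=\operatorname{var}(\mathbf x)$, while the denominator becomes $\operatorname{var}(\mathbf s)=\operatorname{var}(\mathbf x)+\delta^2\operatorname{var}(\mathbf r)$. Writing $a:=\operatorname{var}(\mathbf x)$ and $b:=\operatorname{var}(\mathbf r)$, this gives $\gamma = a/(a+\delta^2 b)$. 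Running the identical computation for the true instrument $\mathbf z_0 = \mathbf x + k\delta_0\,\mathbf r$ produces $\gamma_0 = \operatorname{cov}(\mathbf x,\mathbf z_0)/\operatorname{var}(\mathbf z_0)= a/(a+\delta_0^2 b)$, exactly matching the definition in the statement.

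Next I would set $g := \gamma - \gamma_0$ and place the two fractions over a common denominator, obtaining the explicit form $g = ab(\delta_0^2-\delta^2)\big/\big[(a+\delta^2 b)(a+\delta_0^2 b)\big]$. This makes the claim $\gamma = \gamma_0 + g$ true by construction, and it transparently shows that $g$ vanishes precisely when $\delta=\delta_0$, so $g$ genuinely quantifies the deviation of $\mathbf s$ from $\mathbf z_0$ (it depends on the data only through $a$, $b$ and the scale parameters $\delta,\delta_0$, hence is expressible as a function of $\mathbf x,\mathbf s,\mathbf z_0$). Smoothness is then immediate: $g$ is a rational function of $\delta$, so it is real-analytic—and a fortiori $C^2$—wherever its denominator stays positive.

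The only point that requires genuine care, and the closest thing to an obstacle, is justifying that the denominator is bounded away from zero, since this is what underwrites the $C^2$ conclusion rather than mere pointwise existence of $\gamma$. Here I would appeal to Assumption A3 for $a=\operatorname{var}(\mathbf x)>0$, note that $b=\operatorname{var}(\mathbf r)>0$ follows from the noncollinearity of $\mathbf x$ and $\mathbf y$ (so that $\mathbf r=(\mathbf I-\mathbf P_x)\mathbf y\neq\mathbf 0$), and use the restriction $\delta\in(0,\bar\delta)$ to confine the argument to a compact interval. With both variances strictly positive and $\delta$ bounded, $(a+\delta^2 b)(a+\delta_0^2 b)$ cannot approach zero, ruling out any singularity. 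Everything else is a direct algebraic identity.
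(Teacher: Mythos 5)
Your proposal is correct, and it shares the paper's basic skeleton---both proofs make the identity $\gamma=\gamma_0+g$ true by construction, defining $g:=\gamma-\gamma_0$---but the execution is genuinely different. The paper stays abstract: it writes $\mathbf s=\mathbf z_0+\mathbf h$ with $\mathbf h:=\mathbf s-\mathbf z_0$, defines
\[
g(\mathbf x,\mathbf s,\mathbf z_0)
:= \frac{\operatorname{cov}(\mathbf x,\mathbf z_0+\mathbf h)}{\operatorname{var}(\mathbf z_0+\mathbf h)}
 - \frac{\operatorname{cov}(\mathbf x,\mathbf z_0)}{\operatorname{var}(\mathbf z_0)},
\]
and then asserts smoothness by appealing to the map $\mathbf s\mapsto \operatorname{cov}(\mathbf x,\mathbf s)/\operatorname{var}(\mathbf s)$ being a ratio of quadratic forms away from $\operatorname{var}(\mathbf s)=0$, ``under standard regularity conditions.'' You instead exploit the parametric structure $\mathbf s=\mathbf x+k\delta\mathbf r$, $\mathbf z_0=\mathbf x+k\delta_0\mathbf r$ together with $\operatorname{cov}(\mathbf x,\mathbf r)=0$ and $k^2=1$ to get closed forms $\gamma=a/(a+\delta^2 b)$, $\gamma_0=a/(a+\delta_0^2 b)$, and hence the explicit remainder
\[
g=\frac{ab\,(\delta_0^2-\delta^2)}{(a+\delta^2 b)(a+\delta_0^2 b)},
\qquad a:=\operatorname{var}(\mathbf x),\ b:=\operatorname{var}(\mathbf r).
\]
This buys three things the paper's version leaves implicit: the $C^2$ claim becomes trivial (the expression is rational, indeed real-analytic, in $\delta$); the denominator issue is settled cleanly since $(a+\delta^2 b)(a+\delta_0^2 b)\ge a^2>0$ by Assumption A3 (note this bound makes your compactness appeal to $\delta\in(0,\bar\delta)$ unnecessary---the denominator never approaches zero for any $\delta$); and you get, as a free by-product, that $g=0$ exactly when $\delta=\delta_0$, which is the property later results actually lean on. What the paper's abstract route buys in exchange is generality: its argument does not depend on $\mathbf s$ and $\mathbf z_0$ sharing the specific one-parameter form, whereas your computation is tied to the coplanar family---though since the lemma is only ever invoked for that family, this costs nothing here.
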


\begin{proof}
See Appendix \ref{ala1}.
\end{proof}
\textbf{Intuition.} When $\mathbf s = \mathbf z_0$, we have $g(\mathbf x,\mathbf s,\mathbf z_0) = 0$ and $\gamma = \gamma_0$. As $\mathbf s$ deviates from $\mathbf z_0$ (i.e., as $\delta$ moves away from $\delta_0$), the first-stage coefficient changes smoothly. This smoothness will be exploited to establish continuity of the moment function below.
\subsubsection{The Moment Restriction}

Homoscedasticity yields a set of moments characterizing the valid IV.
 Let $\mathbf e = \mathbf x - \gamma \mathbf s$ denote the first-stage residual. When  $p=1$ and $\mathbf s = \mathbf z_0$ is the true instrument, Assumption A4 implies
\[
\mathbb{E}\!\big[\mathbf{e}^{\circ 2}\mid\mathbf{z}_0\big]
=
\sigma^2\,\mathbf{1}_n.
\]
In the single-equation, single-IV case ($p=q=1$), the $i$th moment condition is
\[
m_i(\delta)
=
( e_i^2-\sigma^2 )\,z_{0i},
\]
and the stacked moment vector is
\begin{equation}\label{m:1}
\mathbf{M}(\delta)
=
\mathbb{E}\big[(\mathbf{e}(\delta)^{\circ 2}-\sigma^2\mathbf{1}_n)\, \odot \mathbf{z}_0\big]\;=\;
\mathbb E\big[(\mathbf e(\delta)^2-\sigma^2)\,\mathbf z_0\big],
\end{equation}
where $\odot$ denotes the Hadamard (elementwise) product.

\begin{lemma}[Dual Tendency Condition for Valid Instruments]\label{t0a}
Under Assumptions A1--A3 and the homoscedasticity condition
\[\mathbb
E\!\big(\mathbf e^{\circ 2}\mid \mathbf z_0\big)
= \sigma^2\,\mathbf 1_n,
\]
the orthogonality condition
\[\mathbb
 E(\mathbf u\mid \mathbf z_0)=0
\]
and the  moment condition ("first-stage homoscedasticity")
\[
\mathbf M(\delta_0)=\mathbf 0
\]
hold simultaneously for any valid IV $\mathbf z_0\in \mathcal{W}(\mathbf{x},\mathbf{y})$ defined as $\mathbf z_0 \;=\; \mathbf x \;+\; k\,\delta_0\,\mathbf r$.
\end{lemma}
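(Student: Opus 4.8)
The plan is to treat the two displayed conclusions separately, since the exogeneity statement is essentially definitional while the moment equality carries the analytic content. First I would observe that $\mathbb E(\mathbf u\mid\mathbf z_0)=0$ holds by hypothesis: $\mathbf z_0$ is taken to be a \emph{valid} instrument, and validity is precisely the exogeneity requirement among the standard IV conditions. So nothing needs to be proved for the first display; it is the defining property of $\mathbf z_0$ that we carry into the argument.

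The substance is the moment equality $\mathbf M(\delta_0)=\mathbf 0$. My approach would be to evaluate the moment vector \eqref{m:1} at the true value $\delta_0$ and reduce it to zero componentwise using conditional homoscedasticity. First I would fix the residual at $\delta_0$: by Lemma~\ref{la1}, when the candidate SIV coincides with the true instrument, i.e.\ $\mathbf s=\mathbf z_0$ (equivalently $\delta=\delta_0$), the deviation term vanishes, $g(\mathbf x,\mathbf s,\mathbf z_0)=0$, so $\gamma=\gamma_0$ and the first-stage residual is $\mathbf e=\mathbf e(\delta_0)=\mathbf x-\gamma_0\mathbf z_0$. Writing the $i$th entry as $M_i(\delta_0)=\mathbb E[(e_i^2-\sigma^2)\,z_{0i}]$, I would then apply the law of iterated expectations, conditioning on $\mathbf z_0$. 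Since $z_{0i}$ is measurable with respect to $\mathbf z_0$, it factors out of the inner conditional expectation, giving $M_i(\delta_0)=\mathbb E[z_{0i}\,(\mathbb E[e_i^2\mid\mathbf z_0]-\sigma^2)]$. The homoscedasticity hypothesis $\mathbb E(e_i^2\mid\mathbf z_0)=\sigma^2$ makes the bracket vanish identically, so $M_i(\delta_0)=0$ for every $i$ and hence $\mathbf M(\delta_0)=\mathbf 0$.

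Having established both displays, I would conclude by noting that they hold together at the same point $\delta_0$: exogeneity because $\mathbf z_0$ is valid, and the moment equality as a consequence of homoscedasticity evaluated at that instrument. This is exactly the ``dual tendency'' claim---both tendencies are active simultaneously at the true instrument. The only points requiring care, rather than a genuine obstacle, are the regularity conditions that license the manipulations: Assumption~A3 guarantees the relevant second moments exist so that $\mathbf M(\delta_0)$ is well defined, and the homoscedasticity in Assumption~A4 must be read as a \emph{conditional} statement given $\mathbf z_0$, which is precisely what the iterated-expectations step exploits. I would also flag that the lemma asserts only the forward direction---that both conditions hold at $\delta_0$---and does not yet claim that they hold \emph{only} there; that converse is the identification content reserved for the subsequent theorem.
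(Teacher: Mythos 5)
Your proposal is correct and follows essentially the same route as the paper's proof: exogeneity is taken as the defining hypothesis of a valid $\mathbf z_0$, and the moment equality $\mathbf M(\delta_0)=\mathbf 0$ is obtained componentwise by the law of iterated expectations, factoring out the $\sigma(\mathbf z_0)$-measurable $z_{0i}$ and invoking conditional homoscedasticity to annihilate the bracket. Your added remarks---explicitly anchoring the residual at $\delta_0$ via Lemma~\ref{la1} and noting that the lemma claims only the forward direction, with uniqueness deferred to Theorem~\ref{lm10}---are accurate clarifications rather than a departure from the paper's argument.
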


\begin{proof}
See Appendix~\ref{at0a}.
\end{proof}

We refer to this as the \emph{dual tendency} (DT) condition,
which implies that a valid instrument must satisfy both
$\mathbb{E}(\mathbf{u}\mid\mathbf{z}_0)=0$
and $\mathbf{M}(\delta)=\mathbf 0$.
The condition applies to $\mathbf{z}_0$, the coplanar projection of an underlying
(possibly external) instrument $\mathbf{z}$ onto $\mathcal{W}(\mathbf{x},\mathbf{y})$.

\textbf{Interpretation:} Why "dual tendency"? The name reflects the fact that two distinct conditions---one unobservable (exogeneity) and one testable (the moment restriction)---hold together if and only if we have chosen the correct instrument. Think of it as two independent tests that both point to the same answer. When we search over $\delta$, there is generically only one value where both tests align: $\delta = \delta_0$.

This duality provides the foundation for identification: by finding the value of $\delta$ that satisfies the observable condition $\mathbf M(\delta) = \mathbf 0$, we simultaneously identify the instrument that satisfies the unobservable condition $\mathbb{E}(\mathbf u \mid\mathbf z_0) = 0$.

\subsection{Identification via the SIV Method: Homoscedastic Case}\label{identification}

Any $\mathbf{s}\in\mathcal{W}(\mathbf{x},\mathbf{y})$ can be written uniquely as
\[
\mathbf{s}(\delta)
=
\mathbf{x} + k\,\delta\,\mathbf{r},
\qquad
\delta\in(0,\bar{\delta}).
\]
The target is to recover $\delta_0$ such that $\mathbf s^\star:=\mathbf{s}(\delta_0)=\mathbf{z}_0$, i.e.\ the
value for which the DT condition holds.

 \begin{theorem}[Identification of SIV via the DT Condition]\label{lm10}
Suppose Assumptions A1--A3 and the homoscedastic part of Assumption A4 hold. 
Let there exist a valid coplanar instrument 
$\mathbf z_0 \in \mathcal W(\mathbf x,\mathbf y)$ such that
\[
\mathbb E(\mathbf u \mid \mathbf z_0) = \mathbf 0
\quad\text{and}\quad
\mathbb E(\mathbf e \mathbf e' \mid \mathbf z_0) = \sigma^2 \mathbf I_n,
\]
where $\mathbf e$ is the first–stage residual.

Let $\mathbf s(\delta) = \mathbf x + k\,\delta\,\mathbf r$, $\delta \in (0,\bar\delta)$,
parameterize the class of candidate synthetic instruments as in Lemma~\ref{lb2}, and let
$\delta_0 \in (0,\bar\delta)$ be such that $\mathbf s(\delta_0) = \mathbf z_0$.
Let $\mathbf M(\delta)$ denote the DT moment vector defined in \eqref{m:1}.

Then $\mathbf M(\delta_0) = \mathbf 0$. If $\mathbf M(\delta)$ is continuously differentiable
in $\delta$ on $(0,\bar\delta)$ and its derivative at $\delta_0$,
\[
J_{\mathbf M}(\delta_0)
:= \frac{\partial \mathbf M(\delta)}{\partial \delta}\bigg|_{\delta=\delta_0},
\]
is nonzero, then there exists a neighborhood $\mathcal N$ of $\delta_0$ such that
\[
\mathbf M(\delta) = \mathbf 0 
\quad\Longleftrightarrow\quad 
\delta = \delta_0,
\qquad (\delta \in \mathcal N).
\]
Hence $\delta_0$ is locally identified by the DT condition, and the synthetic instrument
\[
\mathbf s^\star := \mathbf s(\delta_0) = \mathbf z_0
\]
is the unique coplanar instrument in $\mathcal W(\mathbf x,\mathbf y)$ satisfying
$\mathbb E(\mathbf u \mid \mathbf s^\star) = \mathbf 0$.
\end{theorem}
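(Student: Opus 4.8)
The plan is to obtain the theorem by combining the dual tendency lemma (Lemma~\ref{t0a}) with a one-dimensional inverse-function argument, organized around three claims: the moment vector vanishes at $\delta_0$, this zero is locally unique, and local identification together with uniqueness of the valid instrument follows. For the first claim I would invoke Lemma~\ref{t0a} directly. The theorem's hypotheses guarantee that $\mathbf z_0=\mathbf s(\delta_0)$ is a valid coplanar instrument: it satisfies both exogeneity $\mathbb E(\mathbf u\mid\mathbf z_0)=\mathbf 0$ and first-stage homoscedasticity $\mathbb E(\mathbf e\mathbf e'\mid\mathbf z_0)=\sigma^2\mathbf I_n$. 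These are precisely the conditions under which Lemma~\ref{t0a} asserts that the DT moment condition holds, so $\mathbf M(\delta_0)=\mathbf 0$ with no further computation.

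For the second claim I would first observe that under the i.i.d.\ (or stationary ergodic) sampling of Assumption~A1 the stacked moment vector in \eqref{m:1} has identical entries, so it reduces to $\mathbf M(\delta)=m(\delta)\,\mathbf 1_n$ for a scalar function $m(\delta):=\mathbb E[(e_i(\delta)^2-\sigma^2)\,z_{0i}]$, whence $\mathbf M(\delta)=\mathbf 0\iff m(\delta)=0$. By hypothesis $m$ is continuously differentiable on $(0,\bar\delta)$ with $m'(\delta_0)=J_{\mathbf M}(\delta_0)\neq 0$. Continuity of $m'$ then furnishes a neighborhood $\mathcal N$ of $\delta_0$ on which $m'$ retains a constant nonzero sign, so $m$ is strictly monotone on $\mathcal N$. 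A strictly monotone function attains each value at most once; since $m(\delta_0)=0$ by the first claim, it follows that $m(\delta)=0\iff\delta=\delta_0$ on $\mathcal N$. This is exactly the local inverse-function statement and delivers $\mathbf M(\delta)=\mathbf 0\iff\delta=\delta_0$ for $\delta\in\mathcal N$.

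For the third claim I would combine the two previous steps with the exogeneity half of Lemma~\ref{t0a}. Local identification of $\delta_0$ is immediate from the equivalence just established. For uniqueness of the valid instrument within the normalized family $\mathbf s(\delta)=\mathbf x+k\,\delta\,\mathbf r$, I note that exogeneity reduces to the single linear restriction $\operatorname{cov}(\mathbf u,\mathbf s(\delta))=\operatorname{cov}(\mathbf u,\mathbf x)+k\,\delta\,\operatorname{cov}(\mathbf u,\mathbf r)=0$; because $\operatorname{cov}(\mathbf u,\mathbf r)>0$ by \eqref{eq:corr3}, this equation has a unique solution, necessarily $\delta_0$. Hence $\mathbf s^\star=\mathbf s(\delta_0)=\mathbf z_0$ is the unique element of the normalized coplanar family satisfying the exogeneity condition, completing the argument.

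The main obstacle is not the calculus: given the nonzero-derivative hypothesis, the second claim is elementary real analysis. The theorem's force instead rests on the logical linkage supplied by Lemma~\ref{t0a}, namely that the observable moment condition $\mathbf M(\delta)=\mathbf 0$ and the unobservable exogeneity condition share the common root $\delta_0$. The only genuinely delicate point internal to this proof is ensuring that root is isolated, so that the testable condition admits no spurious nearby solution masquerading as a valid instrument; this is exactly what $J_{\mathbf M}(\delta_0)\neq 0$ buys. I would flag that discharging this condition generically, rather than assuming it, is the natural place where additional structure on the first-stage variance function (via the decomposition in Lemma~\ref{la1}) would enter.
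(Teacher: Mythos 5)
Your proof is correct, and it reaches the theorem's conclusions by a cleaner route than the paper in two places. For the first claim you invoke Lemma~\ref{t0a} directly, which is all the hypotheses require; the paper instead takes a detour, differentiating the first-stage sum of squared residuals $\mathbf e'\mathbf e$ with respect to $\mathbf s$, using the decomposition $\gamma=\gamma_0+g(\mathbf x,\mathbf s,\mathbf z_0)$ from Lemma~\ref{la1} and the (nontrivial, lightly justified) assertion that $\partial g/\partial\mathbf s=\mathbf 0$ at $\mathbf s^\star=\mathbf z_0$, before concluding $\mathbf M(\delta_0)=\mathbf 0$ via Lemma~\ref{t0a} anyway --- your shortcut loses nothing and avoids that extra smoothness claim. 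For local uniqueness, you reduce $\mathbf M(\delta)$ to a scalar $m(\delta)\mathbf 1_n$ (legitimate under the i.i.d./stationarity in A1, since the entries of \eqref{m:1} are then identical expectations) and run an elementary sign-constancy/monotonicity argument; the paper instead Taylor-expands the vector $\mathbf M(\delta)$ and projects onto $J_{\mathbf M}(\delta_0)$, which accomplishes the same thing but does not need the scalar reduction, so it would survive if the entries of $\mathbf M$ were heterogeneous. Two further differences are worth noting. First, your third claim supplies something the paper largely asserts: uniqueness of the exogenous instrument within the normalized family via the linear equation $\operatorname{cov}(\mathbf u,\mathbf x)+k\delta\operatorname{cov}(\mathbf u,\mathbf r)=0$ with $\operatorname{cov}(\mathbf u,\mathbf r)>0$ from \eqref{eq:corr3}; since conditional mean independence implies zero covariance, any two valid $\delta$'s would both solve this equation, forcing them to coincide --- this is a genuinely tighter treatment of the final sentence of the theorem than the paper's. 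Second, the paper's proof contains a third part (showing $\delta_0$ is a strict local minimizer of the GMM objective $\mathbf M'\mathbf W\mathbf M$) that is not required by the theorem's statement but is reused in Corollary~\ref{comp}; your proof omits it, which is fine for the theorem itself but means the corollary's proof cannot simply cite your argument for that step.
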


\begin{proof}
See Appendix~\ref{alm10}.\end{proof}
Theorem~\ref{lm10} identifies $\delta_0$ as the unique solution to
$\mathbf{M}(\delta)=\mathbf{0}$ in a neighborhood of the true SIV.
In practice, $\mathbf{M}(\delta)$ is replaced by its sample analogue, and $\delta_0$ is
estimated by minimizing a quadratic form in the sample moments.

\subsubsection{Consistency of the DT Estimator}\label{consist:dt}
Having established identification, we now show that the sample estimator converges to the true parameter value.
\begin{corollary}[DT Estimator]\label{comp}
Under the conditions of Theorem~\ref{lm10}, the parameter $\delta_0$ can be consistently
estimated by
\[
\hat{\delta}_n
=
\arg\min_{\delta\in\mathcal{D}} \widehat{J}_n(\delta),
\]
where $\mathcal{D}\subset(0,\bar{\delta})$ is compact and
\[
\widehat{J}_n(\delta)
=
\widehat{\mathbf{M}}_n(\delta)'\mathbf{W}_n\widehat{\mathbf{M}}_n(\delta)
\]
is a sample criterion formed from the sample moment vector
$\widehat{\mathbf{M}}_n(\delta)$ and a positive definite weighting matrix $\mathbf{W}_n$.
Under a uniform law of large numbers and $\mathbf{W}_n\xrightarrow{p}\mathbf{W}$,
$\hat{\delta}_n\xrightarrow{p}\delta_0$.
\end{corollary}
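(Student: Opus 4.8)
The plan is to recognize $\hat\delta_n$ as a minimum-distance (GMM-type) extremum estimator and to verify the hypotheses of the standard consistency theorem for such estimators (cf.\ Newey and McFadden). Writing $J(\delta):=\mathbf M(\delta)'\mathbf W\mathbf M(\delta)$ for the population criterion, the four ingredients I would check are: (a) $\delta_0$ is the unique minimizer of $J$ on $\mathcal D$; (b) $\mathcal D$ is compact; (c) $J$ is continuous on $\mathcal D$; and (d) $\sup_{\delta\in\mathcal D}|\widehat J_n(\delta)-J(\delta)|\xrightarrow{p}0$. Compactness (b) holds by assumption, so the real work lies in (a), (c), and (d).

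For the identification requirement (a), I would argue that because $\mathbf W$ is positive definite, $J(\delta)\ge 0$ with equality if and only if $\mathbf M(\delta)=\mathbf 0$. Theorem~\ref{lm10} gives $\mathbf M(\delta_0)=\mathbf 0$, so $\delta_0$ attains the global minimum value $J(\delta_0)=0$; uniqueness then reduces to showing that $\delta_0$ is the only zero of $\mathbf M$ in $\mathcal D$. For continuity (c), I would note that $\mathbf M(\delta)$ is continuously differentiable in $\delta$ under the hypotheses of Theorem~\ref{lm10} (which in turn rests on the smoothness of $g$ in Lemma~\ref{la1}), hence continuous, and that the quadratic map $\mathbf m\mapsto \mathbf m'\mathbf W\mathbf m$ is continuous; composing yields continuity of $J$, and compactness of $\mathcal D$ then makes $\mathbf M$ bounded on $\mathcal D$, say $\sup_{\delta\in\mathcal D}\|\mathbf M(\delta)\|\le B<\infty$.

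The uniform-convergence step (d) is where most of the estimation-theoretic work sits. Suppressing the argument $\delta$ and using symmetry of $\mathbf W$, I would decompose
\[
\widehat J_n-J
=\widehat{\mathbf M}_n'(\mathbf W_n-\mathbf W)\widehat{\mathbf M}_n
+(\widehat{\mathbf M}_n-\mathbf M)'\mathbf W(\widehat{\mathbf M}_n+\mathbf M),
\]
and then bound each term uniformly via the triangle and Cauchy--Schwarz inequalities. The ULLN hypothesis supplies $\sup_{\delta\in\mathcal D}\|\widehat{\mathbf M}_n(\delta)-\mathbf M(\delta)\|\xrightarrow{p}0$, which together with $\|\mathbf M\|\le B$ forces $\sup_\delta\|\widehat{\mathbf M}_n(\delta)\|$ to be stochastically bounded; combined with $\mathbf W_n\xrightarrow{p}\mathbf W$ this drives both terms to zero uniformly in $\delta$. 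With (a)--(d) in hand, the extremum-estimator consistency theorem delivers $\hat\delta_n\xrightarrow{p}\delta_0$.

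The main obstacle is the gap between the \emph{local} identification established in Theorem~\ref{lm10} and the \emph{global} uniqueness of the minimizer required in (a): the theorem only guarantees that $\delta_0$ is the unique zero of $\mathbf M$ within some neighborhood $\mathcal N$, whereas consistency needs it to be the unique zero on all of $\mathcal D$. I would close this gap either by taking $\mathcal D$ to be a compact subset of $\mathcal N$ (so that local identification suffices), or by adding the explicit global-identification assumption that $\delta_0$ is the unique solution of $\mathbf M(\delta)=\mathbf 0$ on $\mathcal D$, the standard condition underlying all GMM consistency results. A secondary point worth checking is that the ULLN itself is warranted: under the i.i.d.\ (or stationary-ergodic) data of Assumption~A1 it follows from a pointwise LLN plus stochastic equicontinuity, the latter guaranteed by the smoothness of the moment function in $\delta$ together with an integrable envelope dominating $\mathbf e(\delta)^{\circ 2}\odot\mathbf z_0$.
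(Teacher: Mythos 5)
Your proposal follows essentially the same route as the paper's own proof: both recognize $\hat\delta_n$ as a GMM-type extremum estimator and verify the standard consistency ingredients (compactness of $\mathcal D$, continuity of the population criterion $J$, identification of $\delta_0$, and uniform convergence of the sample criterion), then invoke a Newey--McFadden/van der Vaart consistency theorem. The paper's appendix proof does exactly this in parts (i)--(iii), and adds a part (iv) (continuous mapping theorem for $\hat{\mathbf s}^\star = \mathbf x + k\hat\delta_{0n}\mathbf r$) that the corollary statement does not strictly require.

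Two places where your treatment is more careful than the paper's are worth recording. First, you actually prove the uniform-convergence step via the decomposition
\[
\widehat J_n-J
=\widehat{\mathbf M}_n'(\mathbf W_n-\mathbf W)\widehat{\mathbf M}_n
+(\widehat{\mathbf M}_n-\mathbf M)'\mathbf W(\widehat{\mathbf M}_n+\mathbf M),
\]
handling the random weighting matrix $\mathbf W_n\xrightarrow{p}\mathbf W$; the paper's part (i) simply asserts $\sup_{\delta}|J_n(\delta)-J(\delta)|\xrightarrow{p}0$ ``by uniform law of large numbers'' and silently works with a fixed $\mathbf W$ throughout, so the hypothesis $\mathbf W_n\xrightarrow{p}\mathbf W$ in the corollary statement is never used. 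Second, and more substantively, the local-versus-global identification gap you flag as ``the main obstacle'' is present in the paper's own proof: part (ii) establishes only that $\delta_0$ is a \emph{strict local} minimizer of $J$ (via the Taylor expansion $J(\delta)=(\delta-\delta_0)^2 J_{\mathbf M}(\delta_0)'\mathbf W J_{\mathbf M}(\delta_0)+o((\delta-\delta_0)^2)$, valid only near $\delta_0$), yet part (iii) then appeals to Theorem 5.7 of van der Vaart, which requires the well-separated \emph{global} minimum condition $\inf_{\delta\in\mathcal D,\,|\delta-\delta_0|\ge\varepsilon}J(\delta)>J(\delta_0)$ on all of $\mathcal D$. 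Your two proposed remedies---taking $\mathcal D$ to be a compact subset of the identification neighborhood $\mathcal N$ from Theorem~\ref{lm10}, or explicitly assuming $\delta_0$ is the unique zero of $\mathbf M$ on $\mathcal D$---are precisely what is needed to make either your argument or the paper's complete, so your proposal is not only correct in outline but diagnoses a genuine weakness in the published argument.
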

\begin{proof}
See Appendix \ref{acomp}.
\end{proof}

With the valid SIV $\mathbf s^\star$ in hand, we can estimate the structural parameter $\beta$ using standard instrumental variables techniques.
$\hat{\mathbf{s}} = \mathbf{x} + k\hat{\delta}_n\mathbf{r}$.
\begin{corollary}[Standard SIV Estimator]\label{tx1}
If $\mathbf s^\star=\mathbf x + k\,\delta_0\,\mathbf r$ satisfies
Theorem~\ref{lm10}, then the structural parameter $\beta$ in \eqref{e2} is identified, and its
population analogue can be estimated by
\[
\hat\beta_{\text{SIV}}
\;=\; (\mathbf s^{\star\prime}\mathbf x)^{-1}\,\mathbf s^{\star\prime}\mathbf y.
\]
\end{corollary}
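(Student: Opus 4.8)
The plan is to recognize Corollary~\ref{tx1} as the textbook just-identified instrumental-variables argument applied to the synthetic instrument $\mathbf s^\star=\mathbf x+k\delta_0\mathbf r$. Two ingredients suffice: exogeneity, which Theorem~\ref{lm10} delivers as $\mathbb E(\mathbf u\mid\mathbf s^\star)=\mathbf 0$, and relevance, which I will read off almost for free from the orthogonality $\mathbf r\perp\mathbf x$ hardwired into the construction of Lemma~\ref{lb2}.

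First I would promote the conditional-mean exogeneity to the scalar moment it actually uses. Since $\mathbf u\in L^1$ under the maintained assumptions, the law of iterated expectations gives $\mathbb E[\mathbf s^{\star\prime}\mathbf u]=\mathbb E\big[\mathbf s^{\star\prime}\,\mathbb E(\mathbf u\mid\mathbf s^\star)\big]=0$. Next I would exploit that $\mathbf r=(\mathbf I-\mathbf P_x)\mathbf y$ is by construction orthogonal to $\mathbf x$, so $\mathbf r'\mathbf x=0$ exactly (and $\mathbb E[\mathbf r'\mathbf x]=0$ in population). Consequently the cross-moment collapses,
\[
\mathbf s^{\star\prime}\mathbf x=\mathbf x'\mathbf x+k\delta_0\,\mathbf r'\mathbf x=\mathbf x'\mathbf x,
\]
which is strictly positive whenever $\mathbf x\neq\mathbf 0$ (in population, $\mathbb E[\mathbf s^{\star\prime}\mathbf x]=\mathbb E[\mathbf x'\mathbf x]>0$ by Assumption~A3). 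This single identity simultaneously secures relevance and makes the scalar $(\mathbf s^{\star\prime}\mathbf x)^{-1}$ well defined, so no weak-instrument pathology can arise.

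Identification then follows in one line. Premultiplying the structural equation~\eqref{e2} by $\mathbf s^{\star\prime}$ and taking expectations yields $\mathbb E[\mathbf s^{\star\prime}\mathbf y]=\beta\,\mathbb E[\mathbf s^{\star\prime}\mathbf x]+\mathbb E[\mathbf s^{\star\prime}\mathbf u]=\beta\,\mathbb E[\mathbf x'\mathbf x]$, and because $\mathbb E[\mathbf x'\mathbf x]>0$ this inverts to the unique value $\beta=(\mathbb E[\mathbf x'\mathbf x])^{-1}\mathbb E[\mathbf s^{\star\prime}\mathbf y]$. Replacing the population moments by their sample analogues gives the stated estimator $\hat\beta_{\text{SIV}}=(\mathbf s^{\star\prime}\mathbf x)^{-1}\mathbf s^{\star\prime}\mathbf y$, and consistency follows from the law of large numbers applied to numerator and denominator together with the continuous-mapping theorem.

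I do not anticipate a substantive obstacle, since this is the standard just-identified computation; the only points needing care are the integrability step that upgrades $\mathbb E(\mathbf u\mid\mathbf s^\star)=\mathbf 0$ to $\mathbb E[\mathbf s^{\star\prime}\mathbf u]=0$, and the observation that $\mathbf r\perp\mathbf x$ is what reduces relevance to Assumption~A3. For a feasible version one would instead plug $\hat{\mathbf s}=\mathbf x+k\hat\delta_n\mathbf r$ into the formula; the extra wrinkle there is propagating the first-step estimation error, which is controlled by $\hat\delta_n\xrightarrow{p}\delta_0$ from Corollary~\ref{comp} together with Slutsky's theorem.
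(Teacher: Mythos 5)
Your proof is correct and follows essentially the same route as the paper's: Theorem~\ref{lm10} supplies exogeneity of $\mathbf s^\star$, and the rest is the standard just-identified IV identification argument, which you spell out explicitly (law of iterated expectations to get $\mathbb E[\mathbf s^{\star\prime}\mathbf u]=0$, relevance via $\mathbf r'\mathbf x=0$ so that $\mathbf s^{\star\prime}\mathbf x=\mathbf x'\mathbf x>0$ under Assumption~A3, then the moment algebra) where the paper's two-line proof simply asserts it. One point in your favor: your algebra delivers exactly the stated formula $(\mathbf s^{\star\prime}\mathbf x)^{-1}\mathbf s^{\star\prime}\mathbf y$, whereas the paper's appendix proof writes $(\mathbf x'\mathbf s^\star)^{-1}\mathbf x'\mathbf y$, which---precisely because $\mathbf s^{\star\prime}\mathbf x=\mathbf x'\mathbf x$---would collapse to the OLS estimator; that is evidently a slip in the paper that your derivation implicitly corrects.
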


\begin{proof}
See Appendix~\ref{atx1}.
\end{proof}

\subsection{Identifying the Sign of $\operatorname{cov}(\mathbf{x},\mathbf{u})$}\label{sign}

Lemma~\ref{lb2} implies that correct sign specification is required for
identifying $\mathbf{s}^\star$.
From Theorem~\ref{lm10}, $\mathbf{s}^\star$ satisfies $\mathbf{M}(\delta_0)=0$, which implies
$\operatorname{cov}(\mathbf{e}^{\circ 2},\mathbf{s}^\star)=0$
for $\delta_0>0$.
If the sign of $\operatorname{cov}(\mathbf{x},\mathbf{u})$ is misspecified, the DT
condition fails for all $\delta>0$.

\begin{corollary}[Sign Determination]\label{c2a}
Under Assumptions A1--A3 and homoscedastic $\mathbf{u}$ with
$\operatorname{corr}(\mathbf{x},\mathbf{u})\neq 0$,
the true sign of $\operatorname{corr}(\mathbf{x},\mathbf{u})$
is that which yields $\delta_0>0$ and satisfies
$\operatorname{cov}(\mathbf{e}^{\circ  2},\mathbf{s}^\star)=0$,
for $\mathbf{s}^\star = \mathbf{x} + k\delta_0\mathbf{r}$.
\end{corollary}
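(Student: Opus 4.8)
The plan is to translate the observable moment condition into the unobservable exogeneity requirement via the dual tendency equivalence, and then reduce the sign question to elementary sign analysis of a single covariance. By Lemma~\ref{t0a} and Theorem~\ref{lm10}, the moment condition $\operatorname{cov}(\mathbf e^{\circ 2},\mathbf s^\star)=0$ (equivalently $\mathbf M(\delta_0)=\mathbf 0$) holds at a candidate $\mathbf s(\delta)=\mathbf x + k\delta\mathbf r$ if and only if that candidate is the valid coplanar instrument, i.e.\ $\operatorname{cov}(\mathbf s(\delta),\mathbf u)=0$. I would therefore work directly with the exogeneity defect
\[
\phi(k,\delta) := \operatorname{cov}(\mathbf s(\delta),\mathbf u) = \operatorname{cov}(\mathbf x,\mathbf u) + k\,\delta\,\operatorname{cov}(\mathbf r,\mathbf u),
\]
using linearity of covariance and the representation $\mathbf s(\delta)=\mathbf x+k\delta\mathbf r$ from Lemma~\ref{lb2}. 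Since variances are positive, $\operatorname{sign}(\operatorname{corr}(\mathbf x,\mathbf u))=\operatorname{sign}(\operatorname{cov}(\mathbf x,\mathbf u))=-k$, so identifying the admissible value of $k$ is equivalent to identifying the target sign.

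First I would treat the correct sign $k=-\operatorname{sign}(\operatorname{cov}(\mathbf x,\mathbf u))$. Setting $\phi(k,\delta_0)=0$ and solving yields
\[
\delta_0 = \frac{-\operatorname{cov}(\mathbf x,\mathbf u)}{k\,\operatorname{cov}(\mathbf r,\mathbf u)} > 0,
\]
where positivity follows because $\operatorname{cov}(\mathbf r,\mathbf u)>0$ by restriction~\eqref{eq:corr3}, and, by the definition of $k$, the numerator $-\operatorname{cov}(\mathbf x,\mathbf u)$ shares its sign with $k$. By the dual tendency condition this $\delta_0>0$ simultaneously satisfies $\operatorname{cov}(\mathbf e^{\circ 2},\mathbf s^\star)=0$, so the true sign admits an interior solution.

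Next I would rule out the misspecified sign $-k$. Substituting it into $\phi$ and splitting into the cases $\operatorname{cov}(\mathbf x,\mathbf u)>0$ (correct $k=-1$) and $\operatorname{cov}(\mathbf x,\mathbf u)<0$ (correct $k=+1$), in each case both terms of $\phi(-k,\delta)$ carry the same nonzero sign for every $\delta>0$: the slope term $-k\delta\operatorname{cov}(\mathbf r,\mathbf u)$ inherits the sign of $-k=\operatorname{sign}(\operatorname{cov}(\mathbf x,\mathbf u))$ because $\operatorname{cov}(\mathbf r,\mathbf u)>0$, while $\operatorname{corr}(\mathbf x,\mathbf u)\neq 0$ keeps the intercept $\operatorname{cov}(\mathbf x,\mathbf u)$ bounded away from zero and of exactly that same sign. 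Hence $\phi(-k,\delta)\neq 0$ for all $\delta\in(0,\bar\delta)$, so exogeneity---and therefore the moment condition---fails everywhere on the misspecified branch. Geometrically, the wrong sign traces the ray on the opposite side of $\mathbf x$ from $\mathbf z_0$, so it cannot reach the valid instrument within $(0,\bar\delta)$. Combining the two branches identifies the true sign uniquely as the one admitting $\delta_0>0$ with $\operatorname{cov}(\mathbf e^{\circ 2},\mathbf s^\star)=0$.

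The main obstacle is the logical link asserted in the first paragraph: the observable moment zero certifies a valid instrument only if it cannot be met spuriously at a candidate that fails exogeneity. I would close this gap by invoking Theorem~\ref{lm10}, whose nonvanishing-Jacobian hypothesis makes $\mathbf M(\delta)=\mathbf 0$ locally equivalent to $\delta=\delta_0$, so the moment zero coincides with genuine exogeneity rather than an accidental cancellation. Given that equivalence, the remaining steps are routine covariance algebra and require no further machinery.
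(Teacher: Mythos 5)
Your positive-branch argument is correct and, in substance, matches both the paper and the algebra already present in the proof of Lemma~\ref{lb2}: under the correct $k$, the exogeneity defect $\phi(k,\delta)=\operatorname{cov}(\mathbf x,\mathbf u)+k\delta\operatorname{cov}(\mathbf r,\mathbf u)$ is linear in $\delta$ with slope opposite in sign to the intercept, giving the unique root $\delta_0=|\operatorname{cov}(\mathbf x,\mathbf u)|/\operatorname{cov}(\mathbf r,\mathbf u)>0$, and Lemma~\ref{t0a} converts validity at that point into the observable condition $\operatorname{cov}(\mathbf e^{\circ 2},\mathbf s^\star)=0$. Where you depart from the paper is the wrong-sign branch: you try to prove outright that it admits no solution, whereas the paper's proof is procedural (test both hypotheses, invoke Theorem~\ref{lm10} when exactly one admits a root) and explicitly concedes, in its final step, that if \emph{both} sign hypotheses yield solutions the procedure does not sign-identify, dismissing that outcome as nongeneric.

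The gap is in the inference ``exogeneity fails for all $\delta>0$ on the misspecified ray, therefore the moment condition fails there too.'' This requires the converse of the DT condition to hold \emph{globally}: $\mathbf M(\delta)=\mathbf 0\ \Rightarrow\ \operatorname{cov}(\mathbf s(\delta),\mathbf u)=0$ at every $\delta$ on either branch. Neither result you cite supplies this. Lemma~\ref{t0a} runs only in the direction validity $\Rightarrow$ moment zero, and the nonvanishing-Jacobian argument in Theorem~\ref{lm10} gives the equivalence $\mathbf M(\delta)=\mathbf 0\iff\delta=\delta_0$ only in a neighborhood $\mathcal N$ of $\delta_0$ --- a point that lives on the \emph{correct} branch, so the local equivalence says nothing about the misspecified ray, where no $\delta_0$ exists. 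Nothing excludes a spurious zero of $\operatorname{cov}(\mathbf e^{\circ 2}(\delta),\mathbf s(\delta))$ at some $\delta$ on the wrong branch: the squared-residual covariance is not a monotone transform of your $\phi$, and the first-stage homoscedasticity assumption constrains $\mathbf e$ conditional on $\mathbf z_0$, not the behavior of $\mathbf e(\delta)$ along the wrong ray. Your covariance algebra showing $\phi(-k,\delta)\neq 0$ for all $\delta>0$ is correct and is a nice explicit ingredient the paper's own proof lacks, but it establishes a statement about the unobservable $\operatorname{cov}(\mathbf s,\mathbf u)$, not about the observable criterion the corollary actually characterizes; bridging those two globally is exactly the missing piece, and it is the same issue the paper sidesteps by appealing to genericity rather than proving impossibility.
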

\begin{proof}
See Appendix~\ref{ac2a}.
\end{proof}

\textbf{Implementation.} In practice, we test both sign assumptions ($k = +1$ and $k = -1$) separately:

\begin{enumerate}
\item For $k = +1$: Search for $\delta_0^{(+)} > 0$ satisfying $\mathbf M(\delta_0^{(+)}) = 0$ with $\mathbf s^{(+)} = \mathbf x + \delta_0^{(+)}\mathbf  r$.
\item For $k = -1$: Search for $\delta_0^{(-)} > 0$ satisfying $\mathbf M(\delta_0^{(-)}) = 0$ with $\mathbf s^{(-)} = \mathbf x - \delta_0^{(-)}\mathbf  r$.
\end{enumerate}

Under the correct sign assumption, the function $\text{cov}(\mathbf e^{\circ 2}, \mathbf s(\delta))$ will exhibit a nonmonotonic pattern, crossing zero at $\delta = \delta_0$. Under the incorrect sign assumption (or if there is no endogeneity), the covariance will not cross zero and typically increases monotonically in magnitude. This provides a diagnostic for identifying the true sign and detecting the absence of endogeneity.

\begin{remark}
If neither sign yields a valid zero-crossing, this suggests $\text{cov}(\mathbf x,\mathbf u) \approx 0$---i.e., no endogeneity is present. In this case, OLS is consistent and no instrumental variable correction is needed.
\end{remark}
Having established identification in the homoscedastic case, we now extend the method to accommodate heteroscedastic errors—a more realistic setting for empirical applications.

\subsection{Extension: The Robust DT Condition for Heteroscedastic Errors}
\label{s3.5}

%
The DT condition in Subsection \ref{identification} relies on first-stage homoscedasticity: $\mathbb{E}(\mathbf e^{\circ 2} \mid \mathbf z_0) = \sigma^2 \mathbf{1}_n$. In practice, first-stage errors often exhibit heteroscedasticity, particularly in cross-sectional data. We now extend the DT framework to accommodate this realistic feature while preserving identification.
\subsubsection{The Challenge of Heteroscedasticity}

When $\mathbb{E}(\mathbf e\mathbf e' \mid\mathbf z_0) =\mathbf H \neq \sigma^2 \mathbf I_n$, the condition $\mathbf M(\delta) = \mathbf 0$ may no longer hold exactly at the true instrument. The problem is that heteroscedasticity introduces additional variation in $\mathbf e^{\circ 2}$ that is not related to instrument misspecification.

\textbf{Key insight.} While we cannot eliminate heteroscedasticity, we can distinguish between two sources of variance in the first-stage residuals:

\begin{enumerate}
\item[i.] \emph{Intrinsic heteroscedasticity}: Variance inherent to the data generating process, present even when using the true instrument $\mathbf z_0$.
\item[ii.] \emph{Misspecification-induced variance}: Additional variance arising when the candidate instrument $\mathbf s(\delta)$ deviates from $\mathbf z_0$.
\end{enumerate}

When $\mathbf s(\delta) = \mathbf z_0$, only intrinsic heteroscedasticity remains. As $\mathbf s(\delta)$ moves away from $\mathbf z_0$, misspecification-induced variance appears. The robust DT condition identifies $\delta_0$ as the point where this additional variance is minimized.

\subsubsection{GLS Transformation and Variance Comparison}
When for the first-stage stacked scalar error terms, $\mathbb{E}(\mathbf e\mathbf e' \mid \mathbf z_0) = \mathbf H \neq \sigma^2 \mathbf I_n$,
the Generalized Least Squares (GLS) transformation restores spherical
disturbance properties:
\begin{equation}\label{s:1}
\mathbf P'\mathbf x \;=\; \mathbf P'\mathbf s\,\gamma \;+\; \mathbf P'\mathbf e,
\qquad
\mathbf P'\mathbf P \;=\; \mathbf H^{-1}.
\end{equation}
This implies the transformed residuals
\begin{equation}\label{s:2}
\mathbf e_g \;=\; \mathbf P'\mathbf e .
\end{equation}

If $\mathbf H$ were known, the stacked scalar GLS residuals would satisfy 
\(\mathbb
 E(\mathbf e_g \mathbf e_g' \mid \mathbf z_0) \;=\; \mathbf I_n
\), and we could apply the standard DT condition to $\mathbf e_g$ rather than $\mathbf e$.
In practice, $\mathbf H$ is unknown and must be estimated, yielding
OLS and feasible GLS (FGLS) residuals
$\hat{\mathbf e}$ and $\hat{\mathbf e}_g$, respectively
\citep[see][Ch.~8]{hill}. Since $\hat{\mathbf e}_g \not\equiv \mathbf e_g$ in finite samples, neither set of residuals perfectly satisfies the spherical property. However, the \textit{discrepancy} between OLS and FGLS variances contains information about instrument validity.
\subsubsection{Population Criterion}
For each candidate instrument $\mathbf s(\delta)$, let:
\begin{itemize}
\item $\mathbf e(\delta)$ denote OLS residuals from regressing $\mathbf x$ on $\mathbf s(\delta)$,
\item $\mathbf e_g(\delta)$ denote FGLS residuals (using estimated variance weights).
\end{itemize}
Given we have scalar ($p=1$) and single instrument( $q=1$) model, we define the variance discrepancy:
\begin{equation}
\Delta(\delta) := \mathbb{E}[\mathbf e_g^{\circ 2}(\delta) \mid\mathbf  s] - \mathbb{E}[\mathbf e^{\circ 2}(\delta) \mid\mathbf  s] = \mathbf{1}_n - \text{diag}(\mathbf H(\delta)),
\label{eq:variance_discrepancy}
\end{equation}
where $\mathbf H(\delta) := \mathbb{E}[\mathbf e(\delta)\mathbf e(\delta)' \mid \mathbf s]$. $\Delta(\delta)$ measures how far the conditional variance of the first-stage residual under the synthetic instrument deviates from the homoscedastic benchmark.
\noindent
%
Note that
\[
\operatorname{diag}\big(\mathbf H(\delta)\big) - \mathbf{1}_n=\operatorname{diag}\big(\mathbf H(\delta) - \mathbf I_n\big)
= -\Delta(\delta),
\]
so the diagonal entries of $\mathbf H(\delta)-\mathbf I_n$ are
\[
\big[\mathbf H(\delta)-\mathbf I_n\big]_{ii}
= H_{ii}(\delta)-1
= -\Delta_i(\delta).
\]
Thus, to quantify the variance discrepancy more generally, we can use the squared Frobenius norm:
\begin{equation}\label{xx1}
D(\delta)
\;=\; \big\|\mathbf H(\delta)-\mathbf I_n\big\|_F^2
\;=\; \operatorname{tr}\!\Big(\big[\mathbf H(\delta)-\mathbf I_n\big]^2\Big),
\end{equation}
where \(\|\cdot\|_F\) denotes the Frobenius norm, which captures both  diagonal and off-diagonal contributions to variance discrepancy. One can show that
\noindent
%
\[
D(\delta)
= \big\|\Delta(\delta)\big\|_2^2
  \;+\; \sum_{i\neq j} H_{ij}(\delta)^2
\;\ge\;
\big\|\Delta(\delta)\big\|_2^2,
\]
with equality if and only if $H_{ij}(\delta)=0$ for all $i\neq j$ (i.e.\ $\mathbf H(\delta)$
is diagonal).


Under heteroscedasticity, the infimum of \(D(\delta)\) is generally strictly positive.
 However, $D(\delta)$ is minimized at the true instrument:
\[
\delta_0 \in \arg\min_{\delta \in (0,\bar{\delta})} D(\delta).
\]
\textbf{Intuition.} When $\mathbf s(\delta) = \mathbf z_0$, the OLS residuals $\mathbf e(\delta)$ reflect only intrinsic heteroscedasticity, and the FGLS procedure (which targets this intrinsic structure) works as well as possible. The transformed residuals $\mathbf e_g(\delta)$ are close to spherical, making $D(\delta)$ small. When $\mathbf s(\delta) \neq \mathbf z_0$, additional misspecification-induced variance enters, increasing the discrepancy between OLS and FGLS variances, and $D(\delta)$ rises.

\subsubsection{Sample Criterion and Estimation}
To implement this criterion, we model the conditional variance structure. Under Assumption~A4, suppose the conditional variance follows the linear form
\[
\sigma_i^2(\delta)
\;=\;  E\!\big(e_i^2 \mid s_i, z_{0i}\big)
\;=\; \sigma_0^2 \;+\; \zeta\, d_i(\delta) \;+\; \alpha\, z_{0i},
\qquad d_i(\delta):=s_i - z_{0i}.
\]
Estimate the sample analogue using
\begin{equation}\label{x1}
\hat e_i^{\,2} \;=\; b \;+\; \zeta\, d_i(\delta) \;+\; \alpha\, z_{0i} \;+\; v_i,
\end{equation}
and define fitted conditional variances
\begin{equation}\label{x1s}
\hat\sigma_i^{\,2}(\delta) \;=\; \hat b \;+\; \hat\zeta\, d_i(\delta) \;+\; \hat\alpha\, z_{0i},
\quad i=1,\ldots,n.
\end{equation}
Assuming no serial correlation across the stack, set
\begin{equation}\label{xs:1}
\widehat{\mathbf H}_n(\delta)
\;:=\; \operatorname{diag}\!\big(\hat\sigma_1^{\,2}(\delta),\ldots,\hat\sigma_n^{\,2}(\delta)\big)
\;\approx\; E\!\big[\mathbf e(\delta)\mathbf e(\delta)'\,\big|\,\mathbf s\big].
\end{equation}
The sample criterion is
\begin{equation}\label{s:4}
\widehat D_n(\delta)
\;=\; \operatorname{tr}\!\Big(\big[\widehat{\mathbf H}_n(\delta)-\mathbf I_n\big]^2\Big).
\end{equation}

We now state the identification result for the heteroscedastic case.

\begin{theorem}[Robust DT Condition]\label{l:x1}
Under Assumptions~A1--A4, if
$\mathbf{z}_0\in\mathcal{W}(\mathbf{x},\mathbf{y})$ satisfies
$\mathbb{E}(\mathbf{u}\mid\mathbf{z}_0)=0$ and
$\mathbb{E}(\mathbf{e}\mathbf{e}'\mid\mathbf{z}_0)\neq\sigma^2\mathbf{I}_n$,
then a valid SIV
$\mathbf{s}^\star = \mathbf{x} + k\delta_0\mathbf{r}$
is identified by
$\hat{\delta}_{0n} = \arg\min_{\delta\in\mathcal{D}}\widehat{D}_n(\delta)$,
and satisfies $\mathbb{E}(\mathbf{u}\mid\mathbf{s}^\star=\mathbf{z}_0)=0$.
\end{theorem}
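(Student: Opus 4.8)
The plan is to read Theorem~\ref{l:x1} as a consistency statement for the minimum-distance estimator $\hat\delta_{0n}=\arg\min_{\delta\in\mathcal D}\widehat D_n(\delta)$, which I would reduce to three ingredients: (i) the population criterion $D(\delta)$ in \eqref{xx1} is uniquely minimized at $\delta_0$ over the compact set $\mathcal D\subset(0,\bar\delta)$; (ii) $\widehat D_n(\delta)\xrightarrow{p}D(\delta)$ uniformly on $\mathcal D$; and (iii) exogeneity holds at the minimizer. Ingredients (i)--(ii) feed the same argmin-consistency machinery already used in Corollary~\ref{comp} (compact parameter set, continuous limit criterion, a well-separated unique minimum, and uniform convergence), delivering $\hat\delta_{0n}\xrightarrow{p}\delta_0$. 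Ingredient (iii) is then immediate: by Lemma~\ref{lb2} the minimizer satisfies $\mathbf s(\delta_0)=\mathbf x+k\delta_0\mathbf r=\mathbf z_0$, and since $\mathbb E(\mathbf u\mid\mathbf z_0)=0$ by hypothesis, the identified SIV $\mathbf s^\star$ inherits $\mathbb E(\mathbf u\mid\mathbf s^\star)=0$.

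The substance is ingredient (i). First I would invoke Lemma~\ref{la1} to decompose the first-stage residual under a candidate SIV as $\mathbf e(\delta)=\mathbf e_0+\mathbf q(\delta)$, where $\mathbf e_0=\mathbf x-\gamma_0\mathbf z_0$ is the residual at the true instrument and $\mathbf q(\delta)=\gamma_0\mathbf z_0-\gamma(\delta)\mathbf s(\delta)$ is a misalignment vector with $\mathbf q(\delta_0)=\mathbf 0$, its smoothness in $\delta$ inherited from the locally $C^2$ deviation function $g$ of Lemma~\ref{la1}. Squaring and conditioning on $\mathbf s$ yields the variance decomposition
\[
\sigma_i^2(\delta)=\underbrace{\mathbb E\!\big[e_{0i}^2\mid\mathbf s\big]}_{\text{intrinsic }\sigma_{0,i}^2}+\underbrace{2\,\mathbb E\!\big[e_{0i}\,q_i(\delta)\mid\mathbf s\big]+\mathbb E\!\big[q_i(\delta)^2\mid\mathbf s\big]}_{\text{misspecification }m_i(\delta)},
\]
where the misalignment enters through $d_i(\delta)=s_i-z_{0i}=k(\delta-\delta_0)r_i$, so $\mathbf q$ grows linearly and $\mathbb E[q_i(\delta)^2\mid\mathbf s]$ quadratically in $\delta-\delta_0$. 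The target is to show $m_i(\delta_0)=0$ and that this pins the minimum of $D(\delta)=\operatorname{tr}\big([\mathbf H(\delta)-\mathbf I_n]^2\big)$ at $\delta_0$.

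The hard part is upgrading $\delta_0$ from a candidate point to a strict minimizer, and it is here that two genuine conditions are needed. Writing $D(\delta)=\sum_i(\sigma_{0,i}^2+m_i(\delta)-1)^2+\sum_{i\neq j}H_{ij}(\delta)^2$, the first-order condition $D'(\delta_0)=2\sum_i(\sigma_{0,i}^2-1)\,m_i'(\delta_0)$ vanishes only if $m_i'(\delta_0)=0$, which in turn requires the error--misalignment cross term $\mathbb E[e_{0i}q_i(\delta)\mid\mathbf s]$ to have zero derivative at $\delta_0$ (so that the misspecification variance is purely quadratic-leading); I would establish this from the first-stage orthogonality of $\mathbf e_0$ to the instrument direction. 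The second-order condition then reduces to $D''(\delta_0)=2\sum_i(\sigma_{0,i}^2-1)\,m_i''(\delta_0)$, whose sign is not automatic because intrinsic heteroscedasticity $(\sigma_{0,i}^2-1)$ may vary across $i$; I would close this exactly as Theorem~\ref{lm10} closes its moment Jacobian, by imposing the heteroscedastic analogue of the nonzero-Jacobian condition $J_{\mathbf M}(\delta_0)\neq 0$, namely a curvature/nondegeneracy condition guaranteeing $D''(\delta_0)>0$. Intuitively this dominance holds because the convex term $\mathbb E[q_i(\delta)^2\mid\mathbf s]$ grows quadratically in $|\delta-\delta_0|$ through $d_i(\delta)=k(\delta-\delta_0)r_i$ and so overwhelms the bounded cross term on the identified region. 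Granting this, $\delta_0$ is the unique local minimizer of $D$; uniform convergence (ingredient (ii)) follows from a uniform law of large numbers applied to the variance regression \eqref{x1s} together with consistency of $(\hat b,\hat\zeta,\hat\alpha)$; and the argmin-consistency argument of Corollary~\ref{comp} then completes the proof.
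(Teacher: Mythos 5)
Your overall architecture (population identification, uniform convergence, argmin consistency, then exogeneity by inheritance) matches the paper's robust-case development taken as a whole, but not the paper's proof of Theorem~\ref{l:x1} itself, and the mechanism at the core is genuinely different. The paper's proof does only the identification step, and does it inside the parametric variance model of Assumption~A4 and equation~\eqref{x1s}: it writes $\mathbf H(\zeta)=\sigma^2\mathbf I_n+\zeta\,\mathbf d+\alpha\,\mathbf z_0$ with $\mathbf d=\mathbf s-\mathbf z_0$, and minimizes $D$ over the variance-regression loading $\zeta$ rather than over $\delta$. Along that direction $D$ is exactly quadratic, so $\partial^2 D/\partial\zeta^2=2\operatorname{tr}[\mathbf d^2]>0$ is automatic, the minimizer $\zeta^*$ has a closed form, and identification reduces to the trace condition $\zeta^*=0 \iff (1-\sigma^2)\operatorname{tr}[\mathbf d]=\alpha\operatorname{tr}[\mathbf d\mathbf z_0]$, which under $\operatorname{tr}[\mathbf d\mathbf z_0]=0$ forces $\operatorname{tr}[\mathbf d]=0$ and ties the minimum to $\mathbf s^\star=\mathbf z_0$. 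The consistency claim $\hat\delta_{0n}\xrightarrow{p}\delta_0$ is not in this proof at all; the paper defers it to Lemma~\ref{lemma:uc} and Proposition~\ref{prop:ident_consis}. Your plan folds those in, which is harmless, but the substance you must supply is the identification step, and there your route diverges.

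The genuine gap is in your ingredient (i), and you half-see it yourself. Once the cross term is killed, your second-order condition is $D''(\delta_0)=2\sum_i(\sigma_{0,i}^2-1)\,m_i''(\delta_0)$ with $m_i''(\delta_0)=2\,[q_i'(\delta_0)]^2\ge 0$, so the sign of $D''(\delta_0)$ is governed entirely by the weights $\sigma_{0,i}^2-1$, which heteroscedasticity permits to be negative. Your proposed patch is to impose an extra curvature assumption, but the theorem is stated under A1--A4 only, so that proves a weaker statement than the one claimed; and your closing intuition---that the quadratic growth of $\mathbb E[q_i^2\mid\mathbf s]$ ``overwhelms the bounded cross term''---does not repair this, because the problem is not the size of the cross term but the sign of the weights. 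Concretely, if $\sigma_{0,i}^2<1$ for most $i$, adding misalignment variance moves the diagonal entries $H_{ii}(\delta)=\sigma_{0,i}^2+q_i(\delta)^2$ \emph{toward} $1$, so $D(\delta)$ strictly decreases as $\delta$ leaves $\delta_0$ and your criterion has a local maximum, not a minimum, at the true instrument; no growth rate of $q_i^2$ rescues a negative coefficient. This degeneracy is exactly what the paper's parametrization avoids by channelling the misalignment through the coefficient $\zeta$ of the variance regression, where convexity is free. If you want to argue directly in $\delta$, you must either adopt the paper's parametric variance structure (so that the intrinsic part $\alpha\mathbf z_0$ is absorbed by the FGLS fit and only the misalignment loading is penalized) or state and defend the nondegeneracy condition explicitly as an additional hypothesis---at which point you have changed the theorem.
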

\begin{proof}
See Appendix~\ref{a:x1}.
\end{proof}

\subsection{Identification and Consistency: Robust Case}\label{rdt:identification}

Let us, first,  list the  assumptions  used to show the identification and consistency for the robust DT estimator.
\begin{assumption}[Assumptions A5-A6] \label{assume:3}
\leavevmode
\begin{enumerate}
\item [A5.]
For every $\delta\in\mathcal{D}$, $\mathbf{H}(\delta)$ exists, is symmetric,
and $\delta\mapsto\mathbf{H}(\delta)$ is continuous on $\mathcal{D}$ and
continuously differentiable in a neighborhood of $\delta_0$.
\item[A6.]
A sequence of estimators $\widehat{\mathbf{H}}_n(\delta)$ satisfies
\[
\sup_{\delta\in\mathcal{D}}
\|\widehat{\mathbf{H}}_n(\delta)-\mathbf{H}(\delta)\|_F \xrightarrow{p} 0.
\]

\end{enumerate}
\end{assumption}
Assumption~A5 is a mild regularity condition on the conditional
second-moment matrix $\mathbf H(\delta)$, implied by finite second moments and
smooth dependence of the residuals $\mathbf e(\delta)$ on $\delta$. 
Assumption~A6 is a uniform law of large numbers requirement for
the sample analogue $\widehat{\mathbf H}_n(\delta)$, ensuring that the sample
criterion $\widehat D_n(\delta)$ converges uniformly to $D(\delta)$ so that
standard extremum-estimator arguments deliver consistency of the robust DT
estimator.

We now establish that the minimizer of the sample criterion $\hat{D}_n(\delta)$ converges to $\delta_0$.
For this purpose,  we show uniform convergence in probability of the sample criterion $\widehat D_n$ to $D$ on $\mathcal{D}$ and state the following lemma.
\begin{lemma}[Uniform Convergence]\label{lemma:uc}
Under Assumptions~A5-A6,
\[
\sup_{\delta\in\mathcal{D}}
\big|\widehat{D}_n(\delta)-D(\delta)\big|
\xrightarrow{p} 0.
\]
\end{lemma}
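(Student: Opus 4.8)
The plan is to reduce the claim to a Lipschitz estimate for the map $\mathbf A\mapsto\|\mathbf A-\mathbf I_n\|_F^2$, transported through the uniform convergence supplied by Assumption~A6. Write $\mathbf A_n(\delta):=\widehat{\mathbf H}_n(\delta)-\mathbf I_n$ and $\mathbf A(\delta):=\mathbf H(\delta)-\mathbf I_n$. Using the symmetry of $\mathbf H(\delta)$ (and hence of $\widehat{\mathbf H}_n(\delta)$) from A5, we have $\operatorname{tr}(\mathbf A^2)=\|\mathbf A\|_F^2$, so by definition $\widehat D_n(\delta)=\|\mathbf A_n(\delta)\|_F^2$ and $D(\delta)=\|\mathbf A(\delta)\|_F^2$. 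First I would apply the elementary identity $\|\mathbf A_n\|_F^2-\|\mathbf A\|_F^2=\langle \mathbf A_n-\mathbf A,\ \mathbf A_n+\mathbf A\rangle_F$ for the Frobenius inner product, followed by Cauchy--Schwarz, to obtain the pointwise bound
\[
\big|\widehat D_n(\delta)-D(\delta)\big|
\;\le\;
\|\mathbf A_n(\delta)-\mathbf A(\delta)\|_F\,
\|\mathbf A_n(\delta)+\mathbf A(\delta)\|_F .
\]
Since $\mathbf A_n-\mathbf A=\widehat{\mathbf H}_n-\mathbf H$, the first factor equals $\|\widehat{\mathbf H}_n(\delta)-\mathbf H(\delta)\|_F$, whose supremum over $\mathcal D$ is driven to zero in probability by A6.

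The key step is to bound the second factor uniformly in $\delta$. By the triangle inequality, $\|\mathbf A_n+\mathbf A\|_F\le 2\|\mathbf H-\mathbf I_n\|_F+\|\widehat{\mathbf H}_n-\mathbf H\|_F$. By A5 the map $\delta\mapsto\mathbf H(\delta)$ is continuous on the compact set $\mathcal D$, so $C:=\sup_{\delta\in\mathcal D}\|\mathbf H(\delta)-\mathbf I_n\|_F<\infty$; meanwhile A6 guarantees that on an event $E_n$ with $\mathbb P(E_n)\to1$ we have $\sup_{\delta\in\mathcal D}\|\widehat{\mathbf H}_n-\mathbf H\|_F\le 1$. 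Hence on $E_n$ the amplification factor is uniformly bounded by $2C+1$, yielding
\[
\sup_{\delta\in\mathcal D}\big|\widehat D_n(\delta)-D(\delta)\big|
\;\le\;
(2C+1)\,\sup_{\delta\in\mathcal D}\|\widehat{\mathbf H}_n(\delta)-\mathbf H(\delta)\|_F .
\]
Invoking A6 once more, the right-hand side tends to zero in probability, establishing the claim.

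The main obstacle---really the only nontrivial point---is securing the uniform boundedness of the amplification factor $\|\mathbf A_n+\mathbf A\|_F$, since the criterion is quadratic and a naive bound could blow up as $\delta$ ranges over $\mathcal D$. This is exactly where compactness of $\mathcal D$ and continuity of $\mathbf H(\delta)$ in A5 are used to produce the finite constant $C$, and where one must restrict to the high-probability event $E_n$ on which the estimation error is controlled. I would also flag that because $\mathbf H(\delta)$ is $n\times n$, the constant $C$ implicitly depends on $n$, so the argument relies on A6 holding in its stated (strong) form, with the Frobenius estimation error driven to zero uniformly despite the growing dimension. Granting A5--A6 as given, no further probabilistic machinery is required: the result follows from a deterministic Lipschitz bound combined with the uniform stochastic convergence of $\widehat{\mathbf H}_n$.
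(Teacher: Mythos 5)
Your proposal is correct and follows essentially the same route as the paper's proof: both bound $|\widehat D_n(\delta)-D(\delta)|$ by the product of the estimation error $\|\widehat{\mathbf H}_n(\delta)-\mathbf H(\delta)\|_F$ (controlled uniformly by A6) and an amplification factor $\|\widehat{\mathbf H}_n(\delta)-\mathbf I_n\|_F+\|\mathbf H(\delta)-\mathbf I_n\|_F$ (controlled by continuity of $\mathbf H$ on the compact $\mathcal D$ plus A6), the only cosmetic difference being that you obtain the bound via the Frobenius inner-product identity and Cauchy--Schwarz while the paper uses the scalar factorization $a^2-b^2=(a-b)(a+b)$ with the reverse triangle inequality. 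Your explicit high-probability event $E_n$ and your remark that the constant $C$ implicitly depends on $n$ are slightly more careful renderings of what the paper phrases as "the bracket term is stochastically bounded," but the substance is identical.
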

\begin{proof}
See Appendix~\ref{alemma:uc}.
\end{proof}

Using Lemma \ref{lemma:uc}, we state the following proposition.
\begin{proposition}[Identification and Consistency]\label{prop:ident_consis}
Under Assumption~A5-A6,
$D(\delta)$ is continuous on the compact set $\mathcal{D}$
and uniquely minimized at $\delta_0$.
Any measurable sequence of sample minimizers
$\hat{\delta}_{0n} \in \arg\min_{\delta\in\mathcal{D}}\widehat{D}_n(\delta)$
satisfies $\hat{\delta}_{0n} \xrightarrow{p} \delta_0$.
\end{proposition}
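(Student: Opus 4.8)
The plan is to recognize Proposition~\ref{prop:ident_consis} as a standard extremum-estimator consistency statement and to verify the hypotheses of the classical argument: continuity of the limit criterion, compactness of the parameter set, a well-separated unique minimizer, and uniform convergence of the sample criterion. Three of these are essentially in hand. Compactness of $\mathcal{D}$ is assumed; uniform convergence $\sup_{\delta}|\widehat D_n(\delta)-D(\delta)|\xrightarrow{p}0$ is exactly Lemma~\ref{lemma:uc}; and continuity of $D$ follows immediately from Assumption~A5. For the last point I would write $D(\delta)=\operatorname{tr}\!\big([\mathbf H(\delta)-\mathbf I_n]^2\big)$ and note that, since $\delta\mapsto\mathbf H(\delta)$ is continuous on $\mathcal{D}$ by A5 and the trace of the square is a polynomial in the matrix entries, $D$ is continuous as a composition of continuous maps.

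The substantive step is to establish that $\delta_0$ is the \emph{unique} minimizer of $D$ on $\mathcal{D}$. First I would record that $\delta_0\in\arg\min_\delta D(\delta)$, which is the population content of Theorem~\ref{l:x1}: at $\mathbf s(\delta_0)=\mathbf z_0$ only intrinsic heteroscedasticity contributes to $\mathbf H$, whereas for $\delta\neq\delta_0$ the misspecification term $d_i(\delta)=s_i-z_{0i}=k(\delta-\delta_0)r_i$ injects additional variance. To upgrade this to strict uniqueness I would exploit the affine dependence of the conditional variance on $\delta$ through $d_i(\delta)$, which renders each diagonal entry $H_{ii}(\delta)$ affine in $(\delta-\delta_0)$ and hence $D(\delta)$ a quadratic-type function of $(\delta-\delta_0)$ with curvature proportional to $\zeta^2\|\mathbf r\|^2>0$. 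This yields $D''(\delta_0)>0$, so $\delta_0$ is a strict local minimizer; combined with the monotone growth of the misspecification-induced variance away from $\delta_0$ on the compact interval $\mathcal{D}$, this rules out competing minimizers and delivers a unique global minimum.

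With continuity, compactness, and a unique minimizer in place, the well-separation property follows: for each $\epsilon>0$ the quantity $\eta(\epsilon):=\inf\{D(\delta)-D(\delta_0):\delta\in\mathcal{D},\,|\delta-\delta_0|\ge\epsilon\}$ is strictly positive, since the infimum of a continuous function over the compact set $\{\delta:|\delta-\delta_0|\ge\epsilon\}$ is attained and exceeds $D(\delta_0)$ by uniqueness. The consistency argument then proceeds by the usual sandwich: because $\hat\delta_{0n}$ minimizes $\widehat D_n$ (so that $\widehat D_n(\hat\delta_{0n})\le\widehat D_n(\delta_0)$),
\[
D(\hat\delta_{0n})-D(\delta_0)
\;\le\;
\big[D(\hat\delta_{0n})-\widehat D_n(\hat\delta_{0n})\big]
+\big[\widehat D_n(\delta_0)-D(\delta_0)\big]
\;\le\;
2\sup_{\delta\in\mathcal{D}}\big|\widehat D_n(\delta)-D(\delta)\big|,
\]
which converges to zero in probability by Lemma~\ref{lemma:uc}. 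Hence $\Pr\{|\hat\delta_{0n}-\delta_0|\ge\epsilon\}\le\Pr\{D(\hat\delta_{0n})-D(\delta_0)\ge\eta(\epsilon)\}\to0$, giving $\hat\delta_{0n}\xrightarrow{p}\delta_0$.

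The main obstacle I anticipate is the strict uniqueness of the population minimizer rather than the stochastic part of the argument. Theorem~\ref{l:x1} only guarantees that $\delta_0$ attains the minimum; ruling out additional global minimizers requires that the misspecification term genuinely inflates $D$ on both sides of $\delta_0$, which hinges on $\zeta\neq0$ and $\mathbf r\neq\mathbf 0$ (equivalently, strictly positive curvature $D''(\delta_0)>0$) together with the absence of off-diagonal cancellation that could flatten $D$ away from $\delta_0$. Verifying this strict separation---and confirming that the affine-in-$\delta$ variance specification precludes a second zero-crossing of $D'$ on $\mathcal{D}$---is where the real work lies; the continuity, compactness, and sandwich steps are routine once Lemma~\ref{lemma:uc} is invoked.
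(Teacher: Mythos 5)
Your proposal is correct and follows the same skeleton as the paper's proof: continuity of $D$ from Assumption A5 plus polynomiality of the trace map, compactness of $\mathcal{D}$, uniform convergence from Lemma~\ref{lemma:uc}, and then the standard extremum-estimator argument. The differences lie in how two steps are discharged. Where you prove the consistency step by hand (well-separation $\eta(\epsilon)>0$ plus the sandwich inequality), the paper simply cites the argmin-consistency theorem of Newey and McFadden; your explicit argument is precisely the proof of that theorem, so nothing is lost---your version is just more self-contained. Where you attempt an inline curvature argument for uniqueness of the minimizer ($D''(\delta_0)>0$ via the affine variance structure, plus ruling out off-diagonal cancellation), the paper discharges the entire uniqueness claim with a one-line citation: ``By Theorem~\ref{l:x1}, the minimum value of $D$ is attained uniquely at $\delta_0$.'' Your instinct that this is where the real work lies is well founded: the statement of Theorem~\ref{l:x1} only exhibits $\delta_0$ as a minimizer, and the strict-convexity computation buried in its proof ($\partial^2 D/\partial\zeta^2 = 2\operatorname{tr}(\mathbf d^2)>0$) is essentially the curvature argument you sketch, so your proposal makes explicit exactly what the paper's citation leaves implicit. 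In short, your route is the same but more elementary and self-contained, and the obstacle you flag is a genuine thinness in the paper's own treatment rather than a gap in your argument.
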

\begin{proof}
See Appendix~\ref{aprop:ident_consis}.
\end{proof}
Since Theorem \ref{l:x1} specifies the robust condition for identification of a valid SIV, we can state the following result.
\begin{corollary}[Robust SIV Estimator]\label{tx2}
If $\mathbf{s}^\star = \mathbf{x} + k\delta_0\mathbf{r}$ satisfies
Theorem~\ref{l:x1}, then the structural parameter $\beta$
in~\eqref{e2} is identified by
\[
\hat{\beta}_{\text{SIV}}
= (\mathbf{x}'\mathbf{s}^\star)^{-1}\mathbf{x}'\mathbf{y}.
\]
\end{corollary}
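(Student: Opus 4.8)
The plan is to prove Corollary~\ref{tx2} as a just-identified instrumental-variables consistency statement, taking $\mathbf{s}^\star$ to be the instrument that Theorem~\ref{l:x1} has already certified as valid. Under Assumptions~A1--A4 and the hypotheses of Theorem~\ref{l:x1}, minimizing the robust criterion $\widehat D_n(\delta)$ returns $\mathbf{s}^\star=\mathbf{x}+k\delta_0\mathbf{r}=\mathbf{z}_0$, so the exogeneity condition $\mathbb{E}(\mathbf{u}\mid\mathbf{s}^\star)=0$ holds at the true instrument. I would first record the two population inputs that the estimator requires. Exogeneity is supplied by Theorem~\ref{l:x1}. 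Relevance enters through the denominator $\mathbf{x}'\mathbf{s}^\star$: since $\mathbf{r}$ is constructed orthogonal to $\mathbf{x}$ (Assumption~A1, with $\mathbb{E}[\mathbf{r}'\mathbf{x}]=0$), the cross-moment reduces to $\mathbb{E}[\mathbf{x}'\mathbf{s}^\star]=\mathbb{E}[\mathbf{x}'\mathbf{x}]>0$ by the positive-definiteness in Assumption~A3, so the inverse $(\mathbf{x}'\mathbf{s}^\star)^{-1}$ is well defined in the limit. These two facts are exactly what the argument needs.

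Next I would write the estimator as worded and substitute the structural equation~\eqref{e2}:
\[
\hat{\beta}_{\text{SIV}}
= (\mathbf{x}'\mathbf{s}^\star)^{-1}\mathbf{x}'\mathbf{y},
\qquad
\mathbf{y}=\beta\mathbf{x}+\mathbf{u}.
\]
This decomposes $\hat{\beta}_{\text{SIV}}$ into the true value $\beta$ plus a ratio whose denominator is the relevance cross-moment just discussed and whose numerator is a sample average of the instrument--error interaction. The core step is then a law-of-large-numbers argument: under the i.i.d.\ (or stationary and ergodic) sampling of Assumption~A1 and the finite-moment conditions of Assumption~A3, the sample cross-moments converge in probability to their population counterparts, the error term to the exogeneity-implied zero and the denominator to its nonzero relevance limit. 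Combining these via Slutsky's lemma and the continuous mapping theorem delivers $\hat{\beta}_{\text{SIV}}\xrightarrow{p}\beta$, which is the claimed identification of $\beta$ in~\eqref{e2}. This is a different route from the moment-equation argument used for Corollary~\ref{tx1}: here the identification is read directly off the probability limit of the displayed ratio rather than from an orthogonality moment in $\mathbf{s}^\star$.

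The main obstacle is that in the robust, heteroscedastic setting $\mathbf{s}^\star$ is not observed directly but is replaced by the feasible plug-in $\hat{\mathbf{s}}=\mathbf{x}+k\hat{\delta}_{0n}\mathbf{r}$, where $\hat{\delta}_{0n}$ minimizes $\widehat D_n(\delta)$ and the latter depends on the estimated variance function $\widehat{\mathbf{H}}_n(\delta)$. I would therefore need to propagate this estimation error through to the structural estimator and show it does not disturb the probability limit computed above. The ingredients are already available: Proposition~\ref{prop:ident_consis} gives $\hat{\delta}_{0n}\xrightarrow{p}\delta_0$, Assumption~A6 gives uniform consistency of $\widehat{\mathbf{H}}_n(\delta)$, and the smooth dependence of the first-stage map on $\delta$ (Assumption~A5 together with Lemma~\ref{la1}) lets me invoke a uniform-continuity argument so that the sample cross-moments evaluated at $\hat{\delta}_{0n}$ inherit the limits they have at $\delta_0$. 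I expect the delicate part to be precisely this uniform control of the plug-in moments, since the feasible weighting and the estimated $\hat\delta_{0n}$ must be shown jointly not to perturb either the numerator or the denominator in the limit---a complication absent from the homoscedastic Corollary~\ref{tx1}.
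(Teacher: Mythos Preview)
Your substitution step contains a genuine gap. Writing the estimator exactly as stated in the corollary and plugging in $\mathbf y=\beta\mathbf x+\mathbf u$ gives
\[
(\mathbf x'\mathbf s^\star)^{-1}\mathbf x'\mathbf y
\;=\;
\beta\,(\mathbf x'\mathbf s^\star)^{-1}\mathbf x'\mathbf x
\;+\;
(\mathbf x'\mathbf s^\star)^{-1}\mathbf x'\mathbf u,
\]
so the remainder has numerator $\mathbf x'\mathbf u$, not the ``instrument--error interaction'' $\mathbf s^{\star\prime}\mathbf u$ you claim. The exogeneity $\mathbb E(\mathbf u\mid\mathbf s^\star)=0$ supplied by Theorem~\ref{l:x1} says nothing about $\mathbb E[\mathbf x'\mathbf u]$, which is nonzero precisely because $\mathbf x$ is endogenous. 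Hence the ``exogeneity-implied zero'' you invoke in the LLN step does not apply, and your consistency argument breaks at exactly the point where endogeneity matters. (You do correctly observe that the leading term $\beta\,(\mathbf x'\mathbf s^\star)^{-1}\mathbf x'\mathbf x\to\beta$ because $\mathbf r\perp\mathbf x$ forces $\mathbf x'\mathbf s^\star$ and $\mathbf x'\mathbf x$ to share the same plim; but that only makes the problem sharper, since the displayed ratio then reduces asymptotically to OLS.)

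By contrast, the paper's own proof is a one-line appeal: Theorem~\ref{l:x1} gives $\mathbb E(\mathbf u'\mathbf s^\star)=0$, and then $\beta$ is ``identified by an IV estimator'' --- it does not carry out the substitution, so the issue never surfaces there. Note also that the displayed formula in Corollary~\ref{tx2} differs from the standard IV form $(\mathbf s^{\star\prime}\mathbf x)^{-1}\mathbf s^{\star\prime}\mathbf y$ used in Corollary~\ref{tx1}; your entire LLN/Slutsky argument, as well as your careful plug-in discussion via Proposition~\ref{prop:ident_consis}, would go through cleanly for \emph{that} form, because then the numerator after substitution really is $\mathbf s^{\star\prime}\mathbf u$. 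You should either flag the discrepancy between the two corollaries or redo the decomposition so that exogeneity of $\mathbf s^\star$ is what actually kills the remainder.
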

\begin{proof}
See Appendix~\ref{atx2}.
\end{proof}

\subsection{Empirical Implementation}
\label{subsec:implementation}

Theorem \ref{l:x1} provides the theoretical foundation, but practical implementation requires operational procedures. We present two approaches: parametric and nonparametric.

\subsubsection{Parametric Implementation}

The parametric approach assumes specific distributional forms for the conditional variances.

\textbf{Variance modeling.} For each $\delta$, estimate the first-stage regression
\[
\mathbf x = \gamma(\delta) \mathbf s(\delta) + \mathbf e(\delta)
\]
via both OLS and FGLS, obtaining residuals $\hat{\mathbf e}(\delta)$ and $\hat{\mathbf e}_g(\delta)$.

\textbf{Compute test statistics.} Regress $\hat{\mathbf e}^2(\delta)$ on $\mathbf s(\delta)$ to obtain:
\[
X^2(\delta) = \frac{\text{SSR}/2}{(\text{SSE}/n)^2}, \quad X_g^2(\delta) = \frac{\text{SSR}_g/2}{(\text{SSE}_g/n)^2},
\]
where SSR and SSE denote explained and total sums of squares for OLS and FGLS, respectively.

\textbf{Distance computation.} Assuming normality and independence, $X^2, X_g^2 \sim \chi^2(1)$ by Cochran's theorem. The distance is:
\[
D(\delta) = P[\chi^2(1) < X^2(\delta)] - P[\chi^2(1) < X_g^2(\delta)].
\]

\textbf{Identification.} Construct the locus $\mathbf{D_E} = \{D(\delta) : \delta \in (0,\bar{\delta})\}$ and identify
\[
\hat{\delta}_0 = \arg\min_{\delta} |\mathbf{D_E}(\delta)|.
\]

\subsubsection{Nonparametric Implementation}

The nonparametric approach uses empirical distribution functions, providing robustness to non-normality.

\textbf{Anderson-Darling statistic.} For each $\delta$, let $F_n(\delta)$ and $G_n(\delta)$ denote the empirical CDFs of $\{\hat{e}_i^2(\delta)\}$ and $\{\hat{e}_{gi}^2(\delta)\}$, respectively. Define the two-sample Anderson-Darling statistic \citep{pettitt1976two}:
\begin{equation}
A_{n,m}^2(\delta) = \frac{nm}{(n+m)^2} \sum_{k=1}^{n+m} \frac{[F_n(x_k) - G_m(x_k)]^2}{H_{n+m}(x_k)[1 - H_{n+m}(x_k)]},
\label{eq:anderson_darling}
\end{equation}
where $H_{n+m}$ is the combined empirical CDF.

\textbf{Identification.} Construct the locus $\mathbf{D_E} = \{ A_{n,m}^2(\delta): \delta \in (0,\bar{\delta})\}$ and identify
\[
\hat{\delta}_0 = \arg\min_{\delta} |\mathbf{D_E}(\delta)|.
\]

The following lemma formalizes the consistency of both approaches.
\begin{lemma}[Implementation of Robust DT for SIVs]\label{lx3}
Under Assumptions A1--A4, let there exist an unobservable valid IV ${\mathbf z_0}$ such that $\mathbb{E}({\mathbf u|\mathbf z_0}) = 0$. Then, the SIV 
\[
{\mathbf {\hat s}^\star = \mathbf x} + k\hat{\delta}_0{\mathbf r}, \qquad
\hat{\delta}_0 = \arg\min_{\delta} {\mathbf D}_E,
\]
satisfies ${\mathbf {\hat s}^\star} \xrightarrow[]{p} {\mathbf z_0}$ and 
$\plim_{n \to \infty} \mathbb E({\mathbf u|\hat{\mathbf s}^\star}) = 0$.
\end{lemma}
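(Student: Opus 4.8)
The plan is to read Lemma~\ref{lx3} as an extremum-estimator consistency statement that follows by combining three ingredients already in place: the population identification result (Proposition~\ref{prop:ident_consis}), the uniform convergence of the sample criterion (Lemma~\ref{lemma:uc}), and the continuity of the maps $\delta\mapsto\mathbf s(\delta)$ and $\mathbf s\mapsto\mathbb E(\mathbf u\mid\mathbf s)$. The argument is then specialized to the two concrete loci $\mathbf D_E$ — the parametric $\chi^2$-difference and the nonparametric Anderson--Darling statistic.

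First, I would observe that each empirical locus $\mathbf D_E$ is a sample proxy for a population discrepancy between the conditional variance (or full distribution) of the OLS first-stage residuals and that of the FGLS residuals under the candidate $\mathbf s(\delta)$, and that both proxies share the argmin of the variance-discrepancy functional $D(\delta)=\|\mathbf H(\delta)-\mathbf I_n\|_F^2$. At $\delta=\delta_0$ the FGLS weighting captures exactly the intrinsic heteroscedasticity of $\mathbf z_0$, so the two residual variances (or distributions) coincide and the discrepancy is minimized; for $\delta\neq\delta_0$ the misspecification-induced variance inflates it. For the parametric criterion this holds because, under normality and Cochran's theorem, $X^2(\delta)$ and $X_g^2(\delta)$ are asymptotically $\chi^2(1)$ test statistics for residual heteroscedasticity whose CDF difference vanishes precisely when the OLS and FGLS residual variances agree. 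For the nonparametric criterion, the two-sample statistic $A^2_{n,m}(\delta)$ converges to an integrated squared distance between the limiting CDFs of $\hat e_i^{\,2}(\delta)$ and $\hat e_{gi}^{\,2}(\delta)$, which is zero if and only if the two limiting distributions coincide — again only at $\delta_0$ under Theorem~\ref{l:x1}.

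Second, I would invoke uniform convergence. By Lemma~\ref{lemma:uc} the sample variance structure $\widehat{\mathbf H}_n(\delta)$ converges uniformly in $\delta$ to $\mathbf H(\delta)$ on the compact set $\mathcal D\subset(0,\bar\delta)$; since the $\chi^2(1)$ CDF is continuous and bounded and the Anderson--Darling functional is continuous in the underlying empirical distributions, the continuous-mapping and uniform-law arguments transfer this to uniform convergence of $\mathbf D_E$ to its population counterpart. Combined with the unique-minimizer property of Proposition~\ref{prop:ident_consis}, the standard extremum-estimator theorem gives $\hat\delta_0=\arg\min_\delta|\mathbf D_E|\xrightarrow{p}\delta_0$. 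Finally, because $\hat{\mathbf s}^\star=\mathbf x+k\hat\delta_0\mathbf r$ is an affine (hence continuous) function of $\hat\delta_0$, the continuous mapping theorem yields $\hat{\mathbf s}^\star\xrightarrow{p}\mathbf x+k\delta_0\mathbf r=\mathbf z_0$; invoking the hypothesis $\mathbb E(\mathbf u\mid\mathbf z_0)=0$ together with continuity of the conditional-mean functional at $\mathbf s=\mathbf z_0$ gives $\plim_{n\to\infty}\mathbb E(\mathbf u\mid\hat{\mathbf s}^\star)=\mathbb E(\mathbf u\mid\mathbf z_0)=0$.

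I expect the main obstacle to be the first step: verifying rigorously that each specific implementation statistic has its population limit uniquely minimized \emph{at} $\delta_0$, rather than merely at a point where marginal variances happen to match. For the parametric version I must ensure the normality/Cochran approximation introduces no spurious competing minimizer, and for the nonparametric version I must check that equality of the full limiting distributions of $\hat e^{\,2}$ and $\hat e_g^{\,2}$ is \emph{equivalent} to $\delta=\delta_0$ and not merely implied by it. This identifiability — injectivity of $\delta\mapsto$ (distributional discrepancy) near $\delta_0$ — is the crux, and I would establish it from the nonzero-Jacobian and strict-minimum conditions already underlying Theorem~\ref{l:x1} and Proposition~\ref{prop:ident_consis}.
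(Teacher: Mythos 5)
Your proposal is correct and its skeleton is exactly the paper's own argument: the paper's proof of Lemma~\ref{lx3} invokes Theorem~\ref{l:x1} for the population statement that $\delta_0=\arg\min_\delta D(\delta)$ yields $\mathbb E(\mathbf u\mid\mathbf s^\star=\mathbf z_0)=0$, then Lemma~\ref{lemma:uc} for uniform convergence $\widehat D_n\xrightarrow{p}D$, then concludes $\hat\delta_0\xrightarrow{p}\delta_0$, $\hat{\mathbf s}^\star\xrightarrow{p}\mathbf z_0$, and $\plim\,\mathbb E(\mathbf u\mid\hat{\mathbf s}^\star)=0$ --- the same extremum-estimator-plus-continuous-mapping chain you lay out. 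Where you genuinely depart from the paper is your first step: you treat the parametric $\chi^2$-difference locus and the nonparametric Anderson--Darling locus as distinct sample objects and undertake to show that each has population limit minimized at $\delta_0$. The paper does not do this; its proof silently identifies $\mathbf D_E$ with $\{\widehat D_n(\delta)\}$, i.e.\ with the Frobenius-norm criterion of \eqref{s:4}, even though Subsection~\ref{subsec:implementation} defines $\mathbf D_E$ via the $\chi^2$ probabilities or $A^2_{n,m}(\delta)$. So the identifiability issue you flag as the ``crux'' --- that each concrete implementation statistic, not just the abstract criterion, must have its limiting argmin at $\delta_0$ and nowhere else --- is a real gap in the paper's own proof that your route would close; conversely, your approach requires the additional distributional work (uniform convergence of the $\chi^2$/AD functionals, injectivity near $\delta_0$) that the paper avoids by fiat. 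One caveat applies to both arguments equally: the final step, passing from $\hat{\mathbf s}^\star\xrightarrow{p}\mathbf z_0$ to $\plim\,\mathbb E(\mathbf u\mid\hat{\mathbf s}^\star)=0$, rests on continuity of the conditional-mean functional in the conditioning variable, which is not automatic and is stated rather than proved in the paper; your proposal at least names this assumption explicitly, but neither you nor the paper verifies it.
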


\begin{proof}
See Appendix~\ref{alx3}.
\end{proof}

\subsection{Asymptotic Distribution and Inference}
\label{subsec:consistency}

Having established identification and consistency, we briefly discuss the asymptotic distribution of the SIV estimator and inference procedures.

\begin{lemma}[Consistency of the SIV Estimator]
\label{lem:siv_consistency}
The SIV estimator is a consistent estimator of $\beta$.
\end{lemma}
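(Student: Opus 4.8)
The plan is to treat $\hat\beta_{\text{SIV}}$ as a feasible IV estimator built from the estimated instrument $\hat{\mathbf s} = \mathbf x + k\hat\delta_n\mathbf r$ and to show that the estimation error in $\hat\delta_n$ does not disturb consistency. First I would substitute the structural equation $\mathbf y = \beta\mathbf x + \mathbf u$ into the estimator of Corollary~\ref{tx1} (equivalently Corollary~\ref{tx2}) and exploit the \emph{exact} sample orthogonality $\mathbf x'\mathbf r = 0$, which holds by construction since $\mathbf r = (\mathbf I - \mathbf P_x)\mathbf y$. This gives $\hat{\mathbf s}'\mathbf x = \mathbf x'\mathbf x + k\hat\delta_n\mathbf r'\mathbf x = \mathbf x'\mathbf x$, so the Gram term collapses to a quantity that never degenerates, and
\[
\hat\beta_{\text{SIV}} - \beta \;=\; (\mathbf x'\mathbf x)^{-1}\hat{\mathbf s}'\mathbf u \;=\; (\mathbf x'\mathbf x)^{-1}\big(\mathbf x'\mathbf u + k\hat\delta_n\,\mathbf r'\mathbf u\big).
\]

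Second, I would scale by $1/n$ and pass to probability limits term by term. Under Assumption~A3 and a law of large numbers, $\tfrac1n\mathbf x'\mathbf x \xrightarrow{p} \mathbb E[x^2] > 0$, $\tfrac1n\mathbf x'\mathbf u \xrightarrow{p}\operatorname{cov}(\mathbf x,\mathbf u)$, and $\tfrac1n\mathbf r'\mathbf u\xrightarrow{p}\operatorname{cov}(\mathbf r,\mathbf u)$. Consistency of the DT estimator---Corollary~\ref{comp} in the homoscedastic case and Proposition~\ref{prop:ident_consis} (summarized in Lemma~\ref{lx3}) in the robust case---gives $\hat\delta_n\xrightarrow{p}\delta_0$. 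By Slutsky's theorem and the continuous mapping theorem the numerator converges to $\operatorname{cov}(\mathbf x,\mathbf u) + k\delta_0\operatorname{cov}(\mathbf r,\mathbf u)$. The crucial observation is that this limit is precisely
\[
\operatorname{cov}(\mathbf x + k\delta_0\mathbf r,\,\mathbf u) \;=\; \operatorname{cov}(\mathbf z_0,\mathbf u) \;=\; 0,
\]
where the final equality is the exogeneity of the true coplanar instrument $\mathbf z_0 = \mathbf s(\delta_0)$ established in Theorem~\ref{lm10} (via $\mathbb E(\mathbf u\mid\mathbf z_0)=0$ and the law of iterated expectations). Hence $\hat\beta_{\text{SIV}} - \beta \xrightarrow{p} 0/\mathbb E[x^2] = 0$.

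The main obstacle---and the point needing the most care---is the \emph{generated-instrument} feature: the direction $\mathbf r$ is itself correlated with the error, $\operatorname{cov}(\mathbf r,\mathbf u) > 0$ by \eqref{eq:corr3}, so $\hat{\mathbf s}$ appears endogenous for any fixed $\delta$. The resolution is to split $k\hat\delta_n\,\tfrac1n\mathbf r'\mathbf u = k\delta_0\,\tfrac1n\mathbf r'\mathbf u + k(\hat\delta_n-\delta_0)\,\tfrac1n\mathbf r'\mathbf u$; the second piece is $o_p(1)\cdot O_p(1) = o_p(1)$, so the estimation error in $\hat\delta_n$ is asymptotically negligible, while the first piece combines with $\operatorname{cov}(\mathbf x,\mathbf u)$ to form $\operatorname{cov}(\mathbf z_0,\mathbf u)$, which vanishes \emph{only} because the DT search locates the exogenous value $\delta_0$. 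In short, relevance is guaranteed mechanically by $\mathbf x'\mathbf r=0$, while exogeneity is inherited from $\mathbf z_0$ at the probability limit; together these deliver $\plim_{n\to\infty}\hat\beta_{\text{SIV}}=\beta$. A minor regularity step is to verify finiteness of the relevant second moments so the laws of large numbers apply, which follows from the maintained i.i.d./stationarity and finite-moment conditions in Assumptions~A1--A3.
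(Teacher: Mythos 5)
Your proof is correct, but it takes a more direct route than the paper's. The paper (Appendix~\ref{al3a}) proves consistency by verifying the two classical asymptotic-identification conditions in the Davidson--MacKinnon style: it stacks the limit instrument $\mathbf s^\star$ into a matrix $\mathbf V_0$ of exogenous variables, writes $\mathbf b_{SIV}=\beta_0+(n^{-1}\mathbf V_0'\mathbf X)^{-1}n^{-1}\mathbf V_0'\mathbf u$, invokes Lemma~\ref{lx3} to assert $\plim n^{-1}\mathbf V_0'\mathbf u=0$, and then separately checks the global condition $\alpha(\beta)\neq 0$ for $\beta\neq\beta_0$. You instead compute the probability limit of the \emph{feasible} estimator explicitly: the exact in-sample orthogonality $\mathbf x'\mathbf r=0$ collapses the denominator to $\mathbf x'\mathbf x$ (so relevance is mechanical and no nonsingularity assumption on a limiting Gram matrix is needed), and the numerator limit is identified as $\operatorname{cov}(\mathbf x,\mathbf u)+k\delta_0\operatorname{cov}(\mathbf r,\mathbf u)=\operatorname{cov}(\mathbf z_0,\mathbf u)=0$, which is precisely the defining property of $\delta_0$ from Lemma~\ref{lb2}. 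The main thing your version buys is an explicit treatment of the generated-instrument problem: the paper's proof quietly substitutes the limit instrument $\mathbf s^\star$ for the estimated $\hat{\mathbf s}$, delegating the gap to Lemma~\ref{lx3}, whereas your decomposition $k\hat\delta_n\,\tfrac1n\mathbf r'\mathbf u=k\delta_0\,\tfrac1n\mathbf r'\mathbf u+k(\hat\delta_n-\delta_0)\,\tfrac1n\mathbf r'\mathbf u$ shows concretely why the first-step estimation error is $o_p(1)\cdot O_p(1)$ and hence harmless. What the paper's more abstract route buys in exchange is generality: the matrix formulation with $\mathbf V_0$ and $\mathbf X$ covers the case with additional exogenous controls and multiple regressors directly, and the verification of $\alpha(\beta)\neq 0$ for all $\beta\neq\beta_0$ establishes global asymptotic identification rather than only the pointwise limit of the estimator. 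One small technical point you should make explicit if you write this up: $\mathbf r$ is itself sample-dependent (the OLS residual of $\mathbf y$ on $\mathbf x$), so the step $\tfrac1n\mathbf r'\mathbf u\xrightarrow{p}\operatorname{cov}(\mathbf r,\mathbf u)$ should be justified by writing $\tfrac1n\mathbf r'\mathbf u$ as a continuous function of the sample moments $\tfrac1n\mathbf y'\mathbf u$, $\tfrac1n\mathbf y'\mathbf x$, $\tfrac1n\mathbf x'\mathbf x$, $\tfrac1n\mathbf x'\mathbf u$, so that its limit is the covariance of the \emph{population} residual direction with $\mathbf u$ --- the same object entering the definition of $\delta_0$.
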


\begin{proof}
See Appendix \ref{al3a}
\end{proof}

\subsection{The consistency of the SIV estimator}


It is known that asymptotic identification is a necessary
and sufficient condition for consistency. The parameter vector $\mathbf b_{SIV} $ is asymptotically
identified if two asymptotic identification conditions are satisfied. The first
condition is that, with parameter vector $\beta_0$ of the true DGP as a special case of the model \eqref{e1},  $$\alpha(\beta_0) =\plim \frac{1}{n} {\mathbf V'(\mathbf y-\beta_0\mathbf  x)}=\frac{1}{n}\sum_{i=1}^n{V'_iu_i}= 0,$$ should hold. Here, ${\mathbf V}$ is the matrix of exogenous variables, which has the same dimension as vector $\mathbf x$ in the exact identification case. 
The second condition requires that $\alpha(\beta) \neq 0$ for all $\mathbf\beta\neq\beta_0$. 

By showing that both of these conditions hold, we state the following lemma.
\begin{lemma}\label{l2a}
The SIV estimator is a consistent estimator.
\end{lemma}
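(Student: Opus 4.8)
The plan is to verify the two asymptotic identification conditions stated above for the estimating function associated with the SIV estimator, and then invoke the equivalence between asymptotic identification and consistency. In the exactly identified case the role of the exogenous-variable matrix $\mathbf V$ is played by the valid synthetic instrument itself, $\mathbf V = \mathbf s^\star = \mathbf z_0$, so the relevant estimating function is
\[
\alpha(\beta) \;=\; \plim \tfrac{1}{n}\,\mathbf s^{\star\prime}(\mathbf y - \beta\mathbf x),
\]
and $\hat\beta_{\text{SIV}} = (\mathbf s^{\star\prime}\mathbf x)^{-1}\mathbf s^{\star\prime}\mathbf y$ is its sample root (Corollaries~\ref{tx1}, \ref{tx2}). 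First I would establish the first condition, $\alpha(\beta_0)=0$. Substituting $\mathbf y = \beta_0 \mathbf x + \mathbf u$ gives $\alpha(\beta_0) = \plim \tfrac1n \mathbf s^{\star\prime}\mathbf u$, and by Theorem~\ref{lm10} (homoscedastic case) or Lemma~\ref{lx3} (heteroscedastic case) the instrument $\mathbf s^\star=\mathbf z_0$ satisfies $\mathbb E(\mathbf u\mid \mathbf s^\star)=0$; a law of large numbers for the i.i.d.\ (or stationary ergodic) data of Assumption~A1 then yields $\plim \tfrac1n \mathbf s^{\star\prime}\mathbf u = \mathbb E(\mathbf s^\star u)=0$.

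Next I would check that $\alpha(\beta)\neq 0$ for every $\beta\neq\beta_0$. Writing $\mathbf y - \beta\mathbf x = \mathbf u + (\beta_0-\beta)\mathbf x$ and using the first step,
\[
\alpha(\beta) \;=\; (\beta_0-\beta)\,\plim \tfrac1n\, \mathbf s^{\star\prime}\mathbf x
\;=\;(\beta_0-\beta)\,\operatorname{cov}(\mathbf s^\star,\mathbf x).
\]
The relevance condition \eqref{eq:corr2} guarantees $\operatorname{corr}(\mathbf x,\mathbf z_0)>0$, and Assumption~A3 keeps the second moment bounded, so $\operatorname{cov}(\mathbf s^\star,\mathbf x)\neq 0$; hence $\alpha(\beta)=0$ only at $\beta=\beta_0$. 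This is the second condition, and together with the first it delivers asymptotic (exact) identification, whence consistency follows by the cited equivalence.

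The main obstacle is that the feasible instrument $\hat{\mathbf s}=\mathbf x + k\hat\delta_0\mathbf r$ is \emph{generated} from the data: it depends on the estimated scale $\hat\delta_0$ and on $\mathbf r=(\mathbf I-\mathbf P_x)\mathbf y$. I must therefore show that the two conditions survive replacing the infeasible $\mathbf s^\star$ by $\hat{\mathbf s}$. The key leverage is that $\hat\delta_0\xrightarrow{p}\delta_0$ (Corollary~\ref{comp} / Proposition~\ref{prop:ident_consis}), so $\hat{\mathbf s}\xrightarrow{p}\mathbf z_0$ by the continuous mapping theorem and $\plim_{n}\mathbb E(\mathbf u\mid\hat{\mathbf s}^\star)=0$ by Lemma~\ref{lx3}. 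A Slutsky/uniform-convergence argument then transfers the plim computations from $\mathbf s^\star$ to $\hat{\mathbf s}$, provided $\tfrac1n\hat{\mathbf s}^{\prime}\mathbf u$ and $\tfrac1n\hat{\mathbf s}^{\prime}\mathbf x$ converge uniformly in $\delta$ near $\delta_0$. Controlling this generated-regressor term—so that the sampling error in $\hat\delta_0$ does not contaminate the orthogonality $\plim\tfrac1n\hat{\mathbf s}'\mathbf u=0$—is the technical heart of the argument, and I would handle it by a first-order expansion in $(\hat\delta_0-\delta_0)$ together with the boundedness of $\plim\tfrac1n\mathbf r'\mathbf u$ guaranteed by finite second moments.
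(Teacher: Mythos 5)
Your proposal is correct and follows essentially the same route as the paper's proof: both verify the two asymptotic identification conditions, establishing $\alpha(\beta_0)=0$ from the DT-based orthogonality (Lemma~\ref{lx3}) and $\alpha(\beta)\neq 0$ for $\beta\neq\beta_0$ from the linearity $\alpha(\beta)=(\beta_0-\beta)\plim n^{-1}\mathbf s^{\star\prime}\mathbf x$ together with relevance/nonsingularity of the instrument--regressor cross moment. Your additional treatment of the generated-instrument issue (transferring the plims from $\mathbf s^\star$ to $\hat{\mathbf s}=\mathbf x+k\hat\delta_0\mathbf r$ via $\hat\delta_0\xrightarrow{p}\delta_0$ and boundedness of $\tfrac1n\mathbf r'\mathbf u$) is a refinement the paper leaves implicit in its citation of Lemma~\ref{lx3}, and is a welcome tightening rather than a different approach.
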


The asymptotic distribution of the SIV estimator can be straightforwardly determined, since after determining the matrix $\mathbf V$, the rest of our approach is just the usual IV method; thus,  we can use the existing results for the IV method.
\section{Performance Evaluation: Simulations and Applications}
\label{sec:applications}

Having established the theoretical foundations and identification strategy for the SIV method, we now evaluate its empirical performance. We proceed in two steps. First, we validate the method in Monte Carlo experiments where the true data-generating process (DGP) is known (\ref{subsec:simulations}). Second, we illustrate its applicability in four empirical settings spanning labor economics, economic history, and policy evaluation (\ref{subsec:labor_supply}--\ref{subsec:colonial}).

Throughout, we use a nested Monte Carlo--bootstrap design to assess accuracy (bias), precision (standard errors), and coverage. For artificial data, we repeatedly generate datasets from a known DGP and draw bootstrap samples from each. For empirical applications, we bootstrap the original data to obtain confidence intervals for $\hat{\delta}_0$ and the SIV estimates. These procedures are implemented in our \texttt{R} package:

\begin{verbatim}
remotes::install_git("https://github.com/ratbekd/siv.git")
library(siv)
\end{verbatim}

The package provides homoscedastic and heteroscedastic SIV estimators for single and multiple endogenous regressors; Appendix~\ref{guide} gives a usage guide and example code.

\subsection{Monte Carlo Validation with Known DGP}
\label{subsec:simulations}

We first study the finite-sample properties of SIV under controlled conditions. The DGP is designed to mimic realistic features of economic data: non-normal regressors, heteroscedastic and non-normal errors, and substantial endogeneity bias.

The endogenous regressor $\mathbf x$ is generated using the sinh–arcsinh transformation of a normal variable \citep{jonespewsey}, producing flexible skewness and kurtosis. The structural error $\mathbf u$ has an endogenous component, correlated with $\mathbf x$, and an exogenous component, orthogonal to $\mathbf x$. The outcome is
\[
\tilde{\mathbf y} = 1 + 2\tilde{\mathbf x} + 0.5\mathbf w + \mathbf u,
\]
with $\mathbf w$ exogenous, so that the true causal effect is $\beta = 2$. 
We residualize $\tilde{\mathbf y}$ and $\tilde{\mathbf x}$ on $\mathbf w$, standardize them, and construct $\mathbf r$ as the residual from regressing $\mathbf y$ on $\mathbf x$, ensuring $\mathbf r \perp \mathbf x$ (see Appendix \ref{dgp} for more detailed description of the data generation process).

\subsubsection{Simulation Design}

We use a nested design. In the outer loop, we generate $50$ independent populations of size $N = 100{,}000$ from the DGP. In the inner loop, we draw $10$ bootstrap samples of size $n = 1{,}000$ from each population, yielding $500$ estimation samples in total. For each sample, we estimate $\beta$ using:

\begin{enumerate}
  \item OLS (benchmark with endogeneity),
  \item SIV (homoscedastic DT condition),
  \item RSIV-p (robust SIV with parametric heteroscedasticity correction),
  \item RSIV-n (robust SIV with nonparametric heteroscedasticity correction).
\end{enumerate}

We summarize performance by bias, standard error, root mean squared error (RMSE), and coverage of nominal 95\% confidence intervals (Table~\ref{tab:simulation_results}).
\begin{table}[htbp]
\centering
\caption{Monte Carlo Simulation Results: Estimator Performance}
\label{tab:simulation_results}
\begin{threeparttable}
\begin{tabular}{lcccccc}
\toprule
Method & Mean $\beta$ & Std Error & 95\% CI Lower & 95\% CI Upper & Bias & RMSE \\
\midrule
OLS     & 3.001 & 0.041 & 2.922 & 3.081 & 1.001 & 1.002 \\
SIV     & 1.974 & 0.092 & 1.794 & 2.155 & -0.026 & 0.095 \\
RSIV-p  & 2.016 & 0.079 & 1.860 & 2.171 & 0.016 & 0.081 \\
RSIV-n  & 2.011 & 0.082 & 1.851 & 2.171 & 0.011 & 0.083 \\
\bottomrule
\end{tabular}
\begin{tablenotes}
\small
\item \textit{Notes:} Results based on 50 data generations with 10 bootstrap samples each (500 total estimates). True parameter value $\beta = 2$. OLS = Ordinary Least Squares; SIV = Synthetic Instrumental Variable (homoscedastic); RSIV-p = Robust SIV with parametric heteroscedasticity correction; RSIV-n = Robust SIV with nonparametric heteroscedasticity correction. Bias = Mean $\beta$ - 2; RMSE = Root Mean Squared Error. Sample size per bootstrap: $n = 1{,}000$. Population size: $N = 100{,}000$.
\end{tablenotes}
\end{threeparttable}
\end{table}
\subsubsection{Simulation Results}

Table~\ref{tab:simulation_results} shows that OLS is severely biased: the mean estimate is around $3.0$, implying a 50\% overestimate of the true effect, and the OLS confidence interval excludes $\beta=2$. In contrast, all SIV estimators recover the true parameter with small bias (below 2\% of the true value) and RMSE roughly an order of magnitude smaller than OLS.

The robust variants (RSIV-p and RSIV-n) deliver efficiency gains relative to the homoscedastic SIV estimator, reflected in lower standard errors, while maintaining negligible bias. Wald tests of $H_0:\beta=2$ strongly reject for OLS but not for any SIV estimator, and empirical coverage of SIV confidence intervals is close to the nominal level. Overall, the simulations indicate that SIV effectively eliminates endogeneity bias and provides reliable inference under realistic departures from Gaussianity and homoscedasticity.

\subsection{Application 1: Labor Supply and Wages}
\label{subsec:labor_supply}

We first revisit the classic labor supply model of \citet{mroz}, relating annual hours worked to log wages and standard controls for married women:
\begin{equation}
\mathbf{hours} = b_0 + b_1\mathbf{lwage} + b_2\mathbf{educ} + b_3\mathbf{age} + b_4\mathbf{kidslt6} + b_5\mathbf{kidsge6} + b_6\mathbf{nwifeinc} +\mathbf  u.
\label{eq:mroz}
\end{equation}
The wage is endogenous due to simultaneity in labor supply and demand and unobserved ability. Traditional IV uses work experience and its square as instruments, but their exclusion restrictions are debatable.

We apply SIV alongside OLS and traditional IV. The sign determination procedure selects \mbox{$\text{cov}(\textbf{lwage},\mathbf u) < 0$.} This confirms the economic prior: OLS underestimates the wage effect due to downward-sloping labor demand. Bootstrap inference (50 replications) is reported in Table~\ref{tab:mroz_results}.

OLS yields a small, imprecise, and negative wage coefficient, while all IV and SIV estimators imply a large positive effect: a one-unit increase in log wages increases annual hours by about 1{,}300--1{,}700 hours. SIV estimates are close to those from experience-based IV, suggesting that SIV successfully corrects for endogeneity without external instruments. Wu–Hausman tests strongly reject wage exogeneity, and weak-instrument tests are passed in all IV and SIV specifications. Economically, the implied wage elasticities (around 0.65–0.85) are in line with the literature on married women’s labor supply \citep{ang}.
\begin{table}[htbp]
\centering
\caption{Effect of Wages on Work Hours: Mroz (1987) Data}
\label{tab:mroz_results}
\begin{threeparttable}
\begin{tabular}{lccccc}
\toprule
& \multicolumn{5}{c}{Dependent variable: Work hours} \\
\cmidrule(lr){2-6}
& OLS & IV & SIV & RSIV-p & RSIV-n \\
\midrule
\textbf {lwage} & $-17.40$ & $1{,}544.81^{***}$ & $1{,}369.47^{***}$ & $1{,}549.72^{***}$ & $1{,}665.05^{***}$ \\
 & (54.22) & (480.73) & (138.48) & (161.81) & (178.23) \\
95\% CI & & & [1338, 1639] & [1529, 1889] & [1653, 2030] \\
\midrule
Weak instruments (p) & -- & 0.00 & 0.00 & 0.00 & 0.00 \\
Wu-Hausman (p) & -- & 0.00 & 0.00 & 0.00 & 0.00 \\
Sargan (p) & -- & 0.35 & -- & -- & -- \\
\midrule
Mean $\delta_0$ & -- & -- & 1.25 & 1.40 & 1.51 \\
Observations & 428 & 428 & 428 & 428 & 428 \\
Adjusted $R^2$ & 0.05 & $-1.81$ & $-1.05$ & $-1.03$ & $-1.03$ \\
\bottomrule
\end{tabular}
\begin{tablenotes}
\small
\item \textit{Notes:} $^{***}$ $p < 0.01$. Standard errors in parentheses. For SIV methods, mean values and 95\% confidence intervals are based on 50 bootstrap samples. Traditional IV uses \textbf{exper} and \textbf{expersq} as instruments. SIV constructs instruments as $ \mathbf s =  \mathbf x + \delta_0  \mathbf r$ where $ \mathbf r \perp  \mathbf x$. Exogenous controls: \textbf{educ}, \textbf{age}, \textbf{kidslt6}, \textbf{kidsge6}, \textbf{nwifeinc}. $\text{cov}(\textbf{lwage}, \mathbf  u) < 0$ determined empirically.
\end{tablenotes}
\end{threeparttable}
\end{table}
\subsection{Application 2: Protestantism and Literacy}
\label{subsec:protestantism}

Our second application revisits \citet{becker} on the impact of Protestantism on literacy in 19th-century Prussia:
\begin{equation}
\text{Literacy}_i = \beta_0 + \beta_1 \text{ProtestantShare}_i + \mathbf{V}'\boldsymbol{\gamma} + u_i,
\label{eq:becker}
\end{equation}
with demographic and regional controls in $\mathbf{V}$. Protestant share may be endogenous through reverse causality and omitted regional characteristics. \citet{becker} use distance to Wittenberg (Luther's city) as an instrument, exploiting the historical diffusion of the Reformation.

Using the same data (452 counties), we implement SIV and bootstrap inference (30 replications). The sign determination procedure selects $\text{cov}(\textbf{ProtestantShare},\mathbf u)<0$, confirming the authors’ prior that OLS underestimates the effect. Results are reported in Table~\ref{tab:becker_results}.

OLS suggests a modest positive relationship (coefficient $\approx 0.10$), and traditional IV roughly doubles this estimate. SIV and RSIV, however, yield substantially larger effects (coefficients around 0.45–0.56), with tight confidence intervals. These estimates imply that Protestant-majority regions had literacy rates roughly 45–56 percentage points higher than otherwise similar Catholic regions. Weak-instrument and Wu–Hausman tests support the relevance of Protestant share and the presence of endogeneity. The divergence between SIV and traditional IV may reflect weak instruments, differences in the identified subpopulation, or violations of the exclusion restriction for distance to Wittenberg.
\begin{table}[htbp]
\centering
\caption{Effect of Protestantism on Literacy: Becker and Woessmann (2009) Data}
\label{tab:becker_results}
\begin{threeparttable}
\begin{tabular}{lccccc}
\toprule
& \multicolumn{5}{c}{Dependent variable: Literacy rate} \\
\cmidrule(lr){2-6}
& OLS & IV & SIV & RSIV-p & RSIV-n \\
\midrule
\textbf{f\_prot}& $0.100^{***}$ & $0.187^{***}$ & $0.558^{***}$ & $0.447^{***}$ & $0.458^{***}$ \\
 & (0.010) & (0.028) & (0.054) & (0.038) & (0.039) \\
95\% CI & & & [0.543, 0.663] & [0.451, 0.541] & [0.462, 0.552] \\
\midrule
Weak instruments (p) & -- & 0.00 & 0.00 & 0.00 & 0.00 \\
Wu-Hausman (p) & -- & 0.00 & 0.00 & 0.00 & 0.00 \\
\midrule
Mean $\delta_0$ & -- & -- & 2.10 & 1.66 & 1.71 \\
Observations & 452 & 452 & 452 & 452 & 452 \\
Adjusted $R^2$ & 0.73 & 0.68 & $-0.74$ & $-0.71$ & $-0.72$ \\
\bottomrule
\end{tabular}
\begin{tablenotes}
\small
\item \textit{Notes:} $^{***}$ $p < 0.01$. Standard errors in parentheses. For SIV methods, mean values and 95\% confidence intervals based on 30 bootstrap samples. Traditional IV uses distance to Wittenberg (\textbf{kmwittenberg}) as instrument. $\text{cov}(\textbf{f\_prot},\mathbf  u) < 0$ confirmed empirically. Controls: Jewish share, age structure, gender, urbanization, population, disability rates, Prussian annexation date, university presence in 1517.
\end{tablenotes}
\end{threeparttable}
\end{table}

\subsection{Application 3: 401(k) Programs and Retirement Savings}
\label{subsec:401k}

We next consider the effect of 401(k) participation on IRA ownership using the data of \citet{ABADIE2003231}. The model is
\begin{equation}
\text{IRA}_i = \beta_0 + \beta_1 \text{p401k}_i + \mathbf{V}_i'\boldsymbol{\gamma} + u_i,
\label{eq:401k}
\end{equation}
where both the outcome and endogenous regressor are binary. Individuals with stronger unobserved savings preferences are more likely to both participate in 401(k)s and hold IRAs, so $\text{cov}(\textbf{p401k},\mathbf u)>0$ and OLS is upward biased. Traditional IV uses 401(k) eligibility as an instrument for participation.

The SIV sign selection chooses $k=-1$, confirming $\text{cov}(\textbf{p401k},\mathbf u)>0$. Table~\ref{tab:401k_results} reports OLS, traditional IV, and SIV estimates based on 30 bootstrap replications. OLS suggests that 401(k) participants are more likely to own IRAs, while eligibility-based IV finds a small and insignificant effect. In contrast, SIV and RSIV indicate large negative effects: 401(k) participation reduces IRA ownership probability by roughly 60–90 percentage points, consistent with strong crowding out between the two forms of retirement saving. Weak-instrument and Wu–Hausman tests support instrument relevance and reject exogeneity of participation. This application illustrates that SIV can be applied to discrete treatments and outcomes, and that it can reveal qualitatively different conclusions when external instruments are weak or questionable.
\begin{table}[!h]
\centering
\caption{Effect of 401(k) Participation on IRA Ownership: Abadie (2003) Data}
\label{tab:401k_results}
\begin{threeparttable}
\begin{tabular}{lccccc}
\toprule
& \multicolumn{5}{c}{Dependent variable: Probability of IRA ownership} \\
\cmidrule(lr){2-6}
& OLS & IV & SIV & RSIV-p & RSIV-n \\
\midrule
\textbf{p401k} & $0.051^{***}$ & $0.017$ & $-0.614^{***}$ & $-0.896^{***}$ & $-0.811^{***}$ \\
 & (0.010) & (0.013) & (0.015) & (0.020) & (0.018) \\
95\% CI & & & [$-0.615$, $-0.585$] & [$-0.898$, $-0.858$] & [$-0.813$, $-0.773$] \\
\midrule
Weak instruments (p) & -- & 0.00 & 0.00 & 0.00 & 0.00 \\
Wu-Hausman (p) & -- & 0.00 & 0.00 & 0.00 & 0.00 \\
\midrule
Mean $\delta_0$ & -- & -- & 0.72 & 1.03 & 0.94 \\
Observations & 9,275 & 9,275 & 9,275 & 9,275 & 9,275 \\
Adjusted $R^2$ & 0.18 & 0.18 & $-0.71$ & $-0.70$ & $-0.70$ \\
\bottomrule
\end{tabular}
\begin{tablenotes}
\small
\item \textit{Notes:} $^{***}$ $p < 0.01$. Standard errors in parentheses. For SIV methods, mean values and 95\% confidence intervals based on 30 bootstrap samples. Both dependent and endogenous variables are binary. Traditional IV uses 401(k) eligibility (\textbf{e401k}) as instrument. $\text{cov}(\textbf{p401k}, \mathbf u) > 0$ confirmed empirically. Controls: income, income squared, age, age squared, marital status, family size.
\end{tablenotes}
\end{threeparttable}
\end{table}
\subsection{Application 4: Colonial Institutions and Agricultural Development}
\label{subsec:colonial}

Finally, we study the long-run effects of British colonial land tenure institutions in India using the district-level data of \citet{banerjee2005}. The outcome is the share of wheat area under high-yielding varieties, and the key regressor is the share of land under non-landlord tenure:
\begin{equation}
\text{Prop\_HYV}_i = \beta_0 + \beta_1 \text{NonLandlordShare}_i + \mathbf{V}_i'\boldsymbol{\gamma} + u_i.
\label{eq:banerjee}
\end{equation}
Assignment of land systems was not random, raising concerns that $\textbf{NonLandlordShare}$ is correlated with unobservables affecting agricultural modernization. \citet{banerjee2005} instrument for non-landlord share using the timing of British conquest.

Applying SIV to the same data (with 20 bootstrap replications), we find that the sign determination procedure selects $\text{cov}(\textbf{NonLandlordShare},\mathbf u)<0$, consistent with the historical view that more productive districts were more likely to receive landlord systems. Table~\ref{tab:banerjee_results} compares OLS, traditional IV, and SIV estimators. OLS yields a modest positive association; conquest-timing IV magnifies the effect, suggesting substantial underestimation by OLS. SIV and RSIV produce intermediate estimates (coefficients around 0.40), still indicating large positive effects of non-landlord systems but smaller than those implied by the external instrument.

Across specifications, weak-instrument tests are passed, and Wu–Hausman tests provide borderline evidence of endogeneity. Taken together, the results suggest that non-landlord institutions significantly increased the adoption of modern agricultural technologies, and that SIV can complement historical IV strategies when exclusion restrictions or instrument strength are in doubt.

Overall, the simulation and empirical evidence show that SIV and its robust variants (i) eliminate endogeneity bias in controlled settings, (ii) generate estimates consistent with credible external instruments when these are available, and (iii) offer informative alternatives when traditional instruments are weak or controversial.
%
%
%
%
%
%
Together with the simulation evidence (Subsection \ref{subsec:simulations}), these applications establish the SIV method as a practical, reliable tool for causal inference in settings where traditional instruments are unavailable, weak, or questionable.
\begin{table}[!h]
\centering
\caption{Effect of Non-Landlord Land Tenure on High-Yielding Variety Adoption}
\label{tab:banerjee_results}
\begin{threeparttable}
\begin{tabular}{lccccc}
\toprule
& \multicolumn{5}{c}{Dependent variable: Proportion of wheat area under HYV} \\
\cmidrule(lr){2-6}
& OLS & IV & SIV & RSIV-p & RSIV-n \\
\midrule
p\_nland & $0.090^{***}$ & $0.585^{***}$ & $0.403^{***}$ & $0.403^{***}$ & $0.406^{***}$ \\
 & (0.013) & (0.052) & (0.098) & (0.098) & (0.098) \\
95\% CI & & & [0.404, 0.594] & [0.404, 0.594] & [0.407, 0.597] \\
\midrule
Weak instruments (p) & -- & 0.00 & 0.00 & 0.00 & 0.00 \\
Wu-Hausman (p) & -- & 0.00 & 0.06 & 0.06 & 0.05 \\
\midrule
Mean $\delta_0$ & -- & -- & 1.60 & 1.60 & 1.57 \\
Observations & 3,541 & 3,541 & 1,969 & 1,969 & 1,969 \\
Adjusted $R^2$ & 0.54 & 0.36 & 0.47 & 0.47 & 0.47 \\
\bottomrule
\end{tabular}
\begin{tablenotes}
\small
\item \textit{Notes:} $^{***}$ $p < 0.01$. Standard errors in parentheses. For SIV methods, mean values and 95\% confidence intervals based on 20 bootstrap samples. Traditional IV uses indicator for British annexation during 1820--1856. $\text{cov}(\text{p\_nland}, u) < 0$ confirmed empirically. Controls: altitude, rainfall, soil types, latitude, coastal distance, British rule duration, year. Sample size differs between methods due to missing instrument values.
\end{tablenotes}
\end{threeparttable}
\end{table}

\section{Conclusion}
\label{sec:conclusion}

Endogeneity remains a central obstacle to credible causal inference in economics and the social sciences. Traditional instrumental variable (IV) methods address this problem only when valid external instruments are available---a demanding requirement that is often unmet in practice. This paper introduces the Synthetic Instrumental Variable (SIV) method, which constructs instruments directly from the observed data, thereby reducing reliance on external variables and questionable exclusion restrictions.

\noindent\textbf{Summary of contributions.}
Our main contribution is to show that valid instruments need not come from outside the regression system. We exploit a simple geometric insight: in a linear model with one endogenous regressor, the outcome, endogenous regressor, and structural error lie in a two-dimensional plane. Within this plane, any valid instrument can be written as
\[
\mathbf{z}_0 = \mathbf{x} + k \delta_0 \mathbf{r},
\]
where $\mathbf{x}$ is the endogenous regressor, $\mathbf{r}$ is orthogonal to $\mathbf{x}$, $\delta_0$ determines the location of the instrument, and $k \in \{-1,+1\}$ encodes the direction of endogeneity. This reduces instrument search to a one-dimensional problem in $\delta$.

Second, we develop the dual tendency (DT) condition, which links the unobservable exogeneity requirement to observable moment restrictions. In the homoscedastic case, identification follows from the zero of a moment function involving squared first-stage residuals; in the heteroscedastic case, we identify $\delta_0$ by minimizing a discrepancy between OLS and FGLS variance structures. In both settings, the DT condition simultaneously selects the correct sign of $\text{cov}(\mathbf{x},\mathbf{u})$, so the direction of endogeneity is recovered from the data rather than imposed \textit{a priori}.

Third, we provide extensive empirical validation. Monte Carlo simulations show that SIV reduces endogeneity bias from roughly 50\% under OLS to negligible levels, with large gains in RMSE and correct confidence-interval coverage. Four applications---labor supply and wages, Protestantism and literacy, 401(k) participation and retirement savings, and colonial land institutions---demonstrate that SIV works with continuous and binary variables, across different fields and designs. When reliable external instruments exist, SIV typically agrees with traditional IV; when they are weak or controversial, SIV often reveals substantial divergences that are informative about instrument quality and identification.

\noindent\textbf{Practical implications.}
For applied researchers, the SIV method offers several advantages:
\begin{itemize}
  \item \emph{No external instrument search.} Valid instruments are synthesized from the observed outcome and regressors, avoiding ad hoc exclusion restrictions.
  \item \emph{Transparency and replicability.} The DT condition yields an algorithmic search over $\delta$; given the same data and tuning choices, different researchers will obtain the same estimates (up to sampling variation).
  \item \emph{Built-in diagnostics.} Plots of the moment function $M(\delta)$ or discrepancy $D(\delta)$ over a grid provide visual evidence on identification strength, the presence (or absence) of endogeneity, and the sign of $\text{cov}(\mathbf{x},\mathbf{u})$.
  \item \emph{Robustness check for IV.} Comparing SIV and traditional IV estimates offers a structured way to assess the credibility of external instruments and the potential role of heterogeneous treatment effects.
  \item \emph{Accessible implementation.} An \texttt{R} package (\texttt{github.com/ratbekd/siv}) implements homoscedastic and heteroscedastic SIV estimators, handles single and multiple endogenous variables, and includes bootstrap inference routines.
\end{itemize}

\noindent\textbf{Limitations and directions for future research.}
The SIV framework also has limitations that point to avenues for further work.

First, identification relies on a linear, additive-error structure. While our applications show that SIV can be used with binary outcomes and treatments, a full theoretical treatment for nonlinear models (logit, probit, count data, Tobit) remains open.

Second, our main exposition focuses on a single endogenous regressor. Appendix~\ref{app:multiple_endog} describes a practical extension to multiple endogenous variables using Frisch--Waugh--Lovell decomposition, but a more systematic analysis of high-dimensional settings and jointly constructed instruments is needed.

Third, the relationship between SIV estimands and heterogeneous treatment effects is not yet fully understood. Traditional IV with binary instruments identifies LATE; SIV, which builds a continuous instrument from the full data distribution, may correspond to different weighted averages of treatment effects. Clarifying these links, and exploring connections with machine-learning-based instrument construction and data-driven identification schemes, are promising directions for future research.

Finally, panel data with individual fixed effects offers additional structure that could strengthen SIV identification. The within transformation eliminates fixed effects, and SIV could be applied to the transformed data. However, dynamic panel models with lagged dependent variables pose additional challenges, as the coplanarity property may not hold in the same form. Exploring SIV in difference-in-differences and event study designs is another promising avenue.

\noindent\textbf{Final remarks.}
The Synthetic Instrumental Variable method shows that, under suitable conditions, valid instruments can be constructed from the regression plane itself. By combining a geometric characterization of instruments, a testable DT condition, and a simple one-dimensional search over $\delta$, SIV offers a practical and conceptually transparent way to address endogeneity when external instruments are unavailable, weak, or contested.

As researchers continue to grapple with endogeneity across diverse applications, the SIV method offers a data-driven, transparent, and replicable approach that complements existing strategies. Whether used as a primary identification strategy or as a robustness check for traditional IV, SIV expands the possibilities for credible causal inference in settings where valid external instruments are unavailable, weak, or questionable. We hope that the SIV approach, together with the accompanying software, will broaden the range of empirical settings in which researchers can obtain credible causal estimates without depending critically on the availability of external instruments.

\appendix
\counterwithin*{equation}{section}
\renewcommand\theequation{\thesection\arabic{equation}}

\addcontentsline{toc}{section}{Appendices}
\section*{Appendices}

\section{Proofs}

\subsection{Proof of Lemma \ref{r1}}\label{ar1}
\begin{proof}
Since $P_{\mathcal W}$ is linear and hence Borel--measurable, $\mathbf z_0=P_{\mathcal W}\mathbf z$
is a measurable function of $\mathbf z$. Therefore, we have the following $\sigma$-algebras (information sets)
\[
\sigma(\mathbf z_0)\subseteq\sigma(\mathbf z).
\]
By the law of iterated expectations for nested $\sigma$--algebras,
\[
 E(\mathbf u\mid \mathbf z_0)
= E\!\big(\, E(\mathbf u\mid \mathbf z)\,\big|\ \mathbf z_0\big).
\]
Under the hypothesis $ E(\mathbf u\mid \mathbf z)=\mathbf 0$ a.s., the right--hand side equals
$ E(\mathbf 0\mid \mathbf z_0)=\mathbf 0$ a.s. Hence $ E(\mathbf u\mid \mathbf z_0)=\mathbf 0$ a.s.
$\square$ \end{proof}

\subsection{Proof of Lemma \ref{lb2}}\label{alb2}

\begin{proof}
By Lemma~\ref{t2}, any vector $\mathbf z\in\mathcal W(\mathbf x,\mathbf y)$ lying in the plane
spanned by $\mathbf x$ and $\mathbf r$ can be written as
\[
\mathbf z \;=\; \zeta\,\mathbf x \;+\; \omega\,\mathbf r,
\qquad \zeta,\omega\in\mathbb R.
\]
Since we restrict attention to valid instruments $\mathbf z_0$ satisfying
$\operatorname{corr}(\mathbf x,\mathbf z_0)>0$, we must have $\zeta\neq 0$; moreover,
under $\operatorname{corr}(\mathbf x,\mathbf r)=0$, we have
\[
\operatorname{cov}(\mathbf x,\mathbf z_0)
=\operatorname{cov}(\mathbf x,\zeta\mathbf x+\omega\mathbf r)
=\zeta\,\operatorname{var}(\mathbf x),
\]
so $\operatorname{corr}(\mathbf x,\mathbf z_0)>0$ implies $\zeta>0$.

Rescaling $\mathbf z$ by $1/\zeta$ yields a collinear vector
\[
\mathbf z_0 \;=\; \frac{1}{\zeta}\,\mathbf z \;=\; \mathbf x \;+\; \frac{\omega}{\zeta}\,\mathbf r.
\]
Thus any such $\mathbf z_0\in\mathcal W(\mathbf x,\mathbf y)$ can be written as
\[
\mathbf z_0 \;=\; \mathbf x + \alpha\,\mathbf r,
\qquad \alpha:=\frac{\omega}{\zeta}\in\mathbb R.
\]

Next, use the structural relation $\mathbf y=\beta\mathbf x+\mathbf u$. Then
\[
\operatorname{cov}(\mathbf y,\mathbf r)
=\operatorname{cov}(\beta\mathbf x+\mathbf u,\mathbf r)
=\beta\,\operatorname{cov}(\mathbf x,\mathbf r)+\operatorname{cov}(\mathbf u,\mathbf r)
=\operatorname{cov}(\mathbf u,\mathbf r),
\]
since $\operatorname{corr}(\mathbf x,\mathbf r)=0$ by construction. Hence the assumption
$\operatorname{corr}(\mathbf y,\mathbf r)>0$ implies
\[
\operatorname{cov}(\mathbf u,\mathbf r)>0
\quad\Longrightarrow\quad
\operatorname{corr}(\mathbf u,\mathbf r)>0.
\]

Now impose the orthogonality condition for a valid instrument: $\operatorname{cov}(\mathbf z_0,\mathbf u)=0$.
Using $\mathbf z_0=\mathbf x+\alpha\mathbf r$ and $\operatorname{cov}(\mathbf x,\mathbf r)=0$,
\[
\operatorname{cov}(\mathbf z_0,\mathbf u)
=\operatorname{cov}(\mathbf x+\alpha\mathbf r,\mathbf u)
=\operatorname{cov}(\mathbf x,\mathbf u)+\alpha\,\operatorname{cov}(\mathbf r,\mathbf u).
\]
Thus $\operatorname{cov}(\mathbf z_0,\mathbf u)=0$ is equivalent to
\[
\alpha
=-\,\frac{\operatorname{cov}(\mathbf x,\mathbf u)}{\operatorname{cov}(\mathbf r,\mathbf u)}.
\]
Since $cov(\mathbf r, \mathbf u)>0$, hence the denominator in 
$\alpha$ is nonzero.
Since $\operatorname{cov}(\mathbf r,\mathbf u)>0$ (from $\operatorname{corr}(\mathbf y,\mathbf r)>0$) and
$\operatorname{cov}(\mathbf x,\mathbf u)\neq 0$ by endogeneity, the sign of $\alpha$ is
\[
\operatorname{sign}(\alpha)
= -\,\operatorname{sign}\!\big(\operatorname{cov}(\mathbf x,\mathbf u)\big).
\]
Define $\delta:=|\alpha|>0$ and
\[
k:=-\,\operatorname{sign}\!\big(\operatorname{cov}(\mathbf x,\mathbf u)\big)\in\{-1,+1\},
\]
so that $\alpha=k\delta$. Substituting back,
\[
\mathbf z_0 \;=\; \mathbf x + \alpha\,\mathbf r
\;=\; \mathbf x + k\,\delta\,\mathbf r,
\]
which is exactly the representation in \eqref{siv}. $\square$
\end{proof}

\subsection {Proof of Lemma \ref{la1}}\label{ala1}

\begin{proof}
By Lemma~\ref{lb2} and Assumption A1, the valid IV $\mathbf z_0\in\mathcal W(\mathbf x,\mathbf y)$
can be written as
\[
\mathbf z_0 = \mathbf x + k\,\delta_0\,\mathbf r
\]
for some $\delta_0>0$ and fixed direction $\mathbf r$.
The corresponding first-stage regression using $\mathbf z_0$ as the instrument is
\[
\mathbf x = \gamma_0\,\mathbf z_0 + \mathbf e_0,
\qquad\mathbb  E(\mathbf z_0'\mathbf e_0)=0,
\]
with population OLS coefficient
\[
\gamma_0 = \frac{\operatorname{cov}(\mathbf x,\mathbf z_0)}{\operatorname{var}(\mathbf z_0)}.
\]

Consider now an arbitrary SIV $\mathbf s$ in the same plane, constructed as
\[
\mathbf s \;=\; \mathbf x + k\,\delta\,\mathbf r,
\qquad \delta\neq \delta_0.
\]
Since $\mathbf z_0$ and $\mathbf s$ are coplanar and both belong to
$\mathcal W(\mathbf x,\mathbf y)$, we can write
\[
\mathbf s \;=\; \mathbf z_0 + (\mathbf s - \mathbf z_0),
\]
where $\mathbf h := \mathbf s - \mathbf z_0$ represents the deviation of $\mathbf s$
from the true IV $\mathbf z_0$.

The population OLS coefficient from the first-stage regression of $\mathbf x$ on $\mathbf s$
is given by
\[
\gamma
= \frac{\operatorname{cov}(\mathbf x,\mathbf s)}{\operatorname{var}(\mathbf s)}
= \frac{\operatorname{cov}\big(\mathbf x,\mathbf z_0 + \mathbf h\big)}
       {\operatorname{var}\big(\mathbf z_0 + \mathbf h\big)}.
\]
Add and subtract the coefficient based on $\mathbf z_0$:
\[
\gamma
= \frac{\operatorname{cov}(\mathbf x,\mathbf z_0)}{\operatorname{var}(\mathbf z_0)}
 + \Bigg(
     \frac{\operatorname{cov}\big(\mathbf x,\mathbf z_0 + \mathbf h\big)}
          {\operatorname{var}\big(\mathbf z_0 + \mathbf h\big)}
   - \frac{\operatorname{cov}(\mathbf x,\mathbf z_0)}{\operatorname{var}(\mathbf z_0)}
   \Bigg).
\]
Thus,
\[
\gamma = \gamma_0 + g(\mathbf x,\mathbf s,\mathbf z_0),
\]
where we define
\[
g(\mathbf x,\mathbf s,\mathbf z_0)
:= \frac{\operatorname{cov}\big(\mathbf x,\mathbf z_0 + (\mathbf s-\mathbf z_0)\big)}
         {\operatorname{var}\big(\mathbf z_0 + (\mathbf s-\mathbf z_0)\big)}
 - \frac{\operatorname{cov}(\mathbf x,\mathbf z_0)}{\operatorname{var}(\mathbf z_0)}.
\]

Finally, under A1--A3 the relevant second moments exist and are finite, and the mapping
\[
(\mathbf z_0,\mathbf s)\;\mapsto\;
\frac{\operatorname{cov}(\mathbf x,\mathbf s)}{\operatorname{var}(\mathbf s)}
\]
is a smooth ratio of quadratic forms in $(\mathbf z_0,\mathbf s)$ away from
$\operatorname{var}(\mathbf s)=0$. In particular, when $\mathbf s$ is parameterized
as $\mathbf s(\delta)=\mathbf x + k\,\delta\,\mathbf r$ in a neighborhood of $\delta_0$,
the function $\delta\mapsto g(\mathbf x,\mathbf s(\delta),\mathbf z_0)$ is
twice continuously differentiable under standard regularity conditions on the moments.
This justifies the claimed smoothness of $g(\cdot)$.$\square$
\end{proof}

\subsection{Proof of Lemma \ref{t0a}}\label{at0a}
\begin{proof}

Let $\mathbf z_0\in\mathcal W(\mathbf x,\mathbf y)$ be any valid instrument that such that $\mathbf z_0=\mathbf x+k\delta_0 \mathbf r$, and let
$\mathbf e(\delta_0) = \mathbf x - \gamma \mathbf z_0$ denote the associated
first-stage residual in the single-equation, single-IV case ($p=q=1$). 
Under Assumptions
A1--A3, instrument validity implies the orthogonality condition
\[
\mathbb E(\mathbf u \mid \mathbf z_0) = 0.
\]

For the first-stage homoscedasticity moment, the $i$th component of the stacked
moment vector in \eqref{m:1} evaluated at $\delta_0$ is
\[
m_i(\delta_0)
=
\mathbb E\big[(e_i(\delta_0)^2 - \sigma^2)\, z_{0i}\big],
\quad i=1,\dots,n,
\]
so that $\mathbf M(\delta_0) = (m_1(\delta_0),\dots,m_n(\delta_0))'$.
Using the law of iterated expectations and the fact that $z_{0i}$ is
measurable with respect to $\sigma(\mathbf z_0)$,
\[
\begin{aligned}
m_i(\delta_0)
&= \mathbb E\Big[
    \mathbb E\big[(e_i(\delta_0)^2-\sigma^2)\, z_{0i} \mid \mathbf z_0\big]
   \Big] \\
&= \mathbb E\Big[
    z_{0i}\,\mathbb E\big[e_i(\delta_0)^2-\sigma^2 \mid \mathbf z_0\big]
   \Big].
\end{aligned}
\]
Under the homoscedasticity condition
\[
\mathbb E\big[\mathbf e(\delta_0)^{\circ 2} \mid \mathbf z_0\big]
= \sigma^2 \mathbf 1_n \quad\text{a.s.},
\]
we have $\mathbb E[e_i(\delta_0)^2 \mid \mathbf z_0]=\sigma^2$ almost surely for
each $i$, hence
\[
m_i(\delta_0)
= \mathbb E\big[z_{0i}(\sigma^2-\sigma^2)\big]
= 0.
\]
Therefore $m_i(\delta_0)=0$ for all $i$ and thus
\[
\mathbf M(\delta_0) = \mathbf 0.
\]

%

Then,  for the model given by equations \eqref{e2} and \eqref{e2a}, the conditions  ("orthogonality") $\mathbb{E}(\mathbf  u|\mathbf z_0)= {0}$ and ("first-stage homoscedasticity") $\mathbf{M}(\delta_0) = {0}$    hold simultaneously.%
$\square$\end{proof}
\subsection{Proof of Theorem \ref{lm10}}\label{alm10}

\begin{proof}

\noindent\emph{(1) Residual-based characterization (link to Lemma \ref{t0a}).}
Assume again that $\mathbf z_0$ is such that $\mathbb E(\mathbf u\mid\mathbf z_0)=0$.
Instrumenting $\mathbf x$ with an SIV $\mathbf s=\mathbf x+k\,\delta\,\mathbf r$, the first-stage
residual is homoscedastic and given by
\[
\mathbf e = \mathbf x - \gamma\,\mathbf s.
\]
The (scalar) sum of squared residuals can be written as
\[
\mathbf e'\mathbf e
= \|\mathbf x\|^2 + \gamma^2 \|\mathbf s\|^2 - 2\gamma\,\langle\mathbf x,\mathbf s\rangle,
\]
where $\|\cdot\|$ denotes the Euclidean norm and $\langle\cdot,\cdot\rangle$ the inner product.

Treating $\gamma=\gamma(\mathbf s)$ as a smooth function of $\mathbf s$, the gradient of
$\mathbf e'\mathbf e$ with respect to $\mathbf s$ is
\begin{equation}\label{ax51}
\frac{\partial (\mathbf e'\mathbf e)}{\partial \mathbf s}
= \frac{d}{d\mathbf s}\Big(
    \|\mathbf x\|^2
  + \gamma^2\|\mathbf s\|^2
  - 2\gamma\,\langle\mathbf x,\mathbf s\rangle
  \Big)
= 2\gamma \frac{\partial\gamma}{\partial\mathbf s} \|\mathbf s\|^2
  + 2\gamma^2 \mathbf s
  - 2 \frac{\partial\gamma}{\partial\mathbf s} \langle\mathbf x,\mathbf s\rangle
  - 2\gamma\,\mathbf x,
\end{equation}
where $\partial\gamma/\partial\mathbf s$ denotes the gradient of $\gamma$ with respect to $\mathbf s$.

From Lemma~\ref{la1} we have
\[
\gamma(\mathbf s)=\gamma_0 + g(\mathbf x,\mathbf s,\mathbf z_0),
\]
with $\gamma_0=\operatorname{cov}(\mathbf x,\mathbf z_0)/\operatorname{var}(\mathbf z_0)$ and
$f:=g(\mathbf x,\mathbf s,\mathbf z_0)$ capturing the deviation from the true IV. Hence
\begin{equation}\label{ax52}
\frac{\partial\gamma}{\partial\mathbf s}
= \frac{\partial(\gamma_0 + g(\mathbf x,\mathbf s,\mathbf z_0))}{\partial\mathbf s}
= \frac{\partial g(\mathbf x,\mathbf s,\mathbf z_0)}{\partial\mathbf s},
\end{equation}
since $\gamma_0$ does not depend on $\mathbf s$. At the true SIV, $\mathbf s^\star=\mathbf z_0$,
we have $g(\mathbf x,\mathbf s^\star,\mathbf z_0)=0$ and, under the smoothness conditions of
Lemma~\ref{la1}, we take $\partial g(\mathbf x,\mathbf s^\star,\mathbf z_0)/\partial\mathbf s=\mathbf 0$.
Substituting $\mathbf s^\star=\mathbf z_0$, $\gamma=\gamma_0$ and \eqref{ax52} into \eqref{ax51} gives
\begin{equation}\label{ax53}
\left.\frac{\partial (\mathbf e'\mathbf e)}{\partial \mathbf s}\right|_{\mathbf s^\star=\mathbf z_0}
= 2\gamma_0^2 \mathbf z_0 - 2\gamma_0 \mathbf x
= 2\gamma_0(\gamma_0\mathbf z_0 - \mathbf x).
\end{equation}
By construction of the first-stage residuals at the true IV,
\[
\mathbb E(\gamma_0\mathbf z_0 - \mathbf x) = \mathbf 0,
\]
so it follows from \eqref{ax53} that
\[
\mathbb E\left[
 \left.\frac{\partial (\mathbf e'\mathbf e)}{\partial \mathbf s}\right|_{\mathbf s^\star=\mathbf z_0}
\right]
= \mathbf 0.
\]

By Lemma~\ref{t0a}, this condition implies that at the DT condition, the moment vector satisfies
$\mathbf M(\delta)=\mathbf 0$ when $\mathbf s^\star=\mathbf z_0$. Since $\mathbf s^\star:=\mathbf s(\delta_0)=\mathbf z_0$ and $\mathbf z_0$ is a valid IV satisfying the
homoscedasticity condition, the DT moments vanish at $\delta_0$:
\[
\mathbf M(\delta_0)=\mathbf 0.
\]Therefore, $\mathbf M(\delta)=\mathbf 0$  implies that  $\delta=\delta_0$ and 
$\mathbf s^\star=\mathbf z_0$.
Since by definition $\mathbb E(\mathbf u\mid\mathbf z_0)=0$, we conclude that
$\mathbf M(\delta_0)=\mathbf 0$ and $\mathbb E(\mathbf u\mid\mathbf s^\star)=0$
hold simultaneously at $\delta=\delta_0$.

\medskip
\noindent\emph{(2) Local identification via the moment vector.}
Because $\mathbf M(\cdot)$ is $C^1$ in $\delta$ (by A1), we can perform a first-order Taylor
expansion around $\delta_0$. For $\delta$ in a neighborhood $\mathcal N$ of $\delta_0$,
\[
\mathbf M(\delta)
= \mathbf M(\delta_0) 
 + J_{\mathbf M}(\delta_0)\,(\delta-\delta_0) 
 + r(\delta),
\]
where $r(\delta)$ satisfies $\|r(\delta)\|=o\big(|\delta-\delta_0|\big)$ as $\delta\to\delta_0$.
Since $\mathbf M(\delta_0)=\mathbf 0$, this simplifies to
\begin{equation}\label{eq:taylor}
\mathbf M(\delta)
= J_{\mathbf M}(\delta_0)\,(\delta-\delta_0) 
  + o\big(|\delta-\delta_0|\big).
\end{equation}


Since $J_{\mathbf M}(\delta_0) \neq \mathbf 0$ is a nonzero column vector, 
the scalar $\|J_{\mathbf M}(\delta_0)\|^2 > 0$. Taking $v' = J_{\mathbf M}(\delta_0)'$,
\[
v'\mathbf M(\delta) = J_{\mathbf M}(\delta_0)'J_{\mathbf M}(\delta_0)(\delta-\delta_0) + o(|\delta-\delta_0|)
= \|J_{\mathbf M}(\delta_0)\|^2(\delta-\delta_0) + o(|\delta-\delta_0|).
\]
Since $\|J_{\mathbf M}(\delta_0)\|^2 > 0$, the right-hand side is nonzero for 
$\delta$ in a punctured neighborhood 
$\mathcal N^*(\delta_0) := \{\delta : 0 < |\delta - \delta_0| < \epsilon\}$ 
of $\delta_0$. Therefore $\mathbf M(\delta) \neq \mathbf 0$ for all $\delta \in \mathcal N^*(\delta_0)$.

\medskip
\noindent\emph{(3) Local minimum of the GMM objective.}
Consider the GMM objective
\[
J(\delta)=\mathbf M(\delta)'\mathbf W\mathbf M(\delta),
\]
where $\mathbf W$ is a positive definite weighting matrix. Substituting \eqref{eq:taylor} into $J(\delta)$,
we obtain
\[
J(\delta)
= \big(J_{\mathbf M}(\delta_0)(\delta-\delta_0) + o(|\delta-\delta_0|)\big)' 
  \mathbf W
  \big(J_{\mathbf M}(\delta_0)(\delta-\delta_0) + o(|\delta-\delta_0|)\big).
\]
Expanding the quadratic form yields
\[
J(\delta)
= (\delta-\delta_0)^2\,J_{\mathbf M}(\delta_0)'\mathbf WJ_{\mathbf M}(\delta_0)
  + o\big((\delta-\delta_0)^2\big).
\]
Since $\mathbf W$ is positive definite and $J_{\mathbf M}(\delta_0)\neq\mathbf 0$, the leading
coefficient $J_{\mathbf M}(\delta_0)'\mathbf WJ_{\mathbf M}(\delta_0)$ is strictly positive.
Thus, for $\delta$ sufficiently close to $\delta_0$ and $\delta\neq\delta_0$,
\[
J(\delta) > 0 = J(\delta_0),
\]
so $\delta_0$ is a strict local minimizer of $J(\delta)$.

\medskip

Combining parts (1)--(3) yields the stated result.$\square$
\end{proof}

{\color{blue}

}
\subsection{Proof of Corollary \ref{comp}}\label{acomp}
\begin{proof}We divide the proof into four parts: (i) we show the objective function convergence, (ii) define the population minimizer,  (iii) show the consistency of the sample minimizer, (iv) and then show  the consistency of the estimated SIV.

\emph{(i) Objective function convergence.}
Define the sample objective function
\[
J_n(\delta) := \hat{\mathbf{M}}_n(\delta)'\mathbf{W}\hat{\mathbf{M}}_n(\delta)
\]
and the population objective function
\[
J(\delta) := \mathbf{M}(\delta)'\mathbf{W}\mathbf{M}(\delta).
\]
By uniform law of large numbers, under regularity conditions on the moment functions, we have
\[
\sup_{\delta\in(0,\bar{\delta})} |J_n(\delta) - J(\delta)| \xrightarrow{p} 0
\]
as $n\to\infty$. This ensures uniform convergence of the sample criterion to its population counterpart.

\emph{(ii) Population minimizer.}
By Theorem~\ref{lm10}, $\mathbf{M}(\delta_0) = \mathbf{0}$, which implies $J(\delta_0) = 0$. For $\delta \neq \delta_0$ in a neighborhood of $\delta_0$, the continuous differentiability of $\mathbf{M}(\delta)$ and the condition $J_{\mathbf{M}}(\delta_0) \neq \mathbf{0}$ ensure that
\[
\mathbf{M}(\delta) = \mathbf{M}(\delta_0) + J_{\mathbf{M}}(\delta_0)(\delta - \delta_0) + o(|\delta - \delta_0|) = J_{\mathbf{M}}(\delta_0)(\delta - \delta_0) + o(|\delta - \delta_0|).
\]
Since $\mathbf{W}$ is positive definite and $J_{\mathbf{M}}(\delta_0) \neq \mathbf{0}$, we have
\[
J(\delta) = (\delta - \delta_0)^2 J_{\mathbf{M}}(\delta_0)'\mathbf{W}J_{\mathbf{M}}(\delta_0) + o((\delta - \delta_0)^2) > 0
\]
for $\delta \neq \delta_0$ sufficiently close to $\delta_0$. Thus, $\delta_0$ is a strict local minimizer of $J(\delta)$.

\emph{(iii) Consistency via convergence of minimizers.}
By the uniform convergence established  in (i) and the identification of $\delta_0$ as a strict local minimizer in (ii), standard results on extremum estimators (e.g., Theorem 5.7 in \citet{vanderVaart1998}) imply
\[
\hat{\delta}_{0n}  \xrightarrow{p} \delta_0.
\]

\emph{(iv) Consistency of the estimated SIV.}
Since $\hat{\mathbf{s}}^\star = \mathbf{x} + k\hat{\delta}_{0n}\mathbf{r}$ and the mapping $\delta \mapsto \mathbf{x} + k\delta\mathbf{r}$ is continuous, by the continuous mapping theorem,
\[
\hat{\mathbf{s}}^\star = \mathbf{x} + k\hat{\delta}_{0n} \mathbf{r} \xrightarrow{p} \mathbf{x} + k\delta_0\mathbf{r} = \mathbf{s}^\star.
\]
Since
$\plim\,\hat{\mathbf{s}}^\star = \mathbf{s}^\star$, it follows that
$\plim_{n\to\infty}\mathbb E(\mathbf{u}\mid\hat{\mathbf{s}}^\star)=0$,
confirming asymptotic validity. $\square$
\end{proof}

\subsection{Proof of Corollary \ref{tx1}}\label{atx1}
\begin{proof}

According to Theorem \ref{lm10}, the DT condition, $\delta_0 = \operatorname*{arg\,}_{\delta}\{ \mathbf{M}(\delta)=\mathbf 0\}$,  determines a valid SIV,  ${\mathbf s^\star}={\mathbf x}+k\delta_0 {\mathbf r}$ with $k=(-1)sign[\operatorname{cov}(\mathbf x,\mathbf u)]$ such that $\mathbb{E}({\mathbf u'\mathbf s^\star})=0$. Then, $\beta$, a parameter of the model in  in \eqref{e2}, is identified by an IV estimator: ${\beta}_{IV}={(\mathbf x'\mathbf s^\star)^{-1} \mathbf x'\mathbf y}$.$\square$
\end {proof}

\subsection{Proof of Corollary \ref{c2a}}\label{ac2a}

\begin{proof}
We work in the single-equation case with $p=q=1$ and determine the sign of $\operatorname{corr}(\mathbf{x},\mathbf{u})$ by testing two sign hypotheses and checking which one admits a valid synthetic instrument.

\medskip
\noindent\textit{1. DT condition $\Leftrightarrow$ covariance condition.}

By Theorem~\ref{lm10}, if there exists a valid instrument $\mathbf{z}_0$ with $\mathbb{E}(\mathbf{u}\mid \mathbf{z}_0)=\mathbf{0}$ and $\mathbb{E}(\mathbf{e}\mathbf{e}'\mid \mathbf{z}_0)=\sigma^2\mathbf{I}_n$, then at $\delta_0$ with $\mathbf{s}(\delta_0)=\mathbf{z}_0$ the DT moment condition is
\[
\mathbf{M}(\delta_0)=\mathbb{E}[(\mathbf{e}^{\circ 2}-\sigma^2)\mathbf{z}_0]=\mathbf{0},
\]
where $\mathbf{e}=\mathbf{x}-\gamma\mathbf{s}$ and $\mathbf{s}=\mathbf{z}_0$. Rearranging gives
\[
\mathbb{E}[\mathbf{e}^{\circ 2}\mathbf{z}_0] = \sigma^2\mathbb{E}[\mathbf{z}_0].
\]
Under homoscedasticity, $\mathbb{E}[\mathbf{e}^{\circ 2}] = \sigma^2$, so
\[
\operatorname{cov}(\mathbf{e}^{\circ 2},\mathbf{z}_0)
= \mathbb{E}[\mathbf{e}^{\circ 2}\mathbf{z}_0] - \mathbb{E}[\mathbf{e}^{\circ 2}]\mathbb{E}[\mathbf{z}_0]
= \sigma^2\mathbb{E}[\mathbf{z}_0]-\sigma^2\mathbb{E}[\mathbf{z}_0]
= \mathbf{0}.
\]
Thus, $\mathbf{M}(\delta_0)=\mathbf{0}$ iff $\operatorname{cov}(\mathbf{e}^{\circ 2},\mathbf{s}^\star)=\mathbf{0}$ where $\mathbf{s}^\star=\mathbf{z}_0=\mathbf{x}+k\delta_0\mathbf{r}$.

\medskip
\noindent\textit{2. Sign hypotheses and candidate instruments.}

We consider
\[
(+) : \operatorname{corr}(\mathbf{x},\mathbf{u})>0, 
\qquad
(-) : \operatorname{corr}(\mathbf{x},\mathbf{u})<0.
\]
For each sign, we construct $\mathbf{s}=\mathbf{x}+k\delta\mathbf{r}$ with $\delta>0$, choosing the orientation of $\mathbf{r}$ via $k$ so that increasing $\delta$ moves $\mathbf{s}$ in the direction that (under the maintained sign) reduces endogeneity. Under $(-)$ we assume $k=1$, whereas under $(+)$, $k=-1$.

\medskip
\noindent\textit{3. Testing hypothesis $(+)$.}

Under $(+)$, define $\mathbf{s}_{(+)}(\delta)=\mathbf{x}-\delta\mathbf{r}$ and the corresponding first-stage residual
\[
\mathbf{e}_{(+)}(\delta)=\mathbf{x}-\gamma_{(+)}(\delta)\mathbf{s}_{(+)}(\delta),
\qquad
\gamma_{(+)}(\delta)=\frac{\langle \mathbf{x},\mathbf{s}_{(+)}(\delta)\rangle}{\|\mathbf{s}_{(+)}(\delta)\|^2}.
\]
We test whether there exists $\delta_{0,+}>0$ such that
\begin{equation}\label{eq:cov-plus-condensed}
\operatorname{cov}\!\big(\mathbf{e}_{(+)}^{\circ 2}(\delta_{0,+}),\mathbf{s}_{(+)}(\delta_{0,+})\big)=0.
\end{equation}
If so, Step 1 and Theorem~\ref{lm10} imply that $\mathbf{s}_{(+)}^\star:=\mathbf{s}_{(+)}(\delta_{0,+})$ is a valid IV with $\mathbb{E}(\mathbf{u}\mid \mathbf{s}_{(+)}^\star)=\mathbf{0}$. If moreover $J_{\mathbf{M}}(\delta_{0,+})\neq 0$, then $\delta_{0,+}$ is locally unique.

\medskip
\noindent\textit{4. Testing hypothesis $(-)$.}

Under $(-)$, set $\mathbf{s}_{(-)}(\delta)=\mathbf{x}+\delta\mathbf{r}$ and define
\[
\mathbf{e}_{(-)}(\delta)=\mathbf{x}-\gamma_{(-)}(\delta)\mathbf{s}_{(-)}(\delta).
\]
We test for $\delta_{0,-}>0$ such that
\begin{equation}\label{eq:cov-minus-condensed}
\operatorname{cov}\!\big(\mathbf{e}_{(-)}^{\circ 2}(\delta_{0,-}),\mathbf{s}_{(-)}(\delta_{0,-})\big)=0.
\end{equation}

If~\eqref{eq:cov-plus-condensed} holds for some $\delta_{0,+}>0$ but~\eqref{eq:cov-minus-condensed} fails for all $\delta>0$, then by Theorem~\ref{lm10} an IV exists only under $(+)$, so the true sign is $\operatorname{corr}(\mathbf{x},\mathbf{u})>0$. By symmetry, if only $(-)$ yields a valid IV, then $\operatorname{corr}(\mathbf{x},\mathbf{u})<0$.

\medskip
\noindent\textit{5. Remaining cases.}

If both hypotheses fail (no $\delta>0$ solves either covariance condition), then no valid synthetic instrument lies in $\mathcal{W}(\mathbf{x},\mathbf{y})$. Under Assumptions A1--A3, this implies $\operatorname{corr}(\mathbf{x},\mathbf{u})=0$, i.e., no endogeneity.

If both hypotheses succeed (solutions $\delta_{0,+}>0$ and $\delta_{0,-}>0$ exist), this procedure does not sign-identify $\operatorname{corr}(\mathbf{x},\mathbf{u})$. Under the maintained assumption $\operatorname{corr}(\mathbf{x},\mathbf{u})\neq 0$ and Assumptions A1--A3 (which ensure identification), this outcome is nongeneric; if it arises, further identifying information is needed.$\square$
\end{proof}

\subsection{Proof of Theorem \ref{l:x1}}\label{a:x1}
\begin{proof}

Let \[
\mathbf H(\delta) \;:=\; \mathbb E\!\big[\mathbf e(\delta)\,\mathbf e(\delta)'\,\big|\,\mathbf s\big] \;\in\; \mathbb R^{n\times n},
\] express as $\mathbf H = \mathbf H(\zeta)$ be a $n \times n$ matrix depending on scalar parameter $\zeta$ defined in equation \eqref{x1s}, and denote $\mathbf H'(\zeta) := \frac{\partial\mathbf H}{\partial\zeta}$. Recall that $D(\delta)
\;=\; \big\|\mathbf H(\delta)-\mathbf I_n\big\|_F^2
\;=\; \operatorname{tr}\!\Big(\big[\mathbf H(\delta)-\mathbf I_n\big]^2\Big).$
We expand it as and use $\zeta$ as the parameter depending on deeper parameter $\delta$: $D(\zeta) = \operatorname{tr}[\mathbf H^2] - 2\,\operatorname{tr}[\mathbf H] + n$ with $\mathbf H =\mathbf  H(\zeta)$ a $n\times n$ matrix, 
using $\tfrac{\partial}{\partial\zeta}\operatorname{tr}[\mathbf H] = \operatorname{tr}[\mathbf H']$ and 
$\tfrac{\partial}{\partial\zeta}\operatorname{tr}[\mathbf H^2] = 2\operatorname{tr}[\mathbf H'\mathbf H]$, we obtain
\begin{equation}
\frac{\partial D}{\partial\zeta}
= 2\,\operatorname{tr}[\mathbf H'(\mathbf H - \mathbf I_n)].
\label{eq:dD}
\end{equation}
\begin{equation}
{\frac{\partial D}{\partial\zeta} = 0 \iff \text{tr}[\mathbf d'({\mathbf H - \mathbf I_n)}] = 0.}
\end{equation}

Expanding:
\begin{equation}
\text{tr}[\mathbf d' \mathbf H] - \text{tr}[\mathbf d' \mathbf I_n] = 0 \implies \text{tr}[\mathbf d' \mathbf H] = \text{tr}[\mathbf d].
\end{equation}
However, $\mathbf d \perp H$ requires:
\begin{equation}
\text{tr}[\mathbf d' H] = 0
\end{equation}
These are equivalent \emph{only if} $\text{tr}[\mathbf d] = 0$.

Assuming the consistent FGLS specification,  for $\mathbf H(\zeta) = \sigma^2 I_n + \zeta\mathbf  d + \alpha \mathbf z_0$, we have $\mathbf H'(\zeta) = \mathbf d$, so
\[
\frac{\partial D}{\partial\zeta}
= 2\!\left[(\sigma^2-1)\operatorname{tr}[\mathbf d]
+ \zeta\,\operatorname{tr}[\mathbf d^2]
+ \alpha\,\operatorname{tr}(\mathbf d \mathbf z_0)\right].
\]
Setting $\tfrac{\partial{\mathbf D}}{\partial\zeta} = 0$ and noting $\tfrac{\mathbf \partial^2D}{\partial\zeta^2} = 2\operatorname{tr}(\mathbf d^2) > 0$ yields the global minimum
\begin{equation}
\zeta^*
= \frac{(1-\sigma^2)\operatorname{tr}[\mathbf d] - \alpha\operatorname{tr}[\mathbf d\mathbf z_0]}
{\operatorname{tr}[\mathbf d^2]}.
\label{eq:zeta_star}
\end{equation}

From~\eqref{eq:zeta_star}, we conclude that
\begin{equation}
{
\zeta^* = 0
\;\iff\;
(1-\sigma^2)\operatorname{tr}[\mathbf d] = \alpha\operatorname{tr}[\mathbf dz_0].
}
\end{equation}
 \text{If $\alpha = 0$},
\(
\zeta^* = 0 \implies \operatorname{tr}[\mathbf d] = 0.
\) This case holds when we have a homoscedastic case. However, we have a heteroscedastic case ($\alpha\neq0$), then
 \text{if $\mathbf d \perp \mathbf z_0$ (i.e., $\operatorname{tr}[\mathbf d\mathbf z_0] = 0$):}
\(
\zeta^* = 0 \implies \operatorname{tr}[\mathbf d] = 0.
\) That is, when  any deviations from the true IV $\mathbf z_0$ satisfy $\text{tr}[\mathbf d] = 0$, $\implies \mathbf d \perp \mathbf H$. Therefore, a SIV ${\mathbf s-\mathbf d}|_{\frac{\partial D}{\mathbf \partial\zeta} = 0} \xrightarrow[]{p} {\mathbf s^\star=\mathbf z_0}$.
$\square$
\end{proof}

\subsection{Proof of Lemma \ref{lemma:uc}}\label{alemma:uc}
\begin{proof} For any $\delta\in\mathcal{D}$ write
\[
\big| \widehat D_n(\delta) - D(\delta)\big|
= \big| \|{\mathbf {\widehat H_n}(\delta)-\mathbf I}\|_F^2 - \|{\mathbf H(\delta)-\mathbf I}\|_F^2 \big|.
\]
Use the algebraic identity $a^2-b^2=(a-b)(a+b)$ with $a=\|{\mathbf {\widehat H_n}}(\delta)-{\mathbf I}\|_F$ and $b=\|{\mathbf H(\delta)-\mathbf I}\|_F$ to obtain
\[
\big| \widehat D_n(\delta) - D(\delta)\big|
\le \big\|{\mathbf {\widehat H_n}(\delta)-\mathbf H}(\delta)\big\|_F\;\big( \|{\mathbf {\widehat H_n}}(\delta)-\mathbf I\|_F + \|{\mathbf H(\delta)-\mathbf I}\|_F \big).
\]
By Assumption A6, $\sup_{\delta}\|{\mathbf {\widehat H_n}(\delta)-\mathbf H}(\delta)\|_F \xrightarrow{p}0$. Also continuity of $\mathbf H(\delta)$ on compact $\mathcal{D}$ implies $\sup_{\delta}\|{\mathbf H(\delta)-\mathbf I})\|_F<\infty$. Moreover
\[
\sup_{\delta}\|{\mathbf {\widehat H_n}(\delta)-\mathbf I}\|_F \le \sup_{\delta}\|{\mathbf {\widehat H_n}(\delta)-H}(\delta)\|_F + \sup_{\delta}\|{\mathbf H(\delta)-\mathbf I}\|_F,
\]
so the bracket term is stochastically bounded. Hence, the right-hand side is $o_p(1)$ uniformly in $\delta$, proving the lemma. \(\square\)\end{proof}

\subsection{Proof of Proposition \ref{prop:ident_consis}}\label{aprop:ident_consis}
\begin{proof}
We divide the proof into three parts: (i) continuity and attainment of the minimum, (ii) population identification, and (iii) consistency of the sample minimizer.

\emph{(i) Continuity and attainment.}

In the single–equation case $p=1$, we work with the stacked first–stage residual
vector $\mathbf e(\delta)\in\mathbb R^n$ and its conditional second–moment
matrix
\[
\mathbf H(\delta)
:= \mathbb E\big[\mathbf e(\delta)\mathbf e(\delta)' \mid \mathbf s\big]
\in\mathbb R^{n\times n}.
\]
By Assumption A5, the map $\delta\mapsto \mathbf H(\delta)$ is continuous on the
compact set $\mathcal D$, with $\mathbf H(\delta)$ symmetric for each
$\delta\in\mathcal D$.
Define the discrepancy
\[
D(\delta)
:= \big\|\mathbf H(\delta)-\mathbf I_n\big\|_F^2
= \operatorname{tr}\Big( \big[\mathbf H(\delta)-\mathbf I_n\big]^2 \Big).
\]
The mapping
\[
\mathbf M \;\mapsto\; \|\mathbf M-\mathbf I_n\|_F^2
\]
from the space of $n\times n$ real matrices to $\mathbb R$ is continuous (as a polynomial) in the entries of $\mathbf M$. Hence, by composition with the
continuous map $\delta\mapsto \mathbf H(\delta)$, the function
\[
\delta \;\mapsto\; D(\delta)
= \big\|\mathbf H(\delta)-\mathbf I_n\big\|_F^2
\]
is continuous on $\mathcal D$. Since $\mathcal D$ is compact, $D$ attains its
minimum on $\mathcal D$.


\emph{(ii) Population identification (uniqueness).}
By theorem \ref{l:x1},  
 the minimum value of $D$  is attained uniquely at $\delta_0$. This proves population identification.

\emph{(iii) Consistency of the sample minimizer.}
We will apply the standard argmin-consistency theorem (Newey--McFadden style). 
\medskip\noindent From Lemma \ref{lemma:uc}, we have $\widehat D_n \overset{p}{\to} D$ (uniform convergence in probability) on the compact set $\mathcal{D}$. By (ii) above, $D$ is continuous and has a unique minimizer at $\delta_0$. The argmin-consistency theorem (e.g.\ Theorem 2.1 in \citet{newey1994large}) states that if $\widehat D_n$ converges uniformly in probability to a continuous function $D$ that has a unique minimizer, then any sequence of measurable approximate minimizers $\hat\delta_n$ converges in probability to that unique minimizer. Applying that theorem yields $\hat\delta_n\xrightarrow{p}\delta_0$.$\square$
\end{proof}

\subsection{Proof of Corollary \ref{tx2}}\label{atx2}
\begin{proof}
{According to Theorem \ref{l:x1}, the robust DT condition, ${\delta_0}=\operatorname*{arg\,min}_{\delta}( { D})$, determines a valid SIV,  ${\mathbf s^\star}={\mathbf x}+k\delta_0 {\mathbf r}$ with $k=(-1)sign[\operatorname{cov}(\mathbf x,\mathbf u)]$ such that $\mathbb{E}({\mathbf u'\mathbf s^\star})=0$. Then, $\beta$, a parameter of the model in  in \eqref{e2}, is identified by an IV estimator: ${\beta}_{IV}={(\mathbf x'\mathbf s^\star)^{-1} \mathbf x'\mathbf y}$}.
\end{proof}

\subsection{Proof of Lemma \ref{lx3}}\label{alx3}
{According to  Theorem \ref{l:x1}, for the difference between the conditional variance of the first-stage error terms, ${D}(\delta)$ determined for all $\delta \in(0,\bar{\delta})$,  
at point ${\delta_0}=\operatorname*{arg\,min}_{\delta}( { D})$, $\mathbb{E}({\mathbf  u\mid {\mathbf s^\star=\mathbf z_0}})=0$ holds.  
According to Lemma \ref{lemma:uc}, ${ \widehat D_n }{\overset{p}{\to} }{ D}$}. Therefore, \mbox{$\hat{\delta}_0=\operatorname*{arg\,min}_{\delta}( {\mathbf D}_E= \{\widehat{D}_n(\delta):\delta\in(0,\bar{\delta})\}$}  identifies the SIV ${\mathbf {\hat s}^\star}={\mathbf x}+k\hat{\delta}_0 {\mathbf r}$  such that  ${\mathbf {\hat s}^\star\xrightarrow[]{p}\mathbf z_0}$ holds. Since by construction $\mathbb{E}({\mathbf u\mid \mathbf z_0})=0$, $  \hat{\mathbf s}^\star\xrightarrow[]{p}{\mathbf z_0}$ implies that   \mbox{$\plim \limits_{n\rightarrow \infty}\mathbb E({\mathbf u\mid  \mathbf {\hat{s}}^\star})=0$.}$\square$

\subsection{Proof of Lemma \ref{l2a}}\label{al3a}
\begin{proof}
 Let us denote by ${\mathbf V_0=\mathbf V|\mathbf s^\star}$  the matrix of exogenous variables that includes the true IV $\mathbf s^\star$ determined as an SIV that satisfies the DT condition, and denote by $\mathbf X=\mathbf V|\mathbf x$ the matrix of the regressors.  A standard assumption for the IV estimator to be consistent is $\plim\limits_{n\rightarrow \infty}n^{-1}{\mathbf V_0' \mathbf u= 0}$. That is, the error terms are asymptotically uncorrelated with the instruments.
We can express the SIV estimator as
\begin{eqnarray}\label{a9a}\mathbf b_{SIV} = { (\mathbf V'_0\mathbf x)^{-1}\mathbf V'_0 \mathbf X\beta_0 + (\mathbf V'_0\mathbf  X)^{-1}\mathbf V'_0 \mathbf u}\\\nonumber =\beta_0 + (n^{-1}{\mathbf V'_0 \mathbf X)^{-1}} n^{-1}{\mathbf V'_0 \mathbf u}.\end{eqnarray}
Since the  $\plim \limits_{n\rightarrow \infty}n^{-1}{ (\mathbf V'_0 \mathbf X)^{-1}}${ is deterministic and nonsingular by assumption},  then $\mathbf b_{SIV}$
satisfies the first asymptotic identification condition because 
$\plim\limits_{n\rightarrow \infty}n^{-1}{\mathbf V'_0 \mathbf u= 0}$ holds due to Lemma \ref{lx3}. 

Next, let us consider the second condition for $\beta\neq\beta_0$.
For the true DGP, we have \begin{equation}\label{a92}{ \mathbf y=\mathbf X\beta_0+\mathbf u}.\end{equation} We transform \eqref{a92} to
\begin{equation}\label{a93} {\mathbf y-\beta \mathbf x=\mathbf X\beta_0+\mathbf u-\mathbf X\beta =\mathbf u+\mathbf X(\beta_0-\beta)}.\end{equation}
Using\eqref{a93}, we write $${\mathbf \alpha(\beta) }=\plim \limits_{n\rightarrow \infty}n^{-1}{\mathbf V'_0 (\mathbf y-\beta \mathbf x)}$$
$$=\plim\limits_{n\rightarrow \infty}n^{-1}{\mathbf V'_0 (\mathbf u+\mathbf X(\beta_0-\beta))}. $$
Since $\plim\limits_{n\rightarrow \infty}n^{-1}{\mathbf V'_0\mathbf  u= 0}$, we have
$${\mathbf \alpha(\beta) }=\plim \limits_{n\rightarrow \infty}n^{-1}{\mathbf V'_0 \mathbf X(\beta_0-\beta)}. $$
It is known  that $\plim \limits_{n\rightarrow \infty}n^{-1}{ (\mathbf V'_0 \mathbf X)^{-1}}${can be assumed as deterministic and nonsingular}, thus, the  probability limit ${\mathbf \alpha(\beta) }\neq 0$ as soon as $\mathbf\beta\neq \beta_0$.The second asymptotic identification condition is satisfied; therefore, the SIV estimator is consistent.
$\square$\end{proof}

%
%
%

\subsection{Step-by-step guide to applying the SIV method}\label{sec:algorithm}
\begin{enumerate}
\item Define $\mathbf{y}$ and $\mathbf{x}$ as residuals from a projection onto the space spanned by $\tilde{\mathbf{V}}$, that is:
$$\mathbf{y} = (\mathbf{I} - \mathbf{P}{\tilde{\mathbf{V}}}) \tilde{\mathbf{y}}\,\text{ and } 
\mathbf{x} = (\mathbf{I} - \mathbf{P}{\tilde{\mathbf{V}}}) \tilde{\mathbf{x}},$$
where $\tilde{\mathbf{x}}$ and $\tilde{\mathbf{y}}$ denote the original vectors for the outcome variable and the endogenous regressor, $\mathbf{I}$ is the identity matrix, and $\mathbf{P}_{\tilde{\mathbf{V}}}$ is the projection matrix onto the vector space spanned by the matrix of exogenous regressors $\tilde{\mathbf{V}}$.
\item Choose a vector $\mathbf{r} \perp \mathbf{x}$ in plane $\mathcal{W}$ spanned by $\mathbf x$ and $\mathbf y$. In practice, one can use the residual of the regression $\mathbf x=\beta y+\epsilon$
\item  Follow Corollary \ref{c2a} and determine the true sign of $\operatorname{cov}(\mathbf{x, u})$ based on the results obtained from both sign assumptions. Assume the sign of $\operatorname{cov}({\mathbf x,u})$. 
\item Assume the starting value for the scalar parameter $\delta$ to be a small positive number, e.g., $0.001$.
\item \label{i11} Construct $\mathbf{s} = \mathbf{x} + k\delta \mathbf{r}$, where $k = (-1) \cdot \text{sign}({\operatorname{cov}}(\mathbf{x}, \mathbf{u}))$.
\item Obtain first-stage residuals $\hat{\mathbf{e}}_i$ and estimate
$\hat{\sigma}^2 = \tfrac{1}{np}\sum_i\operatorname{tr}(\hat{\mathbf{e}}_i\hat{\mathbf{e}}_i')$.
\item Form the moment vector
$\hat{m}_i = \operatorname{vec}(\hat{\mathbf{e}}_i\hat{\mathbf{e}}_i' - \hat{\sigma}^2\mathbf{I}) \otimes \mathbf{s}_i$. In our case, it is just $\hat{m}_i(\delta)=(\hat e_i^2-\hat\sigma^2)\,s_i$.
\item Estimate the covariance
$\hat{\mathbf{S}} = \tfrac{1}{n}\sum_i (\hat{m}_i - \bar{m})(\hat{m}_i - \bar{m})'$.
\item Compute the test statistic
\[
J = n \bar{m}' \hat{\mathbf{S}}^{-1} \bar{m},
\]
which under the null is asymptotically $\chi^2_d$ distributed.
The optimal $\delta_0$ is obtained as
\[
\hat{\delta}_0 = \arg\min_{\delta\in\Delta} J_n(\delta).
\]

\item \label{i13} Compute $\mathbf{s}_0 = \mathbf{x} + k \delta_0 \mathbf{r}$.

    \item Use $\mathbf{s}_0$ found for the true sign as the instrumental variable in the two-stage least squares (2SLS) estimation procedure to obtain consistent estimates of the causal effect of $\mathbf{x}$ on $\mathbf{y}$.
\end{enumerate}

\textbf{Bootstrap inference.} Because $\delta_0$ is estimated in a first step, the asymptotic distribution of $\hat{\beta}_{\text{SIV}}$ may differ from standard IV in finite samples. We recommend bootstrap inference:
\begin{enumerate}
\item Draw $B$ bootstrap samples from the data.
\item For each bootstrap sample $b = 1, \ldots, B$:
\begin{itemize}
\item Estimate $\hat{\delta}_0^{(b)}$ via the DT condition.
\item Construct $\hat{\mathbf s}^{\star(b)} = \mathbf x^{(b)} + k\hat{\delta}_0^{(b)} \mathbf r^{(b)}$.
\item Estimate $\hat{\beta}_{\text{SIV}}^{(b)}$ using $\hat{\mathbf s}^{\star(b)}$ as the instrument.
\end{itemize}
\item Use the bootstrap distribution of $\{\hat{\beta}_{\text{SIV}}^{(b)}\}_{b=1}^B$ for inference (standard errors, confidence intervals, hypothesis tests).
\end{enumerate}
This bootstrap procedure accounts for the sampling variability in $\hat{\delta}_0$ and typically provides more accurate finite-sample inference than asymptotic approximations.
The following procedures are applied to either a dataset or a bootstrap sample. In the case of bootstrapping, it is essential to save the values of \(\hat{\beta}\) and \(\delta_0\) obtained for each sample. The means of the sample \(\hat{\beta}\) can be reported directly as the SIV estimates. Additionally, one can use the mean of the sample \(\delta_0\) to determine the SIV using this estimate. Subsequently, the IV estimation method can be applied using the obtained SIV as the instrumental variable.%
\subsection{Step-by-step guide to the heteroscedasticity robust SIV method}
We assume that the true sign of $cor({\mathbf x, \mathbf u})$ is determined using the simple approach above.
\begin{enumerate}

\item Steps 1-5 of the above procedure 
\item  Compute the residuals of the first-stage regression: $\mathbf{e} = \mathbf{x} - \gamma_{OLS} \mathbf{s}$ and $\mathbf{e}_g = \mathbf{x} - \gamma_{FGLS} \mathbf{s}$. 
\item Estimate predicted values of  regressions $\bf e^2=s+\epsilon$ and $\bf e_g^2=s+\epsilon_g$.
\item \textbf{Parametric approach:}  Compute $X^{2}=\dfrac{\textrm{SSR}/2}{(\textrm{SSE}/n)^{2}}$ for OLS and $ X_g^{2}=\dfrac{\textrm{SSR}_g/2}{(\textrm{SSE}_g/n)^{2}}$, for FGLS case, where $\textrm{SSR}=\sum_{i=0}^n(\hat{\hat{e}}_i-\bar{\hat{e}})^2$ and $\textrm{SSR}_g=\sum_{i=0}^n(\hat{\hat{e}}_{gi}-\bar{\hat{e}}_g)^2.$
\item \label{i2} Determine ${ D(\delta)}=P({\chi^2(1)<X^{2}(\delta)})/P({\chi^2(1)<X^{2}_{g}(\delta)}),$
 for all $\delta \in (0, \bar{\delta})$, and construct  the  locus:  ${ \mathbf D_E}=\{{D(\delta=0)..., \, D(\delta=\bar{\delta})}\}$.

\item \textbf{Non-parametric approach:} For each $\delta$, let $F_n(\delta)$ and $G_n(\delta)$ denote the empirical CDFs of $\{\hat{e}_i^2(\delta)\}$ and $\{\hat{e}_{gi}^2(\delta)\}$, respectively. Define the two-sample Anderson-Darling statistic:
$$A_{n,m}^2(\delta) = \frac{nm}{(n+m)^2} \sum_{k=1}^{n+m} \frac{[F_n(x_k) - G_m(x_k)]^2}{H_{n+m}(x_k)[1 - H_{n+m}(x_k)]},$$
where $H_{n+m}$ is the combined empirical CDF.

\item Construct the locus $\mathbf{D_E} = \{ A_{n,m}^2(\delta): \delta \in (0,\bar{\delta})\}$.

    \item Determine \mbox{$\delta_0=\operatorname*{arg\,min}_{\mathbf \delta}( \mathbf{ D_E})$}.
\item Compute $\mathbf{s}_0 = \mathbf{x} + k \delta_0 \mathbf{r}$.
    \item Use $\mathbf{s}_0$ as the instrumental variable in the two-stage least squares (2SLS) estimation procedure to obtain consistent estimates of the causal effect of $\mathbf{x}$ on $\mathbf{y}$.
\end{enumerate}

\subsection{Multiple endogenous variables}\label{app:multiple_endog}
If there are multiple endogenous variables, the SIV method can be applied with some adjustments.
We begin by examining a two-variable model based on \citet{ang1}:
\begin{eqnarray}\label{me1} {\mathbf y }= {\mathbf x_1}\beta_1 +{\mathbf  x_2}\beta_2 + {\mathbf u },\\
{\mathbf x_1} = {\mathbf Z}\Pi_1+{\mathbf v_1},\nonumber\\
{\mathbf x_2 }= {\mathbf Z}\Pi_2 + {\mathbf v_2},\nonumber \end{eqnarray}
where $\mathbf y$, $\mathbf x_1$, $\mathbf x_2$, $\mathbf u$, $\mathbf v_1$ and $\mathbf v_2$ are $n \times 1$ vectors, with $n$ the number
of observations. $\mathbf Z$ is an $n \times k_z$ matrix of instruments, with $k_z \geq 2$
($\Pi_1$ and $\Pi_2$ are $k_z \times 1$ vectors).

Here, as in the single endogenous variable case, we recall that any model with
additional matrix ${\mathbf V}\in\mathcal{H}$ of  predetermined or exogenous regressors, including a vector of ones, can be reduced to this form
by defining $\mathbf y$,
$\mathbf x_1$, $\mathbf x_2$, and $\mathbf Z$ as residuals from the orthogonal projection onto the closed linear subspace spanned by ${\mathbf V}$. That is,
${\mathbf y}=(\mathbf I-P_{{\mathbf V}} ) \tilde{\mathbf y}$,
${\mathbf x_1}=(\mathbf I-P_{{\mathbf V}}) \tilde{\mathbf x}$, ${\mathbf x_2}=(\mathbf I-P_{{\mathbf V}}) \tilde{\mathbf x}$, ${\mathbf Z}=(\mathbf I-P_{{V}} ) \tilde{\mathbf Z}$, where $\mathbf P_{{V}}$ be the orthogonal projection matrix onto $\mathrm{span}({\mathbf V})$,  and $ \mathbf I$ is the identity matrix on $\mathcal{H}$; ${\mathbf y}$,  $\tilde{\mathbf x}_1$,  $\tilde{\mathbf x}_2$,  and $\tilde{\mathbf Z}$ denotes the original vectors in this extended case. %
Using the Frisch-Waugh-Lowell (FWL) Theorem, $\beta_1$ in \eqref{me1} can be estimated using the follwoing regression instead
\begin{equation}\label{me2}
{\mathbf y=\mathbf M_2 \mathbf x_1}\beta_1+{\mathbf v},
\end{equation}
where $\mathbf M_2=\mathbf I-\mathbf P_2$ is the ortogonal projection, with $\mathbf I$ being the identity vector and $\mathbf P_2=\mathbf x_2(\mathbf x_2'\mathbf x_2)^{-1} \mathbf x_2'$ is the projection onto $\mathbf x_2$.
Therefore, \eqref{me2} is identical to \eqref{e12} of the one-variable regression model. 
Then, using $\mathbf M_{2} \mathbf x_1\equiv \mathbf x$, we can apply the SIV method and determine an SIV $\mathbf s_1^\star$ that satisfies the DT condition. Next, repeating the same procedure for $\mathbf x_2$, we can also determine an SIV for this variable $\mathbf s_2^\star$.  These two SIVs allow us to estimate regression \eqref{me1}, using  $\mathbf Z=[\mathbf s_1^\star,\mathbf  s_2^\star]$, an $n \times k_z$ matrix of synthetic instruments, with $k_z = 2$. It straightforward to extend this procedure to cases with more than 2 endogenous variables. 

\section{\bf Sources of the data sets}

1. Mroz. dta: T.A. Mroz (1987), The Sensitivity of an Empirical Model of Married Women's Hours of Work to Economic and Statistical Assumptions, \textit{Econometrica} 55, 765--799.
\begin{verbatim}http://www.cengage.com/aise/economics/wooldridge_3e_datasets/\end{verbatim}

2. ipehd\_qje2009\_master.dta:  Becker and Woessmann (2009), Was Weber Wrong? A Human Capital Theory of Protestant Economic History, \textit{Quarterly Journal of Economics} 124 (2): 531--596. \begin{verbatim}https://www.ifo.de/sites/default/files/ipehd_qje2009_data_tables.zip\end{verbatim}

3. X401ksubs.dta: Introductory Econometrics: A Modern
Approach, Fifth Edition, Jeffrey M. Wooldridge. Source: Abadie, A. (2003). Semiparametric instrumental variable estimation of treatment
response models, \textit{Journal of Econometrics }113(2), 231--263.
\begin{verbatim}http://www.cengage.com/aise/economics/wooldridge_3e_datasets/\end{verbatim}
\subsection{Data Generating Process}\label{dgp}

We construct a DGP that mimics realistic features of economic data: non-normal distributions, heteroscedasticity, and substantial endogeneity bias. Following \citet{jonespewsey}, we use the sinh-arcsinh transformation to generate flexible distributional shapes.

\textbf{The sinh-arcsinh transformation.} Define the transformation
\[
H(\tilde{\mathbf x}; \epsilon, \kappa) = \sinh[\kappa \sinh^{-1}(\tilde{\mathbf x}) - \epsilon],
\]
where $\epsilon \in \mathbb{R}$ controls location/skewness and $\kappa \in \mathbb{R}_+$ controls kurtosis. Applying this to the normal CDF $\Phi(\cdot)$ yields the sinh-arcsinh family:
\[
\text{Skew-Normal}:=S(\tilde{\mathbf x}; \epsilon, \kappa) = \Phi[H(\tilde{\mathbf x}; \epsilon, \kappa)].
\]
When $\epsilon = 0$ and $\kappa = 1$, we recover the standard normal distribution. Different parameter values generate heavy-tailed, light-tailed, symmetric, and skewed distributions.

\textbf{Generating the endogenous regressor.} We create a sequence $\nu$ of $N$ equally-spaced points and transform via:
\[
\tilde{\mathbf x} = 7 \cdot H(\nu, 0, 0.5) + 1.1 \cdot \text{Uniform}(-1.01, 1.01).
\]
This produces a non-normal distribution with moderate kurtosis, mimicking observed patterns in economic variables.

\textbf{Constructing the endogenous error.} We generate the structural error $\mathbf u$ with two components:
\begin{enumerate}
\item \textbf{Endogenous component} (correlated with $\mathbf x$):
\[
e = (\tilde{\mathbf x} - \bar{\mathbf x}) + \varepsilon, \quad \varepsilon \sim N(0, \sigma_x^2).
\]

\item \textbf{Exogenous component} (uncorrelated with $\mathbf x$):
\[
u_1 = \text{Uniform}(-0.5, 0.5) + \text{Skew-Normal}(0, 5, 1.2),
\]
where we extract the residual $\mathbf v$ from regressing $\mathbf u_1$ on $\tilde{\mathbf x}$ and $\mathbf w$, then rescale:
\[
\mathbf v = \mathbf v \cdot \frac{\bar{\mathbf x}}{2}.
\]
\end{enumerate}

\textbf{Sign of endogeneity.} We construct the total error as: $\mathbf u = \mathbf e + \mathbf v $ assuming $\text{cov}(\mathbf x,\mathbf u) > 0$.
This allows us to test the sign determination procedure.

\textbf{Structural equation.} The outcome is generated as:
\[
\tilde{\mathbf y} = 1 + 2\tilde{\mathbf x} + 0.5\mathbf w + \mathbf u,
\]
where $\mathbf w \sim N(20, 10)$ is an exogenous control. The true causal effect is $\beta = 2$.

\textbf{Normalization.} We residualize both $\tilde{\mathbf y}$ and $\tilde{\mathbf x}$ on $\mathbf w$ to obtain $\mathbf y$ and $\mathbf x$, then standardize:
\[
\mathbf y = \frac{\mathbf y - \bar{\mathbf y}}{\sigma_y}, \quad \mathbf x = \frac{\mathbf x - \bar{\mathbf x}}{\sigma_x}.
\]

\textbf{Orthogonal vector.} We construct $\mathbf r$ as the residual from regressing $\mathbf y$ on $\mathbf x$, standardized to have unit variance. By construction, $\mathbf r \perp \mathbf x$.


\section{A sample code to implement the SIV method}\label{guide}
One can install the R-package and run the SIV method as follows.
{\begin{lstlisting}
remotes::install_git("https://github.com/ratbekd/siv.git")
library(siv)
## Example based on Mroz data
data <- wooldridge::mroz  # Use sample data set
data <- data[complete.cases(data), ]  # Remove missing values
# Run regression
#Y="hours" # outcome variable
#X="lwage"# endogenous variable
#H=c("educ", "age", "kidslt6", "kidsge6", "nwifeinc")# exogenous variables
result <- siv_reg(data, "hours", "lwage", c("educ", "age", "kidslt6", "kidsge6", "nwifeinc"), reps=5)
iv1 <- (result$IV1)
iv2 <-(result$IV2)
iv3 <-(result$IV3)
summ.iv1 <- summary(iv1, diagnostics=TRUE)
summ.iv2 <- summary(iv2, diagnostics=TRUE)
summ.iv3 <- summary(iv3, diagnostics=TRUE)

# In case of multiple endogenous variables use the following function
result <- msiv_reg(data, "hours", c("lwage", "educ"),c( "age", "kidslt6", "kidsge6", "nwifeinc"), reps=5)
iv1 <-(result$IV1)# a simple SIV
iv2 <-(result$IV2)# a robust parametric SIV (RSIV-p)
iv3 <-(result$IV3)# a robust non-parametric SIV (RSIV-n)
\end{lstlisting}}
\singlespacing

 \bibliographystyle{chicago}
 	\bibliography{iv}

@Article{angrist,
  author  = {Angrist, Joshua D. and Krueger, Alan B.},
  title   = {Instrumental Variables and the Search for Identification: From Supply and Demand to Natural Experiments},
  journal = {Journal of Economic Perspectives},
  year    = {2001},
  volume  = {15},
  number  = {4},
  pages   = {69-85},
  month   = {December},
  doi     = {10.1257/jep.15.4.69},
  url     = {http://www.aeaweb.org/articles?id=10.1257/jep.15.4.69},
}

@article{jonespewsey,
    author = {Jones, M. C. and Pewsey, Arthur},
    title = "{Sinh-arcsinh distributions}",
    journal = {Biometrika},
    volume = {96},
    number = {4},
    pages = {761-780},
    year = {2009},
    month = {10},
    abstract = "{We introduce the sinh-arcsinh transformation and hence, by applying it to a generating distribution with no parameters other than location and scale, usually the normal, a new family of sinh-arcsinh distributions. This four-parameter family has symmetric and skewed members and allows for tailweights that are both heavier and lighter than those of the generating distribution. The central place of the normal distribution in this family affords likelihood ratio tests of normality that are superior to the state-of-the-art in normality testing because of the range of alternatives against which they are very powerful. Likelihood ratio tests of symmetry are also available and are very successful. Three-parameter symmetric and asymmetric subfamilies of the full family are also of interest. Heavy-tailed symmetric sinh-arcsinh distributions behave like Johnson SU distributions, while their light-tailed counterparts behave like sinh-normal distributions, the sinh-arcsinh family allowing a seamless transition between the two, via the normal, controlled by a single parameter. The sinh-arcsinh family is very tractable and many properties are explored. Likelihood inference is pursued, including an attractive reparameterization. Illustrative examples are given. A multivariate version is considered. Options and extensions are discussed.}",
    issn = {0006-3444},
    doi = {10.1093/biomet/asp053},
    url = {https://doi.org/10.1093/biomet/asp053},
    eprint = {https://academic.oup.com/biomet/article-pdf/96/4/761/588296/asp053.pdf},
}

@article{banerjee2005,
  author    = {Abhijit Banerjee and Lakshmi Iyer},
  title     = {History, Institutions, and Economic Performance: The Legacy of Colonial Land Tenure Systems in India},
  journal   = {American Economic Review},
  year      = {2005},
  volume    = {95},
  number    = {4},
  pages     = {1190--1213},
  doi       = {10.1257/0002828054825574}
}

@article{becker,
 ISSN = {00335533, 15314650},
 URL = {http://www.jstor.org/stable/40506238},
 abstract = {Max Weber attributed the higher economic prosperity of Protestant regions to a Protestant work ethic. We provide an alternative theory: Protestant economies prospered because instruction in reading the Bible generated the human capital crucial to economic prosperity. We test the theory using county-level data from late-nineteenth-century Prussia, exploiting the initial concentric dispersion of the Reformation to use distance to Wittenberg as an instrument for Protestantism. We find that Protestantism indeed led to higher economic prosperity, but also to better education. Our results are consistent with Protestants' higher literacy accounting for most of the gap in economic prosperity.},
 author = {Sascha O. Becker and Ludger Woessmann},
 journal = {The Quarterly Journal of Economics},
 number = {2},
 pages = {531--596},
 publisher = {Oxford University Press},
 title = {Was {Weber} {W}rong? {A} {Human Capital Theory of Protestant Economic History}},
 urldate = {2023-12-27},
 volume = {124},
 year = {2009}
}

@article{Bound1995,
author = { John   Bound  and  David A.   Jaeger  and  Regina M.   Baker },
title = {Problems with Instrumental Variables Estimation when the Correlation between the Instruments and the Endogenous Explanatory Variable is Weak},
journal = {Journal of the American Statistical Association},
volume = {90},
number = {430},
pages = {443-450},
year  = {1995},
publisher = {Taylor & Francis},
doi = {10.1080/01621459.1995.10476536},

URL = { 
        https://doi.org/10.1080/01621459.1995.10476536
    
},
eprint = { 
        https://doi.org/10.1080/01621459.1995.10476536
    
}

}

@ARTICLE{butler,
title = {The Simple Geometry of Correlated Regressors and IV Corrections },
author = {Richard J. Butler },
year = {2016},
journal = {International Journal of Statistics in Medical Research},
volume = {5},
pages = {182--188},
url = {http://dx.doi.org//10.6000/1929-6029.2016.05.03.4},
}

@Article{CHERNOZHUKOV200868,
  author   = {Victor Chernozhukov and Christian Hansen},
  title    = {The reduced form: A simple approach to inference with weak instruments},
  journal  = {Economics Letters},
  year     = {2008},
  volume   = {100},
  number   = {1},
  pages    = {68 - 71},
  issn     = {0165-1765},
  doi      = {http://dx.doi.org/10.1016/j.econlet.2007.11.012},
  keywords = {Heteroskedasticity, Autocorrelation, Weak identification },
  url      = {http://www.sciencedirect.com/science/article/pii/S0165176507004107},
}

@Book{davidson,
  title     = {Econometric Theory and Methods},
  publisher = {Oxford University Press, New York, USA},
  year      = {2009},
  author    = {  Davidson, Russell and  MacKinnon,James G. },
  edition   = {Second},
 }

@Article{haus,
  author    = {Jerry Hausman},
  title     = {Mismeasured Variables in Econometric Analysis: Problems from the Right and Problems from the Left},
  journal   = {The Journal of Economic Perspectives},
  year      = {2001},
  volume    = {15},
  number    = {4},
  pages     = {57-67},
  issn      = {08953309},
  publisher = {American Economic Association},
  url       = {http://www.jstor.org/stable/2696516},
}

@Book{hill,
  title     = {Principles of Econometrics},
  publisher = {Wiley},
  year      = {2010},
  author    = { Hill,R. Carter and Griffiths, William E.  and Lim, Guay C. },
  edition   = {3},
}

@article{KLEIN2010154,
title = "Estimating a class of triangular simultaneous equations models without exclusion restrictions",
journal = "Journal of Econometrics",
volume = "154",
number = "2",
pages = "154 -- 164",
year = "2010",
note = "",
issn = "0304-4076",
doi = "http://dx.doi.org/10.1016/j.jeconom.2009.05.005",
url = "http://www.sciencedirect.com/science/article/pii/S0304407609001675",
author = "Roger Klein and Francis Vella",
keywords = "Triangular semiparametric models",
keywords = "Heteroscedasticity",
keywords = "Identification",
}

@article{lewbel,
author = { Arthur   Lewbel },
title = {Using Heteroscedasticity to Identify and Estimate Mismeasured and Endogenous Regressor Models},
journal = {Journal of Business \& Economic Statistics},
volume = {30},
number = {1},
pages = {67-80},
year = {2012},
doi = {10.1080/07350015.2012.643126},

URL = { 
        http://dx.doi.org/10.1080/07350015.2012.643126
    
},
eprint = { 
        http://dx.doi.org/10.1080/07350015.2012.643126
    
}

}

@Article{moon,
author = {Hyungsik Roger Moon and Frank Schorfheide},
title = {Estimation with overidentifying inequality moment conditions},
journal = {Journal of Econometrics},
volume = {153},
number = {2},
pages = {136-154},
year = {2009},
issn = {0304-4076},
doi = {https://doi.org/10.1016/j.jeconom.2009.05.003},
url = {https://www.sciencedirect.com/science/article/pii/S0304407609001407},

keywords = {Empirical likelihood estimation, Generalized method of moments, Inequality moment conditions, Instrumental variable estimation, Monetary policy rules},
abstract = {This paper derives limit distributions of empirical likelihood estimators for models in which inequality moment conditions provide overidentifying information. We show that the use of this information leads to a reduction of the asymptotic mean-squared estimation error and propose asymptotically uniformly valid tests and confidence sets for the parameters of interest. While inequality moment conditions arise in many important economic models, we use a dynamic macroeconomic model as a data-generating process and illustrate our methods with instrumental variable estimators of monetary policy rules. The results obtained in this paper extend to conventional GMM estimators.}
}

@article{mroz,
 ISSN = {00129682, 14680262},
 URL = {http://www.jstor.org/stable/1911029},
 abstract = {},
 author = {Thomas A. Mroz},
 journal = {Econometrica},
 number = {4},
 pages = {765-799},
 publisher = {[Wiley, Econometric Society]},
 title = {The Sensitivity of an Empirical Model of Married Women's Hours of Work to Economic and Statistical Assumptions},
 volume = {55},
 year = {1987},
}

@article{Rigobon2003,
 ISSN = {00346535, 15309142},
 URL = {http://www.jstor.org/stable/3211805},
 abstract = {This paper develops a method for solving the identification problem that arises in simultaneous-equation models. It is based on the heteroskedasticity of the structural shocks. For simplicity, I consider heteroskedasticity that can be described as a two-regime process and show that the system is just identified. I discuss identification under general conditions, such as more than two regimes, when common unobservable shocks exist, and situations in which the nature of the heteroskedasticity is misspecified. Finally, I use this methodology to measure the contemporaneous relationship between the returns on Argentinean, Brazilian, and Mexican sovereign bonds-a case in which standard identification methodologies do not apply.},
 author = {Roberto Rigobon},
 journal = {The Review of Economics and Statistics},
 number = {4},
 pages = {777-792},
 publisher = {The MIT Press},
 title = {Identification through Heteroskedasticity},
 volume = {85},
 year = {2003}
}

@Article{stock,
  author    = {James H. Stock and Jonathan H. Wright and Motohiro Yogo},
  title     = {A Survey of Weak Instruments and Weak Identification in Generalized Method of Moments},
  journal   = {Journal of Business \& Economic Statistics},
  year      = {2002},
  volume    = {20},
  number    = {4},
  pages     = {518-529},
  issn      = {07350015},
  abstract  = {Weak instruments arise when the instruments in linear instrumental variables (IV) regression are weakly correlated with the included endogenous variables. In generalized method of moments (GMM), more generally, weak instruments correspond to weak identification of some or all of the unknown parameters. Weak identification leads to GMM statistics with nonnormal distributions, even in large samples, so that conventional IV or GMM inferences are misleading. Fortunately, various procedures are now available for detecting and handling weak instruments in the linear IV model and, to a lesser degree, in nonlinear GMM.},
  publisher = {[American Statistical Association, Taylor \& Francis, Ltd.]},
  url       = {http://www.jstor.org/stable/1392421},
}

@Book{wool,
  title     = {Introductory Econometrics: A Modern Approach},
  publisher = {South-Western Cengage Learning},
  year      = {2013},
  author    = {Wooldridge, Jeffrey M.},
  edition   = {5},
}

@Article{ABADIE2003231,
  author   = {Alberto Abadie},
  title    = {Semiparametric instrumental variable estimation of treatment response models},
  journal  = {Journal of Econometrics},
  year     = {2003},
  volume   = {113},
  number   = {2},
  pages    = {231 - 263},
  issn     = {0304-4076},
  abstract = {This article introduces a new class of instrumental variable (IV) estimators for linear and nonlinear treatment response models with covariates. The rationale for focusing on nonlinear models is that, if the dependent variable is binary or limited, or if the effect of the treatment varies with covariates, a nonlinear model is appropriate. In the spirit of Roehrig (Econometrica 56 (1988) 433), identification is attained nonparametrically and does not depend on the choice of the parametric specification for the response function of interest. One virtue of this approach is that it allows the researcher to construct estimators that can be interpreted as the parameters of a well-defined approximation to a treatment response function under functional form misspecification. In contrast to some usual IV models, heterogeneity of treatment effects is not restricted by the identification conditions. The ideas and estimators in this article are illustrated using IV to estimate the effects of 401(k) retirement programs on savings.},
  doi      = {https://doi.org/10.1016/S0304-4076(02)00201-4},
  keywords = {Treatment effects, Semiparametric estimation, Compliers, 401(k)},
  url      = {http://www.sciencedirect.com/science/article/pii/S0304407602002014},
}

@article{DiTraglia2021,
author = {Francis J. DiTraglia and Camilo Garc\'{i}a-Jimeno},
title = {A Framework for Eliciting, Incorporating, and Disciplining Identification Beliefs in Linear Models},
journal = {Journal of Business \& Economic Statistics},
volume = {39},
number = {4},
pages = {1038-1053},
year  = {2021},
publisher = {Taylor & Francis},
doi = {10.1080/07350015.2020.1753528},
URL = { 
            https://doi.org/10.1080/07350015.2020.1753528
    },
eprint = {     
        https://doi.org/10.1080/07350015.2020.1753528
   }
}

@article{HECKMAN2024105719,
title = {Econometric causality: The central role of thought experiments},
journal = {Journal of Econometrics},
volume = {243},
number = {1},
pages = {105719},
year = {2024},
issn = {0304-4076},
doi = {https://doi.org/10.1016/j.jeconom.2024.105719},
url = {https://www.sciencedirect.com/science/article/pii/S0304407624000654},
author = {James Heckman and Rodrigo Pinto},
keywords = {Structural equation models, Causality, Causal inference, Directed acyclic graphs, Simultaneous causality},
abstract = {This paper examines the econometric causal model and the interpretation of empirical evidence based on thought experiments that was developed by Ragnar Frisch and Trygve Haavelmo. We compare the econometric causal model with two currently popular causal frameworks: the Neyman–Rubin causal model and the Do-Calculus. The Neyman–Rubin causal model is based on the language of potential outcomes and was largely developed by statisticians. Instead of being based on thought experiments, it takes statistical experiments as its foundation. The Do-Calculus, developed by Judea Pearl and co-authors, relies on Directed Acyclic Graphs (DAGs) and is a popular causal framework in computer science and applied mathematics. We make the case that economists who uncritically use these frameworks often discard the substantial benefits of the econometric causal model to the detriment of more informative analyses. We illustrate the versatility and capabilities of the econometric framework using causal models developed in economics.}
}

@article{imbens24,
   author = "Imbens, Guido W.",
   title = "Causal Inference in the Social Sciences", 
   journal= "Annual Review of Statistics and Its Application",
   year = "2024",
   volume = "11",
   number = "Volume 11, 2024",
   pages = "123-152",
   doi = "https://doi.org/10.1146/annurev-statistics-033121-114601",
   url = "https://www.annualreviews.org/content/journals/10.1146/annurev-statistics-033121-114601",
   publisher = "Annual Reviews",
   issn = "2326-831X",
   type = "Journal Article",
   keywords = "unconfoundedness",
   keywords = "regression discontinuity",
   keywords = "causal inference",
   keywords = "synthetic controls",
   keywords = "instrumental variables",
   keywords = "double robustness",
   keywords = "difference-in-differences",
   keywords = "experiments",
   keywords = "observational studies",
   abstract = "Knowledge of causal effects is of great importance to decision makers in a wide variety of settings. In many cases, however, these causal effects are not known to the decision makers and need to be estimated from data. This fundamental problem has been known and studied for many years in many disciplines. In the past thirty years, however, the amount of empirical as well as methodological research in this area has increased dramatically, and so has its scope. It has become more interdisciplinary, and the focus has been more specifically on methods for credibly estimating causal effects in a wide range of both experimental and observational settings. This work has greatly impacted empirical work in the social and biomedical sciences. In this article, I review some of this work and discuss open questions.",
  }

@article{pettitt1976two,
  author    = {A. N. Pettitt},
  title     = {A Two-Sample Anderson-Darling Rank Statistic},
  journal   = {Biometrika},
  year      = {1976},
  volume    = {63},
  number    = {1},
  pages     = {161--168},
  doi       = {10.1093/biomet/63.1.161},
  url       = {https://doi.org/10.1093/biomet/63.1.161}
}

@misc{vives2023synthetic,
  author       = { Vives-i-Bastida,  Jaume and  Gulek, Ahmet},
  title        = {Synthetic {IV} Estimation in Panels},
  year         = {2023},
  note         = {Available at SSRN:https://ssrn.com/abstract=4716511 },
url={https://ssrn.com/abstract=4716511},
doi={http://dx.doi.org/10.2139/ssrn.4716511},
  month        = {May 10}
}

@misc{tang2024,
      title={The synthetic instrument: From sparse association to sparse causation}, 
      author={ Tang,Dingke  and  Kong, Dehan and  Wang, Linbo},
      year={2024},
      eprint={2304.01098},
      archivePrefix={arXiv},
      primaryClass={stat.ME},
      url={https://arxiv.org/abs/2304.01098}, 
}

@article{gallo,
author = {Julie Le Gallo and Antonio P\'aez},
title ={Using Synthetic Variables in Instrumental Variable Estimation of Spatial Series Models},
journal = {Environment and Planning A: Economy and Space},
volume = {45},
number = {9},
pages = {2227-2242},
year = {2013},
doi = {10.1068/a45443},

URL = { 
    
        https://doi.org/10.1068/a45443
    
    

},
eprint = { 
    
        https://doi.org/10.1068/a45443
    
    

}
,
    abstract = { Identification of suitable instruments is a critical step for the implementation of instrumental variable (IV) estimation. A challenge is that, as the level of correlation between the instruments and the endogenous variable increases, so also do the chances that the instruments themselves will be correlated with the error terms. Contrariwise, when the correlation with the endogenous variable is low, the instruments may be weak and perform poorly. The objective of this paper is to explore the use of synthetic variables in IV estimation when the analysis is of spatial data series. The point of departure is the use of eigenvector analysis of the usual spatial weights matrix used in spatial statistics. Eigenvectors obtained from a transformed weights matrix are known to represent latent map patterns. Our proposal is to use these patterns to obtain synthetic variables for use as instruments in IV estimation. By their very nature, instruments based on synthetic variables are exogenous. Furthermore, they can provide relatively high levels of correlation with the endogenous variable. In this paper we consider two situations of interest: First, the case where there are no clear candidates for instrumentation and the instruments are comprised of purely synthetic variables; second, the case when there are substantive but weak instruments that can be enhanced by the addition of synthetic variables. The approach proposed is illustrated with an empirical example. }
}

@article{abadie2021using,
  author    = {Abadie, Alberto},
  title     = {Using Synthetic Controls: Feasibility, Data Requirements, and Methodological Aspects},
  journal   = {Journal of Economic Literature},
  year      = {2021},
  volume    = {59},
  number    = {2},
  pages     = {391--425}
}

@article{parkg,
author = {Park, Sungho and Gupta, Sachin},
title = {Handling Endogenous Regressors by Joint Estimation Using Copulas},
journal = {Marketing Science},
volume = {31},
number = {4},
pages = {567-586},
year = {2012},
doi = {10.1287/mksc.1120.0718},

URL = { 
    
        https://doi.org/10.1287/mksc.1120.0718
    
    

},
eprint = { 
    
        https://doi.org/10.1287/mksc.1120.0718
    
    

}
,
    abstract = { We propose a new statistical instrument-free method to tackle the endogeneity problem. The proposed method models the joint distribution of the endogenous regressor and the error term in the structural equation of interest (the structural error) using a copula method, and it makes inferences on the model parameters by maximizing the likelihood derived from the joint distribution. Similar to the "exclusion restriction" in instrumental variable methods, extant instrument-free methods require the assumption that the unobserved instruments are exogenous, a requirement that is difficult to meet. The proposed method does not require such an assumption. Other benefits of the proposed method are that it allows the modeling of discrete endogenous regressors and offers a new solution to the slope endogeneity problem. In addition to linear models, the method is applicable to the popular random coefficient logit model with either aggregate-level or individual-level data. We demonstrate the performance of the proposed method via a series of simulation studies and an empirical example. }
}

@article{hashchka,
author = {Haschka, Rouven E. },
title ={Handling Endogenous Regressors Using Copulas: A Generalization to Linear Panel Models with Fixed Effects and Correlated Regressors},
journal = {Journal of Marketing Research},
volume = {59},
number = {4},
pages = {860-881},
year = {2022},
doi = {10.1177/00222437211070820},
URL = { 
    
        https://doi.org/10.1177/00222437211070820
},
eprint = { 
   
        https://doi.org/10.1177/00222437211070820
},
    abstract = { This article proposes a panel data generalization for a recently suggested instrumental variable‐free estimation method that builds on joint estimation. The author shows how the method can be extended to linear panel models by combining fixed-effects transformations with the common generalized least squares transformation to allow for heterogeneous intercepts. To account for between-regressor dependence, the author proposes determining the joint distribution of the error term and all explanatory variables using a Gaussian copula function, with the distinction that some variables are endogenous and others are exogenous. The identification does not require any instrumental variables if the regressor--error relation is nonlinear. With a normally distributed error, nonnormally distributed endogenous regressors are therefore required. Monte Carlo simulations assess the finite sample performance of the proposed estimator and demonstrate its superiority to conventional instrumental variable estimation. A specific advantage of the proposed method is that the estimator is unbiased in dynamic panel models with small time dimensions and serially correlated errors; therefore, it is a useful alternative to generalized-method-of-moments-style instrumentation. The practical applicability of the proposed method is demonstrated via an empirical example. }
}

@article{haschka2024endogeneity,
  author    = {Haschka, Rouven E.},
  title     = {Endogeneity in Stochastic Frontier Models with 'Wrong' Skewness: Copula Approach Without External Instruments},
  journal   = {Statistical Methods and Applications},
  year      = {2024},
  volume    = {33},
  pages     = {807--826},
  doi       = {10.1007/s10260-024-00750-4},
  url       = {https://doi.org/10.1007/s10260-024-00750-4}
}

@article{lewbel97,
 ISSN = {00129682, 14680262},
 URL = {http://www.jstor.org/stable/2171884},
 author = {Arthur Lewbel},
 journal = {Econometrica},
 number = {5},
 pages = {1201--1213},
 publisher = {[Wiley, Econometric Society]},
 title = {Constructing Instruments for Regressions With Measurement Error When no Additional Data are Available, with An Application to Patents and {R\&D}},
 urldate = {2024-11-15},
 volume = {65},
 year = {1997}
}

@article{ebbes2005solving,
  author    = {Ebbes, Peter and Wedel, Michel and B{\"o}ckenholt, Ulf and others},
  title     = {Solving and Testing for Regressor-Error (in)Dependence When no Instrumental Variables are Available: With New Evidence for the Effect of Education on Income},
  journal   = {Quantitative Marketing and Economics},
  year      = {2005},
  volume    = {3},
  pages     = {365--392},
  doi       = {10.1007/s11129-005-1177-6},
  url       = {https://doi.org/10.1007/s11129-005-1177-6}
}

@article{erickson2002two,
  author    = {Erickson, Timothy and Whited, Toni M.},
  title     = {Two-Step {GMM} Estimation of the Errors-in-Variables Model Using High-Order Moments},
  journal   = {Econometric Theory},
  year      = {2002},
  volume    = {18},
  pages     = {776--799}
}

@book{vanderVaart1998,
  author       = {van der Vaart, A. W.},
  title        = {Asymptotic Statistics},
  publisher    = {Cambridge University Press},
  address      = {New York},
  year         = {1998},
  pages        = {7},
  note         = {},
  isbn         = {0-521-49603-9}
}

@book{bartik1991,
  author = {Bartik, T. J.},
  title = {Who Benefits from State and Local Economic Development Policies?},
  publisher = {W. E. Upjohn Institute for Employment Research},
  address = {Kalamazoo, MI},
  year = {1991}
}

@article{blanchard1992,
  author = {Blanchard, O. J. and Katz, L. F.},
  title = {Regional Evolutions},
  journal = {Brookings Papers on Economic Activity},
  volume = {1},
  pages = {1--75},
  year = {1992}
}

@article{goldsmith,
Author = {Goldsmith-Pinkham, Paul and Sorkin, Isaac and Swift, Henry},
Title = {Bartik Instruments: What, When, Why, and How},
Journal = {American Economic Review},
Volume = {110},
Number = {8},
Year = {2020},
Month = {August},
Pages = {2586--2624},
DOI = {10.1257/aer.20181047},
URL = {https://www.aeaweb.org/articles?id=10.1257/aer.20181047}}

@book{ang,
 ISBN = {9780691120348},
 URL = {http://www.jstor.org/stable/j.ctvcm4j72},
 abstract = { The core methods in today's econometric toolkit are linear regression for statistical control, instrumental variables methods for the analysis of natural experiments, and differences-in-differences methods that exploit policy changes. In the modern experimentalist paradigm, these techniques address clear causal questions such as: Do smaller classes increase learning? Should wife batterers be arrested? How much does education raise wages? Mostly Harmless Econometrics shows how the basic tools of applied econometrics allow the data to speak.  In addition to econometric essentials, Mostly Harmless Econometrics covers important new extensions--regression-discontinuity designs and quantile regression--as well as how to get standard errors right. Joshua Angrist and Jörn-Steffen Pischke explain why fancier econometric techniques are typically unnecessary and even dangerous. The applied econometric methods emphasized in this book are easy to use and relevant for many areas of contemporary social science.    An irreverent review of econometric essentials  A focus on tools that applied researchers use most  Chapters on regression-discontinuity designs, quantile regression, and standard errors  Many empirical examples  A clear and concise resource with wide applications   },
 author = {Joshua D. Angrist and Jörn-Steffen Pischke},
 publisher = {Princeton University Press},
 title = {Mostly Harmless Econometrics: An Empiricist's Companion},
 urldate = {2025-05-21},
 year = {2009}
}

@inbook{ang1,
 ISBN = {9780691120348},
 URL = {http://www.jstor.org/stable/j.ctvcm4j72.11},
 abstract = {Two things distinguish the discipline of econometrics from the older sister field of statistics. One is a lack of shyness about causality. Causal inference has always been the name of the game in applied econometrics. Statistician Paul Holland (1986) cautions that there can be "no causation without manipulation," a maxim that would seem to rule out causal inference from nonexperimental data. Less thoughtful observers fall back on the truism that "correlation is not causality." Like most people who work with data for a living, we believe that correlation can sometimes provide pretty good evidence of a causal relation, even when},
 author = {Joshua D. Angrist and Jörn-Steffen Pischke},
 booktitle = {Mostly Harmless Econometrics: An Empiricist's Companion},
 pages = {113--219},
 publisher = {Princeton University Press},
 title = {Instrumental Variables in Action: Sometimes You Get What You Need},
 urldate = {2025-05-21},
 year = {2009}
}

@incollection{newey1994large,
  author    = {Newey, Whitney K. and McFadden, Daniel},
  title     = {Large Sample Estimation and Hypothesis Testing},
  booktitle = {Handbook of Econometrics},
  editor    = {Engle, Robert F. and McFadden, Daniel L.},
  volume    = {4},
  pages     = {2111--2245},
  year      = {1994},
  publisher = {North Holland},
  address   = {Amsterdam}
}

\end{document}